\documentclass[a4paper,english]{article}
\usepackage[utf8]{inputenc}
\usepackage[stmaryrd,textwidth=2.5cm,todo=false,tikz=true]{generic}
\usepackage[margin=3.25cm]{geometry}
\usepackage{logic9}
\bibliographystyle{alpha}

\definecolor{nicecyan}{HTML}{006165}
\definecolor{nicered}{HTML}{DB3A34}

\usetikzlibrary{patterns,shadows,shadings,shapes,arrows,calc,backgrounds}
\usetikzlibrary{decorations.markings,decorations.pathmorphing,decorations.pathreplacing}
\tikzstyle{dot} = [draw,shape=circle,fill, minimum size=1mm, inner sep=0pt, outer sep=0pt]
\tikzstyle{ball} = [draw,shape=circle, minimum size=1cm, inner sep=0pt, outer sep=0pt]
\tikzstyle{arrow} = [->, >=stealth, line width=0.5pt, shorten >=1pt, shorten <=1pt, 
                     preaction={-, draw, shorten >=1pt, shorten <=1pt, line width=1.75pt, white}]
\tikzstyle{link} = [line width=0.5pt, shorten >=1pt, shorten <=1pt, 
                    preaction={-, draw, shorten >=1pt, shorten <=1pt, line width=1.75pt, white}]

\NewDocumentCommand{\T}{O{T}}{{#1}}
\NewDocumentCommand{\Tfull}{O{T} D<>{Q} D<>{U} D<>{I} D<>{E} D<>{F} D<>{x_\out}}
                   {#1 = (\Sigma,\Gamma,X,#2,#3,#4,#5,#6,#7)}
\NewDocumentCommand{\Tfulln}{O{T} D<>{Q} D<>{U} D<>{I} D<>{E} D<>{F} D<>{x_1}}
                   {#1 = (\Sigma,\Gamma,X,#2,#3,#4,#5,#6,#7)}

\newcommand{\dom}[1]{{\text{Dom}(#1)}}
\newcommand{\rng}[1]{{\text{Rng}(#1)}}

\newcommand{\out}{{\text{out}}}
\newcommand{\val}[1]{\text{val}_{#1}}
\newcommand{\pos}{\mathbin{\bullet}}
\newcommand{\Val}[1]{\text{Val}_{#1}}
\newcommand{\gap}[1]{\langle #1 \rangle}
\newcommand{\dlen}[1]{\lVert#1\rVert}
\renewcommand{\mid}{\::\:}
\newcommand{\dual}[1]{{#1}^\star}
\newcommand{\closure}[2][\alpha]{{#2}^{\uparrow #1}}
\newcommand{\App}{A}
\newcommand{\Appbis}{B}
\newcommand{\eff}[1]{\text{Effect}(#1)}
\NewDocumentCommand{\pump}{D<>{\bar\g} O{\bar n} m}{\text{pump}_{#1}^{#2}(#3)}
\newcommand{\yields}[1][]{\mathrel{\unrhd_{#1}}}

\newcommand{\normflows}{2^m}
\newcommand{\edgeambiguity}{k\cdot2^m}
\newcommand{\bound}[1][\normflows]{m\,c\,|Q|\,2^{3\cdot#1}}
\newcommand{\boundnom}[1][\normflows]{c\,|Q|\,2^{3\cdot#1}}
\newcommand{\boundnomc}[1][\normflows]{|Q|\,2^{3\cdot#1}}
\newcommand{\boundnomcQ}[1][\normflows]{2^{3\cdot#1}}

\title{\bf Equivalence of finite-valued streaming string transducers is decidable}

\author{Anca Muscholl\\ LaBRI, University of Bordeaux \and
        Gabriele Puppis\\ CNRS, LaBRI}
\date{}

\begin{document}

\maketitle

\begin{abstract}
In this paper we provide a positive answer to a question left open by 
Alur and and Deshmukh in 2011 by showing that equivalence of finite-valued
copyless streaming string transducers is decidable. \end{abstract}

\section{Introduction}

Finite transducers are simple devices that allow to reason about data
transformations in an effective, and even efficient way. In their most
basic form they transform strings using finite
control. Unlike automata, their power heavily depends on various
parameters, like non-determinism, the capability of scanning the input
several times, or the kind of storage they may use. The oldest
transducer model, known as generalized sequential machine, 
extends finite automata by outputs. Inspired by an
approach that applies to arbitrary relational structures~\cite{CE12}, 
logic-based transformations (also called transductions) 
were considered by Engelfriet and
Hoogeboom~\cite{eh01}. They showed that two-way
transducers and monadic-second order (MSO) definable transductions are
equivalent in the deterministic case (and even if the transduction is
single-valued, which is more general than determinism). This equivalence supports
thus the notion of ``regular''  functions, in the spirit of
classical results on regular word languages from automata theory 
and logics due to B\"uchi, Elgot,
Trakhtenbrot, Rabin, and others. 
A one-way transducer model that uses write-only registers as additional storage
was proposed a few years ago by Alur and Cern{\'y}~\cite{AlurCerny10}, 
and called streaming string transducer (SST).
SST were shown equivalent to two-way transducers and MSO definable 
transductions in the deterministic setting, and again, even in the 
single-valued case.

In the relational case the picture is less satisfactory, 
as expressive equivalence is only preserved for SST and non-deterministic
MSO transductions~\cite{ad11}, which extend the original MSO
transductions by existentially quantified monadic parameters. 
On the other hand, two-way transducers and 
SST are incomparable in the relational case. 
Between functions and relations there is however one
class of transductions that exhibits a better behavior, and this is
the class of finite-valued transductions. Being finite-valued means 
that there exists some constant $k$ such that every input belonging 
to the domain has at most $k$ outputs.

Finite-valued transductions were intensively studied in the setting of
one-way and two-way transducers. For one-way transducers,
$k$-valuedness can be checked in \PTIME~\cite{GI83}. In addition,
every $k$-valued one-way transducer can be effectively decomposed into
a union of $k$ unambiguous one-way transducers of exponential
size~\cite{web96,SS10}. For both two-way transducers and SST, checking
$k$-valuedness is in \PSPACE{}.

Besides expressiveness, another fundamental question concerning transducers 
is the equivalence problem, that is, the problem of deciding whether two transducers 
define the same relation (or the same partial function if we consider the single-valued case). 
The equivalence problem turns out to be \PSPACE-complete for 
deterministic two-way transducers~\cite{gur82}, single-valued two-way transducers,
as well as for single-valued SST~\cite{ad11}. 
For deterministic SST, equivalence is in \PSPACE~\cite{AC11popl}, but 
it is open whether this complexity upper bound is optimal. 
For arbitrary SST, and in fact even for non-deterministic 
one-way transducers over a unary output alphabet, 
equivalence is undecidable~\cite{FischerR68,iba78siam}. 
The equivalence problem for $k$-valued one-way transducers was shown to 
be decidable by Culik and Karhumäki using an elegant argument based on
Ehrenfeucht's conjecture~\cite{ck86}, and the authors noted that the
same proof goes through for two-way transducers as well. The
decidability status for the equivalence problem for $k$-valued SST was
first stated as an open problem in~\cite{ad11}. 
Another open problem is whether 
SST and two-way transducers are equivalent in the finite-valued case,
like in the single-valued case.
It is worth noting, however, that in the full
relational case SST and two-way transducers are incomparable.
Concerning this last open question, a partial positive answer was given
in \cite{gmps17}, by decomposing any finite-valued SST with only one 
register into a finite union of unambiguous SST. This decomposition 
result also entails the decidability of the equivalence problem
for the considered class.

The main result of this paper is a positive answer to the first 
question left open in~\cite{ad11}:

\begin{restatable}{theorem}{ThmMain}\label{thm:main}
The equivalence problem for finite-valued SST is decidable.
\end{restatable}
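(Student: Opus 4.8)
The plan is to reduce Theorem~\ref{thm:main} to two ingredients: a decomposition of finite-valued copyless SST into single-valued ones, and the known decidability of equivalence for finite-valued two-way transducers. The substantial part is the first ingredient, which I would formulate as a \emph{decomposition theorem}: every $k$-valued copyless SST $T$ can be effectively transformed into finitely many single-valued copyless SST $T_1,\dots,T_N$ such that the union of the partial functions realised by the $T_i$ equals the relation realised by $T$. This generalises the one-register decomposition of~\cite{gmps17} and the classical one-way decompositions of Weber and of Sakarovitch--de~Souza~\cite{web96,SS10} to copyless SST with arbitrarily many registers.

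Granting the decomposition theorem, the remainder is short. A single-valued copyless SST is effectively equivalent to a single-valued two-way transducer (the single-valued case of the SST/two-way correspondence recalled in the introduction), and a finite union of single-valued two-way transducers is again a two-way transducer, obtained by guessing at the start which component to simulate; being a union of at most $N$ partial functions, it is finite-valued. Thus, after decomposing both input SST, one obtains two finite-valued two-way transducers realising the same relations as the originals. By the theorem of Culik and Karhumäki~\cite{ck86} --- whose argument via Ehrenfeucht's conjecture applies to two-way transducers, as the authors observed --- equivalence of finite-valued two-way transducers is decidable. Together with the obvious decidability of equality of the two domains, which are regular languages, this yields Theorem~\ref{thm:main}.

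For the decomposition theorem the idea is to disambiguate the runs of $T$ so that, on every input, a single run is retained per output value, in a way checkable by a finitely-controlled monitor running alongside. The crux is a quantitative \emph{pumping lemma} bounding the delay between runs of $T$ that produce the same output: if a run $\rho$ drifts arbitrarily far --- in the contents of its registers --- from each member of a fixed tuple of at most $k$ reference runs, then iterating a well-chosen factor of the input together with the loops it induces on the registers, analysed through the variable flows of $T$, produces an input on which $T$ has more than $k$ outputs, contradicting $k$-valuedness. Making this quantitative gives an explicit bound, doubly exponential in the number of registers, on the delay one must track. Each $T_i$ is then realised as the product of $T$ with a bounded-delay monitor that follows the retained run against a fixed tuple of reference runs and accepts only when this guarantees that the retained run is output-unique; finitely many choices of reference tuples suffice, so finitely many $T_i$ recover the relation of $T$.

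I expect the main obstacle to be the pumping lemma itself, and in particular the right notion of ``delay'' between register contents of distinct runs of a multi-register copyless SST. For one-way transducers the output grows monotonically and the delay is a single word; already the single-register case of~\cite{gmps17} is delicate; but in general the output of a run is assembled by an unbounded composition of copyless substitutions, so ``being close to a reference run'' must be expressed at the level of the combinatorics of these substitutions --- which factors are shared, in which registers, inside which surrounding context --- and a Ramsey-type argument is needed to locate the repeated configuration that drives the pump. Getting this machinery to work, with uniform bounds, is where the difficulty lies; once it is in place, the passage through two-way transducers and the appeal to Culik--Karhumäki is routine.
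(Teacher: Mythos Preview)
Your route is genuinely different from the paper's, and the step you flag as ``the main obstacle'' is in fact a gap: the decomposition of a $k$-valued copyless SST into finitely many single-valued ones is \emph{not} established here. The paper's conclusion states exactly this as a conjecture (``We conjecture that finite-valued SST can be effectively decomposed into finite unions of unambiguous SST''), and only the one-register case of~\cite{gmps17} is known. Your pumping sketch is a reasonable plan, but the difficulty you yourself name --- finding the right notion of delay between register contents of multi-register copyless runs and a Ramsey-style argument to locate a pump --- is precisely what is open; you have not supplied it. So as it stands, your proof of Theorem~\ref{thm:main} rests on a lemma that neither you nor the paper has proved.

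What the paper actually does is avoid the decomposition entirely and run the Culik--Karhum\"aki test-set argument \emph{directly on SST}. The obstruction to doing this naively is that, unlike one-way or two-way transducers, an SST can have unbounded edge ambiguity (many transitions with the same source, target, and input letter but different updates), so one cannot abstract transitions into a finite schema with finitely many unknowns. The paper's technical core (Sections~\ref{sec:normalizations} and~\ref{sec:shortcut}) shows that after normalising flows and annotating states with length/period invariants called $\alpha$-approximants, equivalent transitions can be identified and edge ambiguity bounded by $\edgeambiguity$; crucially, this bound is preserved under left quotients $\T\mapsto \T_u$, which is exactly what is needed to make the effective part of the test-set argument (Lemma~\ref{lem:effective}) go through. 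No decomposition into single-valued pieces, and no passage through two-way transducers, is used.
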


We show the above result with a proof idea due to Culik and
Karhumäki~\cite{ck86}, based on the Ehrenfeucht
conjecture. Our proof is much more involved, because SST
produce their outputs piece-wise, in contrast to one-way and
two-way transducers, that produce output linearly while reading the input. We
manage to overcome this obstacle using some (mild) word combinatorics
and word equations, by
introducing a suitable normalization procedure for SST. We believe that
our technique will also allow to solve the second problem left open in~\cite{ad11}, 
which is the expressive equivalence between finite-valued SST and two-way
transducers.

\smallskip
\paragraph*{Related work.} 
The equivalence problem for transducers has recently 
raised  interest
for more complex types of transducers in the single-valued case:
Filiot and Reynier showed that equivalence of copyful, deterministic
SST is decidable by showing them equivalent to HDT0L systems and
applying~\cite{ck86}, which contains the above-mentioned result as a
special case. Subsequently, Benedikt et
al.~showed that equivalence of copyful, deterministic
SST has Ackerman complexity, with a proof based on polynomial automata 
and ultimately on Hilbert's basis
theorem~\cite{BenediktDSW17}. Interestingly, the use of Hilbert's
basis theorem goes back to the proof of
Ehrenfeucht's conjecture~\cite{AL85,Guba86}.
A similar approach was used by Boiret et al.~in~\cite{BoiretPS18fsttcs}
to show that bottom-up register automata over unordered forests have a
decidable equivalence problem, see also the nice survey~\cite{boj19siglog}.

\smallskip
\paragraph*{Overview.} Section 2 introduces the transducer model, then Section 3 sets
up the technical machinery that allows to normalize finite-valued
SST. Section 4 shows the major normalization result, which holds for
left quotients of SST. Finally Section 5 recalls the Ehrenfeucht-based
proof for equivalence and the application to finite-valued SST.
A full version of the paper is available at \url{https://arxiv.org/abs/1902.06973}.

\section{Streaming string transducers}\label{sec:model}

A \emph{streaming string transducer} (\emph{SST}) is a tuple 
$\Tfull$, 
where
\begin{itemize}
\item $\Sigma$ and $\Gamma$ are finite input and output alphabets,
\item $X$ is a finite set of registers (usually denoted $x,x',x_1,x_2$, etc.),
\item $Q$ is a finite set of states,
\item $U$ is a finite set of register updates, 
      that is, functions from $X$ to $(X\uplus\Gamma)^*$,
\item $I,F\subseteq Q$ are subsets of states, defining the initial and final states,
\item $E\subseteq Q\times\Sigma\times U\times Q$ is a transition relation,
      describing, for each state and input symbol, the possible register 
      updates and target states,
\item $x_\out\in X$ is a register for the output.
\end{itemize}
Note that, compared to the original definition from \cite{AlurCerny10},
here we forbid for simplicity the use of final production rules,
that perform ad additional register update after the end of the input. 
This simplification is immaterial with respect to the decidability of the equivalence problem.
For example, it can be enforced, without loss of generality, by assuming that all 
well-formed inputs are terminated by a special marker, say $\dashv$, on which the 
transducer can apply a specific transition. We assume here that 
\emph{all inputs of a transducer are non-empty and of the form $u\dashv$, 
with $\dashv$ not occurring in $u$}.

Below, we recall briefly some key notions concerned with the computations of SST.

\paragraph*{Copyless restriction and capacity.}
An SST as above is \emph{copyless} if for all register updates $f\in U$, 
every register $x\in X$ appears at most once in the word $f(x_1)\dots f(x_m)$, 
where $X=\{x_1,\ldots,x_m\}$.
For a copyless SST, every output has length at most linear in 
the length of the input. More precisely, every output associated with
an input $u$ has length at most $c|u|$, where 
$c=\max_{f\in U}\sum_{x\in X}|f(x)|_\Gamma$ is
the maximum number of letters that the SST can add to its registers
along a single transition (this number $c$ is called \emph{capacity} of the SST).

\emph{Hereafter, we assume that all SST are copyless.}

\paragraph*{Register updates and flows.}
Every register update, and in general every function $f:X\rightarrow (X\uplus\Gamma)^*$ 
is naturally extended to a morphism on $(X\uplus\Gamma)^*$, by
defining it as identity over $\G$. 
When reasoning with register updates, it is sometimes possible to abstract away
the specific words over $\Gamma$, and only consider how the contents of the registers
flows into other registers. Formally, the \emph{flow} of an update 
$f:X\rightarrow(X\uplus\Gamma)^*$ is the bipartite graph that consists of 
two ordered sequences of nodes, one on the left and one on the right, 
with each node in a sequence corresponding to a specific register, 
and arrows that go from the node corresponding to register $x$ to
a right node corresponding to register $x$ whenever $x$ occurs in $f(x)$.
For example, the flow of the update $f$ defined by 
$f(x_1) = a \, x_1 \, aa \, x_3$, $f(x_2) = b\, a$, and $f(x_3) = x_2 \, b$
is the second bipartite graph in the figure on page~\pageref{example:flow}.

Note that there are finitely many flows on a fixed number of registers. 
Moreover, flows can be equipped with a natural composition operation:
given two flows $F_1$ and $F_2$, $F_1\cdot F_2$ is the bipartite graph obtained by glueing 
the right nodes of $F_1$ with the left nodes of $F_2$, and by shortcutting pairs of 
consecutive arrows.
We call \emph{flow monoid} of an SST $\T$ the monoid of flows generated 
by the updates of $\T$, with the composition operation as associative product.

\paragraph*{Transitions, runs, and loops.}
A transition $(q,a,f,q')$ of an SST $\T$ is conveniently denoted by the arrow
$q\trans{a / f}[\T] q'$, and the subscript $\T$ is often omitted when clear from 
the context.
A \emph{run} on $w=a_1\dots a_n$ is a sequence of transitions of the form 
\[
  q_0 \trans{a_1 / f_1}<\qquad> q_1 \trans{a_2 / f_2}<\qquad> \ldots \trans{a_n / f_n}<\qquad> q_n.
\]
Sometimes, a run as above is equally denoted by $q_0 \trans{w / f} q_n$,
so as to highlight the underlying input $w$ and the induced 
register update $f=f_1\circ \dots\circ f_n$.
A run is \emph{initial} (resp.~\emph{final}) if it begins
with an initial (resp.~final) state; it is \emph{successful}
if it is both initial and final. 

Given two registers $x,x'$ and a run $\r: \, q \trans{w/f} q'$,
we say that \emph{$x$ flows into $x'$ along $\r$} if $x$ occurs 
in $f(x')$. Note that this property depends only on the 
flow of the induced update $f$. 

An SST is said to be \emph{trimmed} is every state  occurs in at least
one successful run, so every state is reachable from the initial states and
co-reachable from the final states. This property can be easily 
enforced with a polynomial-time preprocessing.

When reasoning with automata, it is common practice
to use pumping arguments. 
Pumping will also be used here, but the notion
of loop needs to be refined as to take into account
the effect of register updates.
Formally, a \emph{loop} of a run $\rho$ of an SST is any 
non-empty factor of $\rho$ of the form $\gamma: \, q\trans{w/f}q$, 
that starts and ends in the same state $q$, and induces 
a \emph{flow-idempotent} update,
namely, an update $f$ such that $f$ and $f\circ f$ have the same flow.

\paragraph*{Outputs and finite-valuedness.}
The \emph{output} of a successful run $\r: \, q_0 \trans{w / f} q_n$ is defined 
as $\out(\r)= (f_0\circ f)(x_\out)$, where $f_0(x)=\emptystr$ for all $x\in X$. 
Sometimes, we write $\out(f)$ in place of $\out(\r)$.
The \emph{relation realized by an SST} is the set of pairs 
$(u,v)\in\Sigma^*\times\Gamma^*$, where $u$ is a well-formed
input (namely, terminating with $\dashv$) and $v$ is the output 
associated with some successful run on $u$.
An SST is \emph{$k$-valued} if 
for every input $u$, there 
are at most $k$ different outputs associated with $u$. 
It is \emph{single-valued} (resp.~\emph{finite-valued}) 
if it is $k$-valued for $k=1$ (resp.~for some $k\in\bbN$). The domain
an SST $\T$, denoted $\dom{\T}$, is the set of input words that have
some successful run in $\T$. Two SST $\T_1,\T_2$ are
\emph{equivalent}, denoted as $\T_1 \equiv \T
_2$, if they realize
the same relation over $\S^* \times \G^*$.

\paragraph*{Register valuations.}
A \emph{register valuation} is a function from $X$ to $\Gamma^*$. 
Given a successful run 
\[
  \rho: ~
  q_0 \trans{a_1 / f_1} q_1 \trans{a_2 / f_2}
  \ldots \trans{a_n / f_n} q_n
\]
and a position $i\in\{0,\dots,n\}$ in it, 
\emph{the register valuation at position $i$ in $\rho$} 
is the function $\val{\rho,i}$ that is defined inductively on $i$
as follows: $\val{\rho,0}(x)=\emptystr$, for all $x\in X$, 
and $\val{\rho,i+1} = \val{\rho,i}\circ f_i$.
Note that $\val{\rho,n}(x_\out)$ coincides with the final output 
$\out(f_1\circ\dots\circ f_n)$ produced by $\rho$.

Later we will generalize the notion of valuation to 
additional variables, called gaps.

\section{Normalizations}\label{sec:normalizations}

A major stumbling block in deciding equivalence of SST, 
as well as other crucial problems, lies in the fact that 
the same output can be produced by very different runs.
This phenomenon already appears with much simpler transducers,
e.g.~with one-way transducers, where runs may produce 
the same output, but at different speeds. 
However, the phenomenon is more subtle for SST, as the output is 
produced piece-wise, and not sequentially: runs with same output may 
appear to be different in many ways, e.g.~in terms of the flows 
of the register updates, or in terms of shifts of portions of the
output. 
The goal of this section is to provide suitable normalization 
steps that remove, one at a time, the above mentioned degrees 
of freedom in producing the same output. 

Another issue that we will be concerned with is the compatibility
of the normalization steps with constructions on transducers that
shortcut arbitrary long runs into a single transition. 
Essentially, we aim at having an effective notion of equivalence 
w.r.t.~final outputs that works not only for transitions but 
also for runs.

\paragraph*{Normalization of flows.}
In this section, 
\emph{$m$ will always denote the number of registers
of an SST and $X=\{x_1,\dots,x_m\}$ the set of registers.}
It is convenient to equip $X$ with a total order, say $x_1<\dots<x_m$.
Accordingly, we let $\chi=x_1\dots x_m$ be the juxtaposition of all
register names, and $f(\chi) = f(x_1)\dots f(x_m)$ for every register update $f$. 

We say that a register update $f$ is \emph{non-erasing} if for every register $x$,
$f(\chi)$ contains at least an occurrence of $x$ (in fact, exactly one, 
since $\T$ is copyless). This can be rephrased as a property of the flow of $f$,
where every node on the left must have an outgoing arrow.
In a similar way, we say that $f$ is \emph{non-permuting} if 
registers appear in $f(\chi)$ with their natural order 
and without jumps,
that is, $f(\chi) \in \Gamma^* x_1 \Gamma^* \, \dots \, \Gamma^* x_k \Gamma^*$, 
for some $k \le m$. As before, this can be rephrased by saying that the 
arrows in the flow of $f$ must not be crossing, 
and the target nodes to the right must form a prefix of $\chi$.
Below are some examples of updates with their flows: 
the first update $f$ is erasing, the second update $g$ is non-erasing 
but permuting, and the third update $h$ is non-erasing and non-permuting.
\label{example:flow}
\[
\begin{tikzpicture}[scale=1] 
\clip (-1,-2.1) |- (9.75,0.1) |- cycle;
\begin{scope}[xscale=0.8,yscale=0.5]
\draw (0,0) node [dot] (x) {};
\draw (0,-1) node [dot] (y) {};
\draw (0,-2) node [dot] (z) {};
\draw (1,0) node [dot] (x') {};
\draw (1,-1) node [dot] (y') {};
\draw (1,-2) node [dot] (z') {};
\draw [arrow] (x) to (x');
\draw [arrow] (z) to (x');
\draw (0.5,-3.5) node {$\begin{smallmatrix} 
                          f: ~   x_1 & \mapsto & \_\, x_1 \,\_\, x_3 \,\_ \\
                          \hfill x_2 & \mapsto & \_ \hfill \\
                          \hfill x_3 & \mapsto & \_ \hfill
                        \end{smallmatrix}$};
\end{scope}
\begin{scope}[xshift=4cm,xscale=0.8,yscale=0.5]
\draw (0,0) node [dot] (x) {};
\draw (0,-1) node [dot] (y) {};
\draw (0,-2) node [dot] (z) {};
\draw (1,0) node [dot] (x') {};
\draw (1,-1) node [dot] (y') {};
\draw (1,-2) node [dot] (z') {};
\draw [arrow] (x) to (x');
\draw [arrow] (y) to (z');
\draw [arrow] (z) to (x');
\draw (0.5,-3.5) node {$\begin{smallmatrix} 
                          g: ~   x_1 & \mapsto & \_\, x_1 \,\_\, x_3 \,\_ \\
                          \hfill x_2 & \mapsto & \_ \hfill \\
                          \hfill x_3 & \mapsto & \_\, x_2 \,\_ \hfill
                        \end{smallmatrix}$};
\end{scope}
\begin{scope}[xshift=8cm,xscale=0.8,yscale=0.5]
\draw (0,0) node [dot] (x) {};
\draw (0,-1) node [dot] (y) {};
\draw (0,-2) node [dot] (z) {};
\draw (1,0) node [dot] (x') {};
\draw (1,-1) node [dot] (y') {};
\draw (1,-2) node [dot] (z') {};
\draw [arrow] (x) to (x');
\draw [arrow] (y) to (x');
\draw [arrow] (z) to (y');
\draw (0.5,-3.5) node {$\begin{smallmatrix} 
                          h: ~   x_1 & \mapsto & \_\, x_1 \,\_\, x_2 \,\_ \\
                          \hfill x_2 & \mapsto & \_\, x_3 \,\_ \hfill \\
                          \hfill x_3 & \mapsto & \_ \hfill
                        \end{smallmatrix}$};
\end{scope}
\end{tikzpicture}
\]
We say that $\T$ is \emph{flow-normalized} if all its register 
updates are non-erasing and non-permuting.
Note that a flow-normalized SST with $m$ registers can have at 
most $\normflows$ different flows.

\begin{restatable}{proposition}{PropFlowNormalization}\label{prop:flow-normalization}
One can transform any SST into an equivalent flow-normalized one.
\end{restatable}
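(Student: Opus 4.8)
The goal is to transform an arbitrary (copyless) SST $\T$ into an equivalent one whose updates are all non-erasing and non-permuting. I would handle the two properties separately, dealing with erasingness first and then with permutations, while making sure the second step preserves the first.

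Eliminating erasing updates: an update $f$ is erasing when some register $x$ does not occur in $f(\chi)$, i.e.\ the content of $x$ after this step will never again be read into any register and in particular never flows into $x_\out$. The key observation is that whether a register "survives" — i.e.\ eventually flows into $x_\out$ along the remainder of a run — depends only on the flows of the subsequent updates, hence on a finite amount of information. So I would enlarge the state space to track, for each state, the set $S\subseteq X$ of registers that are "live" at that point, meaning they will flow into $x_\out$ in every continuation, or rather: I would guess nondeterministically a set of live registers and verify consistency along the run, accepting only when the guess is coherent with the final condition ($x_\out$ itself must be live at the end, dead registers must stay dead going backward, live registers must have their content land in a live register at the next step). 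Once this bookkeeping is in place, I can rewrite every update $f$ so that on a transition from state $(q,S)$ to $(q',S')$ it only mentions registers of $S$ on the left-hand inputs and only defines registers of $S'$; registers outside $S'$ are simply reset to $\emptystr$. Since dead registers contribute nothing to any final output, this does not change the realized relation, and now every update restricted to the live registers is non-erasing. (One subtlety: a register could be "live" but updated to a word with no register occurrences, e.g.\ $f(x)=abc$; such an $x$ is still fine to keep but its node has no incoming arrow — this is allowed, non-erasing only forbids outgoing-less \emph{left} nodes — however we still must ensure every left node has an outgoing arrow, so a genuinely dead register must be detected and dropped from the live set. The guess/verify mechanism handles exactly this.) I should double-check copylessness is preserved: resetting registers to $\emptystr$ and deleting occurrences only decreases the number of occurrences, so yes.

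Eliminating permutations: now assume $\T$ is non-erasing. An update $f$ is non-permuting if $f(\chi)\in\Gamma^* x_1\Gamma^*\cdots\Gamma^* x_k\Gamma^*$. In general $f(\chi)$ induces, together with the requirement that each left register occurs exactly once (copyless, non-erasing), a permutation $\pi$ on $X$: reading $f(\chi)$ left to right we encounter the registers in some order $x_{\pi(1)},\dots,x_{\pi(m)}$. The idea is to absorb this permutation into the state: I would augment states with a bijection $\sigma:X\to X$ recording the current "renaming" between the physical register slots and their logical names, starting from the identity at initial states. On a transition $q\trans{a/f}q'$ inducing the order-permutation $\pi$, the new transition goes from $(q,\sigma)$ to $(q',\sigma')$ where $\sigma'=\sigma\circ\pi^{-1}$ (or the appropriate composition), and the rewritten update $f'$ is $f$ composed with the renaming of registers so that the physical slot holding the logically-$i$-th register is actually slot $i$ — concretely $f'$ is obtained from $f$ by relabelling left and right register names via $\sigma,\sigma'$, which makes its flow order-preserving by construction. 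Initial states keep $\sigma=\mathrm{id}$; a final state $(q,\sigma)$ requires that $\sigma$ maps $x_\out$ to the slot actually carrying the output — equivalently, before this normalization we precompose reading nothing, or we just designate $x_\out$ as logical name and adjust acceptance accordingly. The output $x_\out$ must then correspond to the slot $\sigma^{-1}(x_\out)$, so we either keep a fixed physical output slot and demand $\sigma(x_\out)=x_\out$ at final states, or allow a per-final-state output register; the former can be arranged by a final harmless transition on $\dashv$, but since we already assumed inputs end with $\dashv$, I would instead simply allow the normalization to designate the output register state-wise and then observe this is still within the SST model after one more trivial renaming pass, or absorb it into the $\dashv$-transition. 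In any case the realized relation is unchanged because $\sigma$ is a pure bookkeeping bijection: the multiset of register contents, read through $\sigma$, is invariant.

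Where the difficulty lies: the conceptual content is light — both steps are "push a finite invariant into the state" constructions — but the bookkeeping is fiddly, and the real care is in (i) showing the two steps compose, i.e.\ that the permutation-elimination does not reintroduce erasing updates (it does not, since relabelling registers is a flow-isomorphism and preserves "every left node has an outgoing arrow"), (ii) handling the interaction with the $\dashv$-terminated-input convention and the fixed output register $x_\out$ cleanly, and (iii) verifying that copylessness survives both transformations (step 1 only deletes occurrences; step 2 only relabels, bijectively, so occurrence counts are unchanged). I would also note in passing the blow-up: step 1 multiplies states by $2^m$, step 2 by $m!$, so the resulting flow-normalized SST has $|Q|\cdot 2^m\cdot m!$ states and the same capacity $c$ — which is the relevant fact for the quantitative bounds used later, and matches the observation that a flow-normalized SST has at most $\normflows$ flows.
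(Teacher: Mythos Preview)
Your two-step plan has a gap. Step 1 (guess a live set $S$, reset target registers outside $S'$ to $\emptystr$) does not yield a non-erasing SST: any source register outside $S$ appears nowhere in the rewritten update and is therefore erased --- you note this yourself (``every update \emph{restricted to the live registers} is non-erasing''). But step 2 then opens with ``now assume $\T$ is non-erasing'' and only conjugates by a full bijection $\sigma:X\to X$, which is a flow isomorphism and hence preserves erasingness. So the composite is still erasing on the dead registers and is not flow-normalized.

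The paper avoids this by merging the two steps: it annotates each state with a single \emph{partial} bijection $\pi$ from the live registers onto an initial segment $\{x_1,\ldots,x_k\}$, computed backward from the final renaming $x_\out\mapsto x_1$; this packages liveness (the domain of $\pi$) and the canonical reordering into one object, and the inactive slots $x_{k+1},\ldots,x_m$ are handled by an explicit shift formula. Your decomposition is easily repaired along similar lines --- for instance, add in step 1 a fresh garbage register that absorbs all dead source registers (its content stays $\emptystr$ throughout, so outputs are unchanged), making the intermediate SST genuinely non-erasing, after which step 2 applies as written. Either way, the case to double-check is when the live set \emph{grows} across a transition (some live target register has an update in $\Gamma^*$), since then there are more dead source registers than inactive target slots and they must be routed somewhere.
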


\begin{proof}
Let $\Tfull$ be an SST.
We need to construct an SST $\T'$ that simulates every run of $\T$ by guessing 
which registers in the current valuation contribute to form the final output,
and in which precise order, by appropriately modifying the register updates 
so as to enforce non-erasing and non-permuting behaviours. 

Formally, given any suffix $\rho: q\trans{u/f}q'$ of a successful run of $\T$,
we define a partial bijection $\pi_\rho:X\rightharpoonup X$ as follows:
for every register $x'$, if $x'$ is the $i$-th register occurring in 
$f(x_\out)$, then $\pi_\rho(x')=x_i$, otherwise,
if $x'$ does not occur in $f(x_\out)$, 
then $\pi_\rho$ is undefined on $x'$.
Any permutation of the form $\pi_\rho$ can be thought of as a renaming 
of registers that contribute to the final output. By construction, the 
range of such a renaming is always an initial interval of the registers, 
i.e.~$\rng{\pi_\rho}=\{x_1,\ldots,x_k\}$ for some $k\le m$.
For the sake of brevity, hereafter we call \emph{renaming} any function
of the above form, that is, any bijection from a subset $\{x_{i_1},\dots,x_{i_k}\}$ 
of $X$ to $\{x_1,\dots,x_k\}$, for $0\le k\le m$. We also let 
$\dlen{\pi} = |\dom{\pi}|$ ($= |\rng{\pi}|$)
for any renaming $\pi$.

The normalized SST is defined as $\Tfulln[T']<Q'><U'><I'><E'><F'>$, where:
\begin{itemize}
\item $Q'=Q\times R$, where $R$ is the set of all renamings,
\item $U'$ contains all {\sl non-erasing and non-permuting} updates of the form 
      $f[\pi\rightarrow \pi']$, for $f\in U$, $\pi,\pi'\in R$, where 
      $\pi$ is defined precisely on those registers that occur in
      $f\circ (\pi')^{-1}(\{1,\ldots,\dlen{\pi'}\})$,
      and $f[\pi\rightarrow \pi']$ is defined by 
      \[
        f[\pi\rightarrow \pi'] (x_i) = 
        \begin{cases} 
          \pi\circ f\circ (\pi')^{-1} (x_i) & \text{if $i \le \dlen{\pi'}$} \\
          x_{i+\dlen{\pi}-\dlen{\pi'}}      & \text{if $i > \dlen{\pi'}$ 
                                                    and $i+\dlen{\pi}-\dlen{\pi'} \le m$}, \\
          \emptystr                         & \text{if $i>\dlen{\pi'}$ 
                                                    and $i+\dlen{\pi}-\dlen{\pi'} > m$},
        \end{cases}
      \]
\item $I'=I\times R$, 
\item $E'$ contains all transition rules of the form $(q,\pi)\trans{a/ f'} (q',\pi')$,
      with $q\trans{a/ f}q'$ transition rule in $E$ and $f'=f[\pi\rightarrow \pi']$,
\item $F'=F\times\{\pi_\out\}$, where $\pi_\out$ is the renaming defined only
      on $x_\out$ and mapping it to $x_1$.
\end{itemize}
It is routine to show that $\T'$ is flow-normalized and equivalent to $\T$.
\end{proof}

\medskip
Recall that a register valuation is a function from $X$ to $\Gamma^*$. 
With a flow-normalized SST, one can also define a dual notion of valuation, 
representing `gaps' between registers that shrink along the run. 
For this we introduce $m+1$ fresh variables $y_0,y_1,\ldots,y_m$, 
called \emph{gaps}. 
Hereafter, \emph{$Y=\{y_0,y_1,\dots,y_m\}$ will always denote the set of gaps}.
We use the term \emph{valuation} to generically denote a
register/gap valuation, that is, a function from $X\uplus Y$ to $\Gamma^*$.

The idea is that a gap $y_j$ represents a word that is inserted 
between register $x_j$ (if $j>0$) and register $x_{j+1}$ (if $j<n$)
so as to form the final output.
Formally, given a word $w\in \Gamma^* x_1 \Gamma^*  \dots \Gamma^* x_k \Gamma^*$, 
with $k\le m$, and given two registers $x_i,x_j$, with $i<j$,
we denote by $w\gap{x_i,x_j}$ the maximal factor of $w$ strictly between 
the unique occurrence of $x_i$ and the unique occurrence of $x_j$, 
using the following conventions for the degenerate cases:
if $i=0$, then $w\gap{x_i,x_j}$ is a maximal prefix of $w$; 
if $i>0$ but there is no occurrence of $x_i$, then $w\gap{x_i,x_j}=\emptystr$;
finally, if there is an occurrence of $x_i$ but no occurrence of $x_j$ in $w$, 
then $w\gap{x_i,x_j}$ is a maximal suffix.
Given a run 
\[
  \rho: ~
  q_0 \trans{a_1 / f_1} q_1 \trans{a_2 / f_2} \ldots \trans{a_n / f_n} q_n
\]
and a position $i$ in it, 
\emph{the valuation at position $i$ of $\rho$} is the function 
$\val{\rho,i}:X\uplus Y\rightarrow\Gamma^*$ such that 
\begin{itemize}
  \item $\val{\rho,i}$ restricted to $X$
        is the register valuation at position $i$ of $\rho$,
  \item $\val{\rho,i}$ maps every gap $y_j$ to the word 
        $(f_{i+1}\circ\dots\circ f_n)(\chi)\gap{x_j,x_{j+1}}$.
\end{itemize}

By definition, the image of the word
$\zeta = y_0 \, x_1 \, y_1 \, \dots \, x_m \, y_m$ via the valuation $\val{\rho,i}$
is always equal to the final output $\out(\rho)$, for all positions $i$.
In this sense, the sequence of valuations $\val{\rho,0},\val{\rho,1},\ldots,\val{\rho,n}$ 
can be identified with a sequence of factorizations of $\out(\rho)$.
For example, below are the factorizations of the output
before and after a transition with register update $f$ 
such that $f(x_1)=s\,x_1\,u\,x_2\,t$ and $f(x_2)=v$,
for $s,u,t,v\in\Gamma^*$:
\label{example:register-gap-update}
\[
\begin{tikzpicture}[yscale=.8]
\clip (0,-2.4) |- (12,0.6) |- cycle;
\begin{scope}[yscale=0.2]
\draw (0,0) node (A) {};
\draw (2,0) node (B) {};
\draw (4,0) node (C) {};
\draw (6,0) node (D) {};
\draw (8,0) node (E) {};
\draw (12,0) node (F) {};
\draw (A) -- (F);
\draw [draw=none,fill=nicered,opacity=0.5] ([yshift=-1.5cm]B) rectangle ([yshift=1.5cm]C);
\draw [draw=none,fill=nicered,opacity=0.5] ([yshift=-1.5cm]D) rectangle ([yshift=1.5cm]E);
\draw [draw=none,fill=nicecyan,opacity=0.5,rounded corners] 
      ([yshift=-1cm]A.east) rectangle ([yshift=1cm]B.west);
\draw [draw=none,fill=nicecyan,opacity=0.5,rounded corners] 
      ([yshift=-1cm]C.east) rectangle ([yshift=1cm]D.west);
\draw [draw=none,fill=nicecyan,opacity=0.5,rounded corners] 
      ([yshift=-1cm]E.east) rectangle ([yshift=1cm]F.west);
\draw ([yshift=2.2cm]$(A)!0.5!(B)$) node {$y_0$};
\draw ([yshift=2.2cm]$(B)!0.5!(C)$) node {$x_1$};
\draw ([yshift=2.2cm]$(C)!0.5!(D)$) node {$y_1$};
\draw ([yshift=2.2cm]$(D)!0.5!(E)$) node {$x_2$};
\draw ([yshift=2.2cm]$(E)!0.5!(F)$) node {$y_2$};
\end{scope}
\begin{scope}[yshift=-0.8cm,yscale=0.2]
\draw (0,0) node (A) {};
\draw (1,0) node (B) {};
\draw (2,0) node (C) {};
\draw (4,0) node (D) {};
\draw (6,0) node (E) {};
\draw (8,0) node (F) {};
\draw (9,0) node (G) {};
\draw (10,0) node (H) {};
\draw (11,0) node (I) {};
\draw (12,0) node (J) {};
\draw [dotted,gray] (A) -- (J);
\draw (B.center) edge [|-|] node [above] {$s$} (C.west);
\draw (D.east) edge [|-|] node [above] {$u$} (E.west);
\draw (F.east) edge [|-|] node [above] {$t$} (G.center);
\draw (H.center) edge [|-|] node [above] {$v$} (I.center);
\end{scope}
\begin{scope}[yshift=-1.6cm,yscale=0.2]
\draw (0,0) node (A) {};
\draw (1,0) node (B) {};
\draw (9,0) node (C) {};
\draw (10,0) node (D) {};
\draw (11,0) node (E) {};
\draw (12,0) node (F) {};
\draw (A) -- (F);
\draw [draw=none,fill=nicered,opacity=0.5] ([yshift=-1.5cm]B) rectangle ([yshift=1.5cm]C);
\draw [draw=none,fill=nicered,opacity=0.5] ([yshift=-1.5cm]D) rectangle ([yshift=1.5cm]E);
\draw [draw=none,fill=nicecyan,opacity=0.5,rounded corners] 
      ([yshift=-1cm]A.east) rectangle ([yshift=1cm]B.west);
\draw [draw=none,fill=nicecyan,opacity=0.5,rounded corners] 
      ([yshift=-1cm]C.east) rectangle ([yshift=1cm]D.west);
\draw [draw=none,fill=nicecyan,opacity=0.5,rounded corners] 
      ([yshift=-1cm]E.east) rectangle ([yshift=1cm]F.west);
\draw ([yshift=-2.8cm]$(A)!0.5!(B)$) node {$y'_0$};
\draw ([yshift=-2.8cm]$(B)!0.5!(C)$) node {$x'_1$};
\draw ([yshift=-2.8cm]$(C)!0.5!(D)$) node {$y'_1$};
\draw ([yshift=-2.8cm]$(D)!0.5!(E)$) node {$x'_2$};
\draw ([yshift=-2.8cm]$(E)!0.5!(F)$) node {$y'_2$};
\end{scope}
\end{tikzpicture}
\]
This also suggests the principle that gaps, like registers,
are updated along transitions via suitable morphisms, but in a symmetric 
way, that is, from right to left.
For instance, in the above picture, the gaps $y_0,y_1,y_2$ are updated 
by the function $\dual{f}$ such that $\dual{f}(y_0) = y_0\,s$, $\dual{f}(y_1)=u$, and
$\dual{f}(y_2)=t\,y_1\,v\,y_2$. In general, the function $\dual{f}$, called
\emph{gap update}, is uniquely determined by the register update $f$, and
vice versa, $f$ is uniquely determined by the gap update $\dual{f}$. 
Another perhaps interesting phenomenon is that the gap update $\dual{f}$ is also 
non-erasing and non-permuting (the notion of non-permuting gap assignment
is defined w.r.t.~the reverse order $y_m<\dots<y_0$).

\paragraph*{Normalization of states.}
The next normalization step splits the states of an
SST in such a way that it becomes possible to associate 
with each state an over-approximation of the possible 
register/gap valuations witnessed when the state is 
visited along a successful run. 
These over-approximations are very simple languages over 
the output alphabet $\Gamma$, e.g.~singleton languages 
like $\{aba\}$ and periodic languages like $\{ab\}^*\{a\}$
(often denoted $(ab)^*a$ to improve readability).
Basically our over-approximations refer to length and period
constraints.
The \emph{period} of a word $w$ is the least number $0<p\le |w|$ 
such that $w$ is a prefix of $(w[1,p])^\omega$. For example, the
period of $w=abcab$ is $3$. 

For a given parameter $\alpha\in\bbN$ we define the family $\cL_\alpha$ that contains: 
\begin{itemize}
\item the empty language $\emptyset$,
\item the singleton languages $\{u\}$, 
      with $u\in\Gamma^*$ and $|u|\le \alpha$,
\item the periodic languages $u^* v$, 
      with $u\in\Gamma^+$ \emph{primitive}
      (i.e.~$u=w^k$ only if $k=1$), 
      \par\noindent
      $|u|\le\alpha$, and $v\in\Gamma^*$
      strict prefix of $u$,
\item the universal language $\Gamma^*$.
\end{itemize}

\noindent
The languages in $\cL_\alpha$, partially ordered by containment, form a finite 
meet semi-lattice, where the meet is the intersection $\cap$.
We depict here part of the lattice $\cL_\alpha$ for a parameter $\alpha\ge 3$:
\[
\begin{tikzpicture}[yscale=0.9]
\draw (0,0) node (bot) {$\emptyset$};
\draw (-2.5,1) node (ab) {$ab$};
\draw (-0.75,1) node (empty) {$\emptystr$};
\draw (0.75,1) node (a) {$a$};
\draw (2.5,1) node (aba) {$aba$};
\draw (3.5,1) node {$\dots$};
\draw (-1.5,2) node (abab) {$(ab)^*$};
\draw (0,2) node (aa) {$a^*$};
\draw (1.5,2) node (ababa) {$(ab)^*a$};
\draw (3.5,2) node {$\dots$};
\draw (0,3) node (top) {$\Gamma^*$};
\draw [link] (bot) to (empty);
\draw [link] (bot) to (a);
\draw [link] (bot) to (ab);
\draw [link] (bot) to (aba);
\draw [link] (empty) to (abab);
\draw [link] (ab) to (abab);
\draw [link] (empty) to (aa);
\draw [link] (a) to (ababa);
\draw [link] (a) to (aa);
\draw [link] (aba) to (ababa);
\draw [link] (abab) to (top);
\draw [link] (aa) to (top);
\draw [link] (ababa) to (top);
\end{tikzpicture}
\]

The semi-lattice structure allows to derive a best over-approximation in $\cL_\alpha$ 
of any language $L\subseteq\Gamma^*$, that is:
$\closure L = \bigcap \{ L'\in\cL_\alpha \mid L'\supseteq L\}$.
We will mostly use the approximation operator $\closure{}$ on singleton languages.
For example, for $\alpha=3$, we have $\closure{\{aba\}}=\{aba\}$, 
$\closure{\{ababa\}}=(ab)^*a$, and
$\closure{\{abbb\}}=\Gamma^*$. Note also that if $|w| \le \a$ then $\closure{w}=\set{w}$.
A useful property is the compatibility of $\closure{}$ with concatenation,
which immediately extends to compatibility with word morphisms:

\begin{restatable}{lemma}{LemApproximatedConcat}\label{lem:approximated-concat}
$\closure{(L_1\cdot L_2)} = \closure{(\closure{L_1}\cdot\closure{L_2})}$
for every $\alpha\in\bbN$ and $L_1,L_2\subseteq\Gamma^*$.
\end{restatable}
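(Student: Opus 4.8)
The plan is to prove the two inclusions $\closure{(L_1\cdot L_2)} \subseteq \closure{(\closure{L_1}\cdot\closure{L_2})}$ and $\closure{(\closure{L_1}\cdot\closure{L_2})} \subseteq \closure{(L_1\cdot L_2)}$ separately, exploiting the fact that $\closure{}$ is a closure operator on the meet semi-lattice $\cL_\alpha$ (it is monotone, extensive, and idempotent, and $\closure{L}$ is by definition the least member of $\cL_\alpha$ above $L$). The easier inclusion is the second one: since $L_1 \subseteq \closure{L_1}$ and $L_2 \subseteq \closure{L_2}$, we get $L_1\cdot L_2 \subseteq \closure{L_1}\cdot\closure{L_2}$, hence by monotonicity of $\closure{}$ we obtain $\closure{(L_1\cdot L_2)} \subseteq \closure{(\closure{L_1}\cdot\closure{L_2})}$ directly.

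For the reverse inclusion I would argue it suffices, again by monotonicity and idempotency of $\closure{}$, to show $\closure{L_1}\cdot\closure{L_2} \subseteq \closure{(L_1\cdot L_2)}$; applying $\closure{}$ to both sides and using $\closure{\closure{(L_1\cdot L_2)}} = \closure{(L_1\cdot L_2)}$ then finishes the proof. So the crux is: given $\closure{L_1}\cdot\closure{L_2}$, it is contained in every $L'\in\cL_\alpha$ that contains $L_1\cdot L_2$. Fix such an $L'\supseteq L_1\cdot L_2$. The natural move is to show that $L' = \closure{(L_1\cdot L_2)}$ itself works, i.e.\ that $\closure{L_1}\cdot\closure{L_2} \subseteq \closure{(L_1\cdot L_2)}$. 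I would do a case analysis on the shape of $\closure{(L_1\cdot L_2)}$ within the four families defining $\cL_\alpha$ (empty, singleton of length $\le\alpha$, periodic $u^*v$, universal $\Gamma^*$). The universal case is trivial; the empty case forces $L_1$ or $L_2$ empty, whence $\closure{L_1}$ or $\closure{L_2}$ is empty and the product is empty. In the singleton case $\closure{(L_1\cdot L_2)} = \{w\}$ with $|w|\le\alpha$ forces $L_1\cdot L_2 = \{w\}$, hence $L_1$ and $L_2$ are singletons of words shorter than $\alpha$, so $\closure{L_1} = L_1$ and $\closure{L_2} = L_2$ and the claim is immediate. The interesting case is the periodic one, where one must show that if the concatenation of the two languages has a small-period over-approximation, then so do $L_1$ and $L_2$ individually, and that concatenating their periodic hulls stays within the same periodic language.

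I expect the main obstacle to be precisely this periodic case, where genuine word combinatorics enters. Concretely, one uses the Fine--Wilcox flavour of the defect/periodicity argument: if $u^*v \supseteq L_1\cdot L_2$ with $u$ primitive and $|u|\le\alpha$, then every word of $L_1$ is a prefix of some power of $u$ (hence has period dividing $|u|$, in fact of the form $u^*v_1$ for the appropriate strict-prefix remainder $v_1$), and symmetrically every word of $L_2$, once the length contributed by $L_1$ is accounted for, is a prefix of a power of the cyclic rotation of $u$ by $|v_1|$; since $u$ is primitive this rotation is again a primitive word $u'$ of the same length, and $L_2 \subseteq (u')^* v_2$. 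One then checks $\closure{L_1}\subseteq u^*v_1$ and $\closure{L_2}\subseteq (u')^*v_2$ (this is where minimality of $\closure{}$ in $\cL_\alpha$ is invoked), and finally that $u^*v_1 \cdot (u')^*v_2 \subseteq u^*v$, which is a direct periodicity computation since appending a word periodic with period $u'$ after a word ending in the prefix $v_1$ of $u$ continues the $u$-pattern. A small amount of care is needed for the degenerate subcases (empty $L_1$ or $L_2$, length-zero remainders $v_i$), but these are routine. Having handled all four cases, the reverse inclusion follows, and together with the easy forward inclusion this yields the claimed identity.
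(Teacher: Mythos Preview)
Your argument is correct and follows the same overall strategy as the paper: the easy inclusion by monotonicity/extensivity, then a case analysis on the shape of the relevant language in $\cL_\alpha$, with the periodic case $u^*v$ being the only non-trivial one. The paper streamlines the hard direction slightly differently: instead of analysing $\closure{L_1}$ and $\closure{L_2}$ simultaneously, it first reduces to the one-sided implication ``for $L\in\cL_\alpha$ and $w\in\Gamma^*$, $L\supseteq\{w\}\cdot L_2$ implies $L\supseteq\{w\}\cdot\closure{L_2}$'', then takes the conjunction over $w\in L_1$ and appeals to symmetry. This avoids having to argue that all words of $L_1$ leave the same remainder $v_1$ and then track the rotation $u'$ for $L_2$; with a single $w$ fixed, one immediately gets $L_2\subseteq v''u^*v$ for the appropriate suffix $v''$ of $u$, and the final containment $u^*v\supseteq u^*v'\cdot v''u^*v$ is a one-line computation. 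Your two-sided version works just as well, it just carries a bit more bookkeeping. (Minor point: you don't actually need Fine--Wilf here; the periodicity facts used are the elementary ones about prefixes of $u^\omega$.)
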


\begin{proof}
The left-to-right containment follows easily by monotonicity of $\closure{}$.
The converse containment boils down to proving that for every
$L\in\cL_\alpha$ and $w\in\Gamma^*$, 
$L\supseteq \{w\} \cdot L_2$ implies $L\supseteq \{w\}\cdot\closure{L_2}$
(one can then take the conjunction of the latter implication over all $w\in L_1$, 
and prove in this way that $L\supseteq L_1\cdot L_2$ implies
$L\supseteq L_1\cdot\closure{L_2}$, finally, using symmetric arguments,
one derives that $L\supseteq L_1\cdot L_2$ implies
$L\supseteq \closure{L_1}\cdot\closure{L_2}$).

If $L$ is empty, a singleton, or the universal language $\Gamma^*$, or
if $L_2$ is empty, then the considered implication holds trivially. 
So, we consider the case where $L$ is a periodic language 
of the form $u^* v$, with $u$ primitive and $v$ prefix of $u$, 
and $L_2$ is non-empty.
Since $L$ contains at least one word with $w$ as prefix, we know that
$w\in u^* v'$, for some $v'$ prefix of $u$.
Similarly, for every word $w'\in L_2$, $L$ must contain at least one word
with $w'$ as suffix, and hence $L_2\subseteq v'' \,u^* v$ for some $v''$ 
suffix of $u$ such that $v' \, v''=u$. 
This implies $\closure{L_2}\subseteq v''\, u^* v$,
and hence 
\[
  L ~=~ u^* v 
    ~\supseteq~ u^* v' \, v''\, u^*v 
    ~\supseteq~ \{w\}\cdot\closure{L_2}.
\qedhere
\]
\end{proof}

\medskip
Recall that $X,Y$ denote, respectively, the sets of registers and gaps
of a flow-normalized SST.
Given a valuation $\nu: X\uplus Y \rightarrow \Gamma^*$, 
its \emph{$\alpha$-approximant} is the function 
$\closure{\nu}: X\uplus Y\rightarrow\cL_\alpha$ 
that maps any $z\in X\uplus Y$ to the language
$\closure{\{\nu(z)\}}$. The set of $\alpha$-approximants is 
denoted $\cL_\alpha^{ X\uplus Y}$, and consists of all maps from 
$X\uplus Y$ to $\cL_\alpha$.
Further let
\[
  \Val{q} = \{ \val{\rho,i} \mid \text{$\rho$ successful run visiting $q$ at any position $i$} \}
\]
be the set of possible valuations induced by an arbitrary 
successful run when visiting state $q$.

A first desirable property is that all valuations
in $\Val{q}$ have the {\sl same} $\alpha$-approximant, 
which is thus determined by the state $q$. 
Formally, given a flow-normalized SST $\T$ with trimmed state space $Q$,
we say that $\T$ \emph{admits $\alpha$-approximants} 
if every state $q\in Q$ can be effectively annotated with 
an $\alpha$-approximant $\App_q\in\cL_{\alpha}^{X\uplus Y}$ 
in such a way that
\begin{equation}\label{eq:approximants}
  \forall \nu\in\Val{q}\, : \qquad \closure{\nu} = \App_q.
\end{equation}
This condition is best understood as an invariant on lengths and 
periods that can be enforced on valuations of registers and gaps when 
visiting a particular state.
For instance, when the approximant $\App_q$ guarantees a certain period, 
then this period will be the same for all valuations occurring
at $q$, independently of the specific initial run that may lead 
to $q$ (for registers), and of the run that may lead from $q$ 
to an accepting state (for gaps). 

The proposition below shows that it is always possible to 
refine any SST $\T$ so as to admit $\a$-approximants, 
for any parameter $\alpha$. 
The proof for $\a$-approximants that concern only registers 
could be understood as unfolding the SST $\T$, and merging nodes
corresponding to any two inputs $u$ and $v$, with $u$ prefix 
of $v$, whenever the induced $\a$-approximants at $u$ and $v$
are the same for every register $x$. 
In general, the resulting SST can be seen a covering 
of the original SST $\T$,
in the sense formalized by Sakarovitch and de Souza in \cite{SS08}: 
$\T'$ is a \emph{covering} of $\T$ if the
states of $\T'$ can be mapped homomorphically to states 
of $\T$, while preserving transitions and the distinction into
initial and final states, and, moreover, the outgoing transitions of
every state of $\T'$ map one-to-one to outgoing transitions
of a corresponding state of $\T$. This implies that the
successful runs of $\T$ and those of $\T'$ are in one-to-one
correspondence. 

\begin{restatable}{proposition}{PropApproximants}\label{prop:approximants}
Let $\T$ be a flow-normalized SST, and let $\alpha\in\bbN$.
One can construct an equivalent flow-normalized SST $\T'$ 
that admits $\alpha$-approximants
and that is a covering of $\T$.
\end{restatable}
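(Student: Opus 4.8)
The plan is to build $\T'$ by a product construction that annotates each state $q$ of $\T$ with a candidate $\alpha$-approximant $\App\in\cL_\alpha^{X\uplus Y}$, so that $Q'\subseteq Q\times\cL_\alpha^{X\uplus Y}$, and then to prune the states that are not consistent with actual valuations. Since $\cL_\alpha^{X\uplus Y}$ is finite, this keeps $\T'$ finite. The subtlety is that the approximant at $q$ must simultaneously record length/period information about the \emph{register} contents (which depend on the prefix of the run leading to $q$, hence evolve left-to-right) and about the \emph{gaps} (which depend on the suffix, hence evolve right-to-left). Accordingly I would track on transitions how approximants are transformed by the register update $f$ and, dually, by the gap update $\dual f$, using Lemma~\ref{lem:approximated-concat}: the $\alpha$-approximant of $\nu\circ f$ is a function only of the $\alpha$-approximant of $\nu$ and of $f$, because $\closure{\{(\nu\circ f)(x)\}} = \closure{\{\nu(f(x))\}}$ and $f(x)$ is a concatenation of letters and registers, so by the lemma one may replace each $\nu(x_i)$ by its approximant before applying $\closure{}$. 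Write $\App\mapsto f[\App]$ for the induced forward action on register-components and $\App\mapsto \dual f[\App]$ for the backward action on gap-components.

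Concretely: first fix the target approximant at final states, namely the approximant of the trivial gap valuation ($y_0=\dots=y_m=\emptystr$, and for registers whatever is forced, but by flow-normalization only $x_1$ matters for the output, with approximant $\Gamma^*$, or more precisely the approximant determined below). Then propagate backward along transitions to compute, for every state $q$, the gap-components of its approximant as the $\closure{}$-image obtained by applying the $\dual f$-actions over all final runs from $q$; since $\cL_\alpha^{Y}$ is a finite meet-semilattice and $\closure{}$ is monotone, a state may a priori see several gap-approximants depending on the chosen final run, so I would split the state by the gap-approximant. Dually, propagate forward from the initial states with the initial register valuation ($x=\emptystr$ for all $x$), applying the $f$-actions; again split the state by the register-approximant reached. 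Formally this is one product construction: $Q' = \{(q,\App) : \App$ arises as $\closure{\val{\rho,i}}$ for some successful run $\rho$ visiting $q$ at position $i\}$, transitions $(q,\App)\trans{a/f}(q',\App')$ whenever $q\trans{a/f}q'$ in $E$ and $\App'$ is obtained from $\App$ by the forward action of $f$ on the $X$-part and the backward action of $\dual f$ on the $Y$-part — here one must be careful that the $Y$-part of $\App$ is the $\dual f$-image of the $Y$-part of $\App'$, matching the direction in which gaps shrink. Initial states of $\T'$ are $(q,\App)$ with $q\in I$ and the $X$-part of $\App$ the approximant of the all-$\emptystr$ register valuation; final states are $(q,\App)$ with $q\in F$ and the $Y$-part of $\App$ the approximant of the all-$\emptystr$ gap valuation. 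One then trims $\T'$.

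It remains to check three things. First, $\T'$ is a covering of $\T$: the projection $(q,\App)\mapsto q$ preserves transitions and the initial/final distinction by construction, and it is a bijection on outgoing transitions of each state because, given $(q,\App)$ and an outgoing transition $q\trans{a/f}q'$ of $\T$, the target annotation $\App'$ is \emph{uniquely determined} — its $X$-part by $f[\App|_X]$ (forward, deterministic), and its $Y$-part by the constraint $\dual f[\App'|_Y] = \App|_Y$, which, one verifies, has a unique solution compatible with reachability (here the non-erasing, non-permuting form of $\dual f$ guaranteed at the end of Section on register/gap updates is what makes the backward step well-defined). Consequently successful runs of $\T$ and $\T'$ are in one-to-one correspondence, so $\T'\equiv\T$ and $\T'$ inherits flow-normalization. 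Second, $\T'$ admits $\alpha$-approximants: set $\App_{(q,\App)} := \App$; then for any successful run $\rho'$ of $\T'$ visiting $(q,\App)$ at position $i$, an easy induction along $\rho'$ — using that the forward action $f[\cdot]$ on $X$ correctly tracks $\closure{\val{\rho,i}|_X}$ starting from the all-$\emptystr$ valuation, and dually that the backward action tracks $\closure{\val{\rho,i}|_Y}$ starting from the all-$\emptystr$ gap valuation at the end — yields $\closure{\val{\rho',i}} = \App$, which is \eqref{eq:approximants}. Third, effectivity: the forward and backward actions on $\cL_\alpha^{X\uplus Y}$ are computable (they amount to a bounded number of $\closure{}$-computations on concatenations of letters and of languages from $\cL_\alpha$, each of which is decidable), and the set of reachable/co-reachable annotations is computed by a standard fixpoint over the finite lattice, after which trimming is polynomial. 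The main obstacle, and the step deserving the most care, is the backward propagation for gaps and its interaction with the forward propagation for registers: one must verify that the two propagations are consistent on shared states, that the gap update $\dual f$ really acts on approximants in the claimed right-to-left manner, and that the pairing in the product does not create spurious states — i.e.\ that every annotation kept after trimming is genuinely realized by some successful run, which is exactly what makes the covering property hold.
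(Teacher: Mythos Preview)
Your construction is exactly the paper's: take the product $Q\times\cL_\alpha^X\times\cL_\alpha^Y$, fix the $X$-component to the all-$\emptystr$ approximant at initial states and the $Y$-component to the all-$\emptystr$ approximant at final states, and let a transition $q\trans{a/f}q'$ lift to $(q,\App_X,\App_Y)\trans{a/f}(q',\App'_X,\App'_Y)$ with $\App'_X(x)=\closure{(\App_X(f(x)))}$ and $\App_Y(y)=\closure{(\App'_Y(\dual f(y)))}$. The verification of \eqref{eq:approximants} by two inductions (forward on $i$ for registers, backward for gaps) using Lemma~\ref{lem:approximated-concat} is also the same.

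The one place where your write-up goes wrong is the covering check. You assert that, given $(q,\App)$ and $q\trans{a/f}q'$, the target $Y$-component is ``uniquely determined'' as the solution of $\dual f[\App'|_Y]=\App|_Y$ ``compatible with reachability''. That is not true: the backward action $\App'|_Y\mapsto \closure{(\App'|_Y\circ\dual f)}$ is not injective (already for $\alpha=0$ and $\dual f(y_0)=y_0\,a$ both $\App'(y_0)=\{\emptystr\}$ and $\App'(y_0)=\Gamma^*$ are sent to $\Gamma^*$), and both preimages can survive trimming whenever $q'$ has two final runs, one with empty and one with non-empty $y_0$-gap. So the local bijection on outgoing transitions that you try to establish simply fails. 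The paper does not attempt this; it observes instead that \emph{successful runs} of $\T$ and $\T'$ are in bijection, which is immediate from the two boundary conditions: along any successful run of $\T$ the $X$-annotations are determined left-to-right from $\App_\emptystr$ and the $Y$-annotations right-to-left from $\App_\emptystr$, giving a unique lift, and every successful run of $\T'$ projects. That run-level correspondence is all that is needed for equivalence and for the inductive proof of \eqref{eq:approximants}; drop the transition-local uniqueness claim and argue this instead.
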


\begin{proof}
Let $\Tfulln$ be a flow-normalized SST with a trimmed state space $Q$,
and let $\alpha\in\bbN$.
The desired SST that admits $\alpha$-approximants is defined as 
$\Tfulln[T']<Q'><U><I'><E'><F'>$, where
\begin{itemize}
\item $Q'=Q\times \cL_\alpha^X \times \cL_\alpha^Y$ 
      --- namely, the states of $\T'$ are obtained by 
      annotating the states of $\T$ with $\alpha$-approximants
      of register valuations and gap valuations,
\item $I'=I\times \{\App_\emptystr\} \times \cL_\alpha$,
      with 
      $\App_\emptystr(x) = \emptystr$ for all $x \in X$
      --- namely, the initial states of $\T'$ have 
      $\alpha$-approximants for register valuations 
      initialized with the empty word $\emptystr$, 
\item $F'=F\times \cL_\alpha^X \times \{\App_\emptystr\}$,
      with $\App_\emptystr(y)=\emptystr$ for all $y\in Y$
      --- namely, the final states of $\T'$ have
      $\alpha$-approximants for gap valuations 
      initialized with $\emptystr$,
\item $E'$ consists of transitions of the form 
      $(q,\App_X,\App_Y) \trans{a/ f} (q',\App'_X,\App'_Y)$,
      where $q\trans{a/ f} q'$ is a transition in $E$,
      $\App'_X(x)=\closure{\big(\App_X(f(x))\big)}$ for all registers $x\in X$,
      and $\App_Y(y)=\closure{\big(\App'_Y(\dual{f}(y))\big)}$ for all gaps $y\in Y$,
      with $\dual{f}$ gap update determined by $f$.
      Here, $\App_X(f(x))$ means
      substituting the languages from $\cL_\a$ associated with
      registers $x'\ \in X$ into $f(x)$, and similarly for
      $\dual{f}(y)$.
      Intuitively, the approximation $\App'_X(x)$ of a target register 
      valuation is obtained by considering the effect of the update $f$ 
      on the register $x$ when the source valuation ranges over $\App_X(x)$, 
      and symmetrically for a gap $y$. 
\end{itemize}
It is clear from the above definitions that $\T'$ is a covering
of $\T$.
This implies that the successful runs of $\T$ are precisely
the successful runs of $\T'$ devoid of the $\alpha$-approximants,
and hence $\T'$ is flow-normalized and equivalent to $\T$.

It remains to prove that $\T'$ admits $\alpha$-approximants. 
This boils down to proving that for every successful run $\rho$ of $\T'$ 
that visits a state $(q,\App_X,\App_Y)$ at position $i$, we have 
$\App_X=(\closure{\val{\rho,i}})|_X$ and $\App_Y=(\closure{\val{\rho,i}})|_Y$,
where $|_X$ (resp.~$|_Y$) denotes the restriction of a function 
to the set $X$ (resp.~$Y$).
For this, we recall that if $\val{\rho,i}$ and $\val{\rho,i+1}$ 
are two consecutive valuations w.r.t.~a register update $f$, then 
\[
  \val{\rho,i+1}|_X ~=~ \val{\rho,i}|_X\circ f
  \qquad\qquad\text{and}\qquad\qquad
  \val{\rho,i}|_Y ~=~ \val{\rho,i+1}|_Y \circ \dual{f}.
\]
By Lemma \ref{lem:approximated-concat}, we derive 
\[
  (\closure{\val{\rho,i+1}})|_X 
  ~=~ \closure{\big((\closure{\val{\rho,i}})|_X\circ f\big)}
  \qquad\text{and}\qquad 
  (\closure{\val{\rho,i}})|_Y 
  ~=~ \closure{\big((\closure{\val{\rho,i+1}})|_Y\circ \dual{f}\big)}.
\]
Thanks to this, using simple inductions on $i$, one can verify that 
$\App|_X = \closure{(\val{\rho,i}})|_X$, and symmetrically that
$\App|_Y = \closure{(\val{\rho,i}})|_Y$. 
\end{proof}

\noindent
\emph{Notation.} Whenever an SST admits $\alpha$-approximants $\App_q$ as above, 
it is convenient to denote its states by triples of the form $(q,\App_X,\App_Y)$,
where $\App_X$ (resp.~$\App_Y$) is the restriction of the $\a$-approximant $\App_q$ of state $q$ to registers (resp.~gaps).

\medskip
Note that the smaller the parameter $\alpha$, the weaker is the 
property required for $\a$-approximants (in particular, for $\alpha=0$ the lattice
$\cL_\a$ collapses to $\emptyset$ and $\G^*$). 
Choosing $\alpha$ to be 
at least the capacity of the SST is already a reasonable choice, 
as it gives a nice characterization of equivalence of transitions 
w.r.t.~the produced outputs (cf.~Lemma~\ref{lem:equivalence-by-approximants} 
below).
However, we will see that it is desirable to have  
even finer approximants, in such a way that our results
will be compatible with left quotients of SST, that shortcut 
arbitrary long runs into single transitions. 
We postpone the technical details to Section \ref{sec:shortcut}, 
and only provide a rough intuition underlying the choice of the 
appropriate parameter $\alpha$. 
We will choose $\alpha$ much larger than the capacity of the SST, 
so that, by pumping arguments, one can show that,
for every state $q$ and every parameter $\beta\ge\alpha$, 
the $\beta$-approximant cannot be strictly smaller 
than the $\alpha$-approximant on {\sl all} valuations 
from $\Val{q}$.

\paragraph*{Normalization of transitions.}
We finally turn to studying a notion of equivalence on transitions 
that is similar to the two-sided Myhill-Nerode equivalence on words. 
We will only compare transitions that consume the same input letter
and link the same pair of states. Instead of using words as 
two-sided contexts, we will use initial and final runs that 
can be attached to the considered transitions in order to form 
successful runs, and instead of comparing membership in a language, 
we will compare the effect on the produced outputs. 

Consider two transitions $\t_1:\, q\trans{a/f_1}q'$ and $\t_2:\,
q\trans{a/f_2}q'$. 
We say that $\t_1$ and $\t_2$ are \emph{equivalent} 
if for every initial run $\r$ leading to $q$ and every final run $\s$ starting in $q'$,
the outputs $\out(\r \, \t_1 \, \s)$ and $\out(\r \, \t_2 \, \s)$ are
equal. We often refer to $(\r,\s)$ as a \emph{context} for $\t_1,\t_2$.

\medskip
In general, two transitions of an SST having the same source, target
and $\S$-label might turn out to be non-equivalent, 
and still produce the same output within specific contexts. 
However, Lemma~\ref{lem:equivalence-by-approximants} below shows that this is not the case with 
$\alpha$-approximants at hand, provided that $\alpha$ 
is at least the capacity of the SST. 
More precisely, we will show that the equivalence of two transitions 
$\t_1:\, (q,\App_X,\App_Y)\trans{a/f_1}(q',\App'_X,\App'_Y)$ and 
$\t_2:\, (q,\App_X,\App_Y)\trans{a/f_2}(q',\App'_X,\App'_Y)$,
where $f_1,f_2$ have the same flow, only depends on the 
$\alpha$-approximants $\App_X,\App'_Y$ that annotate 
the source and target states. This will imply that $\t_1,\t_2$
either always produce the same output or always produce different outputs,
independently of the surrounding contexts.
To prove the statement, we have to consider register valuations 
induced by initial runs, and symmetrically gap valuations 
induced by final runs. It helps to introduce the following:

\smallskip

\noindent
\emph{Notation.} 
Given an initial run $\r$, $\val{\r\pos}$ is the register valuation 
induced at the end of $\r$; symmetrically, $\val{\pos\s}$ is the gap 
valuation induced at the beginning of a final run $\s$.

\medskip

We also recall a consequence of the flow normalization: 
the effect on the final output of an update $f$ 
that occurs in a successful run can be described 
by a word $\eff{f}$ over the alphabet $X \uplus Y \uplus \G$, defined
as $\eff{f}  = y_0\, f(x_1)\, y_1 \dots f(x_m)\, y_{m+1}$.
Note that in $\eff{f}$ each register (resp.~gap) occurs exactly once, according to
the order $x_1 < \dots < x_m$ (resp.~$y_0 < \dots < y_m$)
--- the occurrences of registers and gaps, however, may not be strictly interleaved. 
In $\eff{f}$, an occurrence of $x_i\in X$ represents an abstract 
valuation for register $x_i$ {\sl before} applying the update $f$, 
while an occurrence of $y_j\in Y$ represents an abstract valuation 
for the gap $y_j$ {\sl after} applying $f$.
In particular, note that the $x$'s and the $y$'s refer 
to valuations induced at different positions of a run.
The maximal factors of $\eff{f}$ that are entirely over $\G$ 
represent the words that need to be added in order to get 
the register valuation {\sl after} $f$, or equally the gap 
valuation {\sl before} $f$. 
For instance, by reusing the example update $f$ 
from page \pageref{example:register-gap-update},
where $f(x_1)=s\,x_1\,u\,x_2\,t$ and $f(x_2)=v$,
the effect of $f$ is described by the word
$\eff{f} = y_0 \, s \, x_1 \, u \, x_2 \, t \, y_1 \, v \, y_2$,
suggestively depicted as
\[
\begin{tikzpicture}[yscale=.8] 
\begin{scope}[yscale=0.2]
\clip (-2,-3.6) |- (11.9,3.5) |- cycle;
\draw (0,0) node [left] {$\eff{f} ~=\,$};
\draw (0,0) node (A) {};
\draw (1,0) node (B) {};
\draw (2,0) node (C) {};
\draw (4,0) node (D) {};
\draw (6,0) node (E) {};
\draw (8,0) node (F) {};
\draw (9,0) node (G) {};
\draw (10,0) node (H) {};
\draw (11,0) node (I) {};
\draw (12,0) node (J) {};
\draw [draw=none,fill=nicered,opacity=0.5] ([yshift=-1.5cm]C) rectangle ([yshift=1.5cm]D);
\draw [draw=none,fill=nicered,opacity=0.5] ([yshift=-1.5cm]E) rectangle ([yshift=1.5cm]F);
\draw [draw=none,fill=nicecyan,opacity=0.5,rounded corners] 
      ([yshift=-1.25cm]A.east) rectangle ([yshift=1.25cm]B.west);
\draw [draw=none,fill=nicecyan,opacity=0.5,rounded corners] 
      ([yshift=-1.25cm]G.east) rectangle ([yshift=1.25cm]H.west);
\draw [draw=none,fill=nicecyan,opacity=0.5,rounded corners] 
      ([yshift=-1.25cm]I.east) rectangle ([yshift=1.25cm]J.west);
\draw ($(A)!0.5!(B)$) node {$y_0$};
\draw (B.center) edge [|-|] node [above] {$s$} (C.west);
\draw ($(C)!0.5!(D)$) node {$x_1$};
\draw (D.east) edge [|-|] node [above] {$u$} (E.west);
\draw ($(G)!0.5!(H)$) node {$y_1$};
\draw (F.east) edge [|-|] node [above] {$t$} (G.center);
\draw ($(E)!0.5!(F)$) node {$x_2$};
\draw (H.center) edge [|-|] node [above] {$v$} (I.center);
\draw ($(I)!0.5!(J)$) node {$y_2$};
\draw [dotted,nicered,opacity=0.75] ([yshift=-3.5cm]B) rectangle ([yshift=3.5cm]G);
\draw [dotted,nicered,opacity=0.75] ([yshift=-3.5cm]H) rectangle ([yshift=3.5cm]I);
\draw [dotted,nicecyan,opacity=0.75,rounded corners=10] ([yshift=-2.25cm]A.east) rectangle ([yshift=2.25cm]C.west);
\draw [dotted,nicecyan,opacity=0.75,rounded corners=10] ([yshift=-2.25cm]D.east) rectangle ([yshift=2.25cm]E.west);
\draw [dotted,nicecyan,opacity=0.75,rounded corners=10] ([yshift=-2.25cm]F.east) rectangle ([yshift=2.25cm]J.west);
\end{scope}
\end{tikzpicture}
\]
(here the lengths of the blocks labeled with variables are immaterial).
Note that the $y_j$ above are in fact the $y'_j$ from the picture
at page \pageref{example:register-gap-update}.  
In the above figure we have also highlighted with dotted rectangles the 
factors that represent gap valuations before the update (e.g.~$y_0\,s$), 
and register valuations after the update (e.g.~$s\,x_1\,u\,x_2\,t$).

Given a valuation $\nu$ and two approximants $\App\in\cL_\alpha^X$ 
and $\App'\in\cL_\alpha^Y$, one for register valuations and the other
for gap valuations, we write $\nu\in\App\uplus\App'$ to mean that
$\nu(x)\in\App(x)$ and $\nu(y)\in\App'(y)$ for all $x\in X$ and $y\in Y$.

\begin{restatable}{lemma}{LemEquivalenceByApproximants}\label{lem:equivalence-by-approximants}
Let $\T$ be a trimmed flow-normalized SST.
Given two transitions 
$\t_i : \, q \trans{a/f_i} q'$, with $i \in \set{1,2}$,
a context $(\r,\s)$ for them, and the 
$\alpha$-approximants $\App = (\closure{(\val{\r\pos})})|_X$ and 
$\App' = (\closure{(\val{\pos\s})})|_Y$, with $\alpha\in\bbN$,
the following holds:
\begin{enumerate}
\item If $\eff{f_1} = \eff{f_2}$ holds
      on all valuations $\nu\in\App\uplus\App'$, 
      then $\out(\r\,\t_1\,\s) = \out(\r\,\t_2\,\s)$.
\item If $\t_1,\t_2$ have the same flow,
      $\out(\r\,\t_1\,\s) = \out(\r\,\t_2\,\s)$,
      and $\alpha\ge c$, where $c$ is the capacity of $\T$,
      then $\eff{f_1} = \eff{f_2}$ holds on all valuations 
      $\nu\in\App\uplus\App'$.
\end{enumerate}
\end{restatable}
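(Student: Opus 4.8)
The plan is to unwind the definitions of $\out(\r\,\t_i\,\s)$ in terms of the word $\eff{f_i}$ and the valuations attached to the context $(\r,\s)$. Concretely, if $\r:\,q_0\trans{u/g}q$ is initial and $\s:\,q'\trans{u'/h}q_f$ is final, then the output of $\r\,\t_i\,\s$ is obtained by substituting, in $\eff{f_i} = y_0\,f_i(x_1)\,y_1\cdots f_i(x_m)\,y_{m+1}$, each register $x_j$ by the register valuation $\val{\r\pos}(x_j)$ (the contents built up by $\r$) and each gap $y_j$ by the gap valuation $\val{\pos\s}(y_j)$ (the ``future'' contents contributed by $\s$, read right-to-left). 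This identity — that $\out(\r\,\t_i\,\s)$ equals the image of $\eff{f_i}$ under the valuation $\nu_{\r,\s}$ defined by $\nu_{\r,\s}|_X = \val{\r\pos}$ and $\nu_{\r,\s}|_Y = \val{\pos\s}$ — is the bridge between the two sides, and it follows from the definitions of $\val{\rho,i}$, of gap valuations, and of the fact that flow-normalization guarantees $f(\chi)\in\Gamma^*x_1\Gamma^*\cdots\Gamma^*x_k\Gamma^*$ so that the interleaving in $\eff{f_i}$ makes sense. Once this is in place, both parts reduce to statements purely about the word-valued functions $\eff{f_1},\eff{f_2}$ evaluated at valuations.

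\textbf{Part (1).}
This direction is essentially immediate from the bridge identity. By definition of the $\alpha$-approximants in the hypothesis, $\val{\r\pos}(x)\in\App(x)$ for every register $x$ and $\val{\pos\s}(y)\in\App'(y)$ for every gap $y$; hence $\nu_{\r,\s}\in\App\uplus\App'$. Since by assumption $\eff{f_1}$ and $\eff{f_2}$ evaluate to the same word on every valuation in $\App\uplus\App'$, they agree on $\nu_{\r,\s}$, and therefore $\out(\r\,\t_1\,\s)=\out(\r\,\t_2\,\s)$.

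\textbf{Part (2): the converse, which is the hard direction.}
Here we know $\out(\r\,\t_1\,\s)=\out(\r\,\t_2\,\s)$ for the \emph{one particular} context $(\r,\s)$ and want to conclude that $\eff{f_1}=\eff{f_2}$ on \emph{all} valuations in $\App\uplus\App'$. The key point is that a single context already ``tests'' a rich enough family of valuations, provided $\alpha\ge c$: because $\T$ is trimmed and flow-normalized, one can pump loops inside $\r$ (resp.\ $\s$) to stretch the register (resp.\ gap) contents, and the $\alpha$-approximant $\App$, being a language in $\cL_\alpha$ determined by $\val{\r\pos}$, records exactly the length/period data that such pumping can realize. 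The argument proceeds letter by position: fix an occurrence of a block of $\Gamma$-letters in $\eff{f_i}$, or an occurrence of a register/gap variable; since $f_1,f_2$ have the same flow, the variables occur in the same places in $\eff{f_1}$ and $\eff{f_2}$, so the only possible discrepancy is in the $\Gamma$-blocks, i.e.\ in the words $f_1(x_j)\langle\cdot,\cdot\rangle$ versus $f_2(x_j)\langle\cdot,\cdot\rangle$ sitting between consecutive variables. The equation $\out(\r\,\t_1\,\s)=\out(\r\,\t_2\,\s)$ then forces these $\Gamma$-blocks to be ``compatible'' with the fixed valuation $\nu_{\r,\s}$; one must upgrade this to equality of the blocks as words, which is where the hypothesis $\alpha\ge c$ and a mild word-combinatorics / word-equations argument enters — essentially, if two words $p,q$ differ but $p\,\nu(z)\,r = p'\,\nu(z)\,r'$ for every instantiation of the surrounding context allowed by the approximants, then $\nu(z)$ would be forced into a too-short or too-periodic language, contradicting that $\App(z)=\closure{\{\val{\r\pos}(z)\}}$ faithfully reflects the pumpable length/period. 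I expect this periodicity bookkeeping — making precise that a context with approximant $\App$ is ``generic enough'' and that $\alpha\ge c$ leaves no room for a spurious coincidence — to be the main obstacle, and it is the step where Lemma~\ref{lem:approximated-concat} and the structure of $\cL_\alpha$ do the real work.
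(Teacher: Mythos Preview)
Your treatment of Part~(1), including the bridge identity
\[
  \out(\r\,\t_i\,\s) \;=\; \nu_{\r,\s}\big(\eff{f_i}\big),
  \qquad
  \nu_{\r,\s}|_X=\val{\r\pos},\ \nu_{\r,\s}|_Y=\val{\pos\s},
\]
is correct and matches the paper exactly.

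Part~(2), however, has a genuine gap in the approach. You propose to \emph{pump loops inside $\r$ and $\s$} to realize a family of valuations filling out the approximant, and then argue that equality of outputs across that family forces the $\Gamma$-blocks in $\eff{f_1}$ and $\eff{f_2}$ to coincide as words. Neither step is available. First, the hypothesis gives $\out(\r\,\t_1\,\s)=\out(\r\,\t_2\,\s)$ for the \emph{single} context $(\r,\s)$ only; pumping $\r$ to some $\r'$ gives a new context for which nothing is assumed, so you cannot harvest additional equations this way. Second, the approximant $\App(z)=\closure{\{\val{\r\pos}(z)\}}$ is the $\alpha$-closure of \emph{one} word and has, a priori, nothing to do with which valuations are realizable by pumped runs (that connection is established only later, in Proposition~\ref{prop:tight-approximants}, and is not part of the present lemma). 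Third, the target ``equality of the $\Gamma$-blocks as words'' is too strong and in general false: if $\eff{f_1}=y_0\,a\,x_1\,y_1$ and $\eff{f_2}=y_0\,x_1\,a\,y_1$ with $\App(x_1)=a^*$, the blocks differ yet $\eff{f_1}=\eff{f_2}$ on every $\nu\in\App\uplus\App'$.

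The paper's argument stays entirely with the single equation $\nu_{\r,\s}(\eff{f_1})=\nu_{\r,\s}(\eff{f_2})$ and uses elementary word combinatorics, not pumping. Since the flows agree, the variables occur with the same interleaving; one processes the equation left to right. At the first variable $z_1$, the two sides read $s\,z_1\,L'$ and $s'\,z_1\,R'$ with $|s|,|s'|\le c$. If $s=s'$, peel off and recurse. If not, say $s'=s\,w$ with $0<|w|\le c$; aligning the two occurrences of $z_1$ gives the classical equation $w\,z_1=z_1\,w'$ with $w'$ a conjugate of $w$, forcing $\nu_{\r,\s}(z_1)\in w^*w_1$ for a prefix $w_1$ of (primitive) $w$. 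Now the hypothesis $\alpha\ge c$ enters in exactly one place: since $|w|\le c\le\alpha$, the language $w^*w_1$ belongs to $\cL_\alpha$, so by definition of the closure $\App(z_1)=\closure{\{\nu_{\r,\s}(z_1)\}}\subseteq w^*w_1$. Hence the local equation $s\,z_1\,w'=s'\,z_1$ holds for \emph{every} value of $z_1$ in $\App(z_1)$, and one continues inductively with $w'L'=R'$. Lemma~\ref{lem:approximated-concat} is not used here; the work is done by the observation that any periodicity forced by a shift of length $\le c$ is already an element of $\cL_\alpha$ and therefore bounds the approximant.
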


\begin{proof}
The proof exploits the fact that, since $\T$ is flow-normalized,
for every successful run $\r\,\t_i\,\s$, for both $i=1$ and $i=2$, the 
substitution in $\eff{f_i}$ of every variable $x\in X$ (resp.~$y\in Y$) 
with the word $\val{\r\pos}(x)$ (resp.~$\val{\pos\s}(y)$) gives 
precisely the output $\out(\r\,\t_i\,\s)$.
This implies that
\begin{align*}
  \out(\r\,\t_1\,\s) = \out(\r\,\t_2\,\s)
  \quad\text{if and only if}\quad
  (\val{\r\pos}) \uplus (\val{\pos\s})
  \,\sat\, \eff{f_1} = \eff{f_2}
\tag{$\star$}
\end{align*}

We prove the first claim, which holds 
for any arbitrary parameter $\alpha\in\bbN$.
By construction, we have $\val{\r\pos}(x) \in \App(x)$, 
for all $x\in X$, and $\val{\pos\s}(y) \in \App'(y)$, for all $y\in Y$.
The previous property ($\star$) immediately implies that 
$\tau_1$ and $\tau_2$ produce the same output within the context 
$(\r,\s)$ {\sl if} the equation $\eff{f_1} = \eff{f_2}$ holds for 
all valuations $\nu\in\App \uplus \App'$.

To prove the second claim, we assume that $\t_1$ and $\t_2$
have the same flow, we fix a context $(\r,\s)$ for them,
and we let $\App = \closure{(\val{\r\pos})}$ 
and $\App' = \closure{(\val{\pos\s})}$ for some $\alpha\ge c$,
where $c$ is the capacity of $\T$.
We need to prove that the equation $\eff{f_1} = \eff{f_2}$ 
holds for all valuations $\nu\in\App\uplus\App'$.
We proceed by equating the two words $\eff{f_1}$ and $\eff{f_2}$, 
and we study how the various blocks inside $\eff{f_1}$ and $\eff{f_2}$
(i.e.~variables and maximal factors over $\G$) are aligned.
The reader may refer to the figure below, which gives an example
of possible alignments:
\[
\begin{tikzpicture}
\begin{scope}[yscale=0.2]
\draw (0,0) node [left] {$\eff{f_1} \,=$};
\draw (0,0) node (A) {};
\draw (1,0) node (B) {};
\draw (2,0) node (C) {};
\draw (4,0) node (D) {};
\draw (6,0) node (E) {};
\draw (8,0) node (F) {};
\draw (9,0) node (G) {};
\draw (10,0) node (H) {};
\draw (11,0) node (I) {};
\draw (12,0) node (J) {};
\draw [draw=none,fill=nicered,opacity=0.5] ([yshift=-1.5cm]C) rectangle ([yshift=1.5cm]D);
\draw [draw=none,fill=nicered,opacity=0.5] ([yshift=-1.5cm]E) rectangle ([yshift=1.5cm]F);
\draw [draw=none,fill=nicecyan,opacity=0.5,rounded corners] 
      ([yshift=-1.25cm]A.east) rectangle ([yshift=1.25cm]B.west);
\draw [draw=none,fill=nicecyan,opacity=0.5,rounded corners] 
      ([yshift=-1.25cm]G.east) rectangle ([yshift=1.25cm]H.west);
\draw [draw=none,fill=nicecyan,opacity=0.5,rounded corners] 
      ([yshift=-1.25cm]I.east) rectangle ([yshift=1.25cm]J.west);
\draw ($(A)!0.5!(B)$) node {$y_0$};
\draw (B.center) edge [|-|] node [above] {$s$} (C.west);
\draw ($(C)!0.5!(D)$) node {$x_1$};
\draw (D.east) edge [|-|] node [above] {$u$} (E.west);
\draw ($(G)!0.5!(H)$) node {$y_1$};
\draw (F.east) edge [|-|] node [above] {$t$} (G.center);
\draw ($(E)!0.5!(F)$) node {$x_2$};
\draw (H.center) edge [|-|] node [above] {$v$} (I.center);
\draw ($(I)!0.5!(J)$) node {$y_2$};
\end{scope}
\begin{scope}[yshift=-1.6cm,yscale=0.2]
\draw (0,0) node [left] {$\eff{f_2} \,=$};
\draw (0,0) node (A) {};
\draw (1,0) node (B) {};
\draw (2.5,0) node (C) {};
\draw (4.5,0) node (D) {};
\draw (5.5,0) node (E) {};
\draw (7.5,0) node (F) {};
\draw (8.5,0) node (G) {};
\draw (9.5,0) node (H) {};
\draw (11,0) node (I) {};
\draw (12,0) node (J) {};
\draw [draw=none,fill=nicered,opacity=0.5] ([yshift=-1.5cm]C) rectangle ([yshift=1.5cm]D);
\draw [draw=none,fill=nicered,opacity=0.5] ([yshift=-1.5cm]E) rectangle ([yshift=1.5cm]F);
\draw [draw=none,fill=nicecyan,opacity=0.5,rounded corners] 
      ([yshift=-1.25cm]A.east) rectangle ([yshift=1.25cm]B.west);
\draw [draw=none,fill=nicecyan,opacity=0.5,rounded corners] 
      ([yshift=-1.25cm]G.east) rectangle ([yshift=1.25cm]H.west);
\draw [draw=none,fill=nicecyan,opacity=0.5,rounded corners] 
      ([yshift=-1.25cm]I.east) rectangle ([yshift=1.25cm]J.west);
\draw ($(A)!0.5!(B)$) node {$y_0$};
\draw (B.center) edge [|-|] node [above] {$s'$} (C.west);
\draw ($(C)!0.5!(D)$) node {$x_1$};
\draw (D.east) edge [|-|] node [above] {$u'$} (E.west);
\draw ($(G)!0.5!(H)$) node {$y_1$};
\draw (F.east) edge [|-|] node [above] {$t'$} (G.center);
\draw ($(E)!0.5!(F)$) node {$x_2$};
\draw (H.center) edge [|-|] node [above] {$v'$} (I.center);
\draw ($(I)!0.5!(J)$) node {$y_2$};
\end{scope}
\end{tikzpicture}
\]

It is important to note that in any word $\eff{f}$ the 
variables from $X$ occur in the standard order $x_1<\dots<x_m$,
and similarly for the variables from $Y$.
In general, it may happen that, due to updates that 
concatenate registers together, the $x$'s and the $y$'s are 
not strictly interleaved one with the other (as an example, 
see $\eff{f_1}$ in the figure above). 
Here however, since $\t_1$ and $\t_2$ were assumed
to have the same flow, we know that the interleaving of the
$x$'s and the $y$'s is the same in $\eff{f_1}$ and $\eff{f_2}$.

Of course, since  $\out(\r\,\t_1\,\s) = \out(\r\,\t_2\,\s)$ the occurrences 
of the first and the last variables, $y_0$ and $y_m$,
are aligned exactly, as they represent the same extremal gaps.
For the remaining variables, which we generically denote $z_1,\dots,z_\ell$,
we proceed by splitting the equation $\eff{f_1}=\eff{f_2}$, devoid of the
extremal variables, into sub-equations that involve fewer variables, and 
reason by induction. Hereafter, $L=R$ denotes an equation over the 
variables $z_1,\dots,z_\ell$, 
that occur exactly once on each side 
of the equation and with the same order. Moreover, the factors of
$L$ and $R$ over $\G$ have length at most $c$, the capacity of the SST.

Suppose that $L = s\, z_1\, L'$ and $R= s'\, z_1\, R'$, 
with $s,s' \in \G^*$ and 
$L',R' \in \G^* z_{2} \, \G^* \dots \, \G^* z_\ell$.
Assume from now on that $z_1$ is a register (gaps are treated
symmetrically). 

If $s=s'$, then
the occurrences of $z_1$
perfectly aligned, and so $L'=R'$. We can then use induction.
Note that in this case there is no restriction on $z_1$. In particular,
the approximant $\App(z_1)$ for the valuation of $z_1$ induced by the
initial run $\r$ can well be $\G^*$. Moreover, any solution of the equation
$L=R$, where we replace the valuation for $z_1$ by an arbitrary word from 
its approximant $\App(z_1)$, is again a solution.
 
Otherwise, if $s\neq s'$, then either $s$ is a prefix of $s'$, or the other
way around. Suppose by symmetry that $s'=s\, w$ for some $w\in\G^+$,
hence $|w| \le c$.
We get the equation $w \, z_1 = z_1 \, w'$, where $w'$ is a conjugate of $w$, 
i.e.~$w=w_1\,w_2$ and $w'=w_2\,w_1$ for some $w_1,w_2\in\G^*$.
It follows that in every solution of $L=R$, 
the value of $z_1$ must range over the periodic language $w^* w_1$.
If we consider the register valuation $\nu = \val{\r\pos}$
induced by the initial run $\r$, then we have $\nu(z_1) \in w^* w_1$. 

Now, we assume without loss of generality that $w$ is primitive.
Since the length of $w$ is at most $c$ and since $\alpha\ge c$, 
we know that $w^* w_1$ is an $\alpha$-approximant.
Moreover, since $\nu(z_1) \in w^* w_1$ and $\App=\closure{(\val{\r\pos})}$, 
we get $\App(z_1) \subseteq w^* w_1$ (in particular, $\App(z_1)$ can be either
a singleton or the periodic language $w^* w_1$ itself). 
This implies that the equation $s\,z_1\,w' = s'\, z_1$ holds for any
word from $\App(z_1)$.

For the remaining variables, we observe that, up to any valuation
that satisfies $\eff{f_1}=\eff{f_2}$,  we get a new equation $w'\, L' = R'$ in a fewer 
number of variables. From there by applying induction we get the desired
claim.
\end{proof}

\medskip
Recall that in an SST that admits $\alpha$-approximants, 
states are of the form $(q,\App_X,\App_Y)$, 
and we have $(\closure{(\val{\r\pos})})|_X = \App_X$ 
(resp.~$(\closure{(\val{\pos\s})})|_Y=\App_Y$) for every initial run $\r$
that ends in $(q,\App_X,\App_Y)$ (resp.~for every final run $\s$
that starts in $(q,\App_X,\App_Y)$).
By pairing this with Lemma \ref{lem:equivalence-by-approximants}, 
we immediately obtain the following corollary:

\begin{restatable}{corollary}{CorEquivalenceByApproximants}\label{cor:equivalence-by-approximants}
Let $\T$ be a trimmed flow-normalized SST.
One can decide in polynomial time whether two given transitions 
$\t_1,\t_2$ of $\T$ with the same flow are equivalent. 
Moreover, if $\T$ has capacity $c$ and admits $\alpha$-approximants 
for some $\alpha\ge c$, then only two cases can happen:
\begin{enumerate}
\item either $\out(\r \, \t_1 \, \s) = \out(\r \, \t_2 \, \s)$ for every context $(\r,\s)$
      (so $\t_1,\t_2$ are equivalent),
\item or $\out(\r \, \t_1 \, \s) \neq \out(\r \, \t_2 \, \s)$ for \emph{every} context $(\r,\s)$
      (so $\t_1,\t_2$ are not equivalent).
\end{enumerate}
\end{restatable}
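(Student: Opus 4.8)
The plan is to obtain the ``moreover'' dichotomy as an almost immediate consequence of Lemma~\ref{lem:equivalence-by-approximants}, and then to read the decidability claim off it. Assume first that $\T$ admits $\alpha$-approximants with $\alpha\ge c$, so that the source and target of $\t_1,\t_2$ have the form $(q,\App_X,\App_Y)$ and $(q',\App'_X,\App'_Y)$. The key point is that the approximants $\App=(\closure{(\val{\r\pos})})|_X$ and $\App'=(\closure{(\val{\pos\s})})|_Y$ used in Lemma~\ref{lem:equivalence-by-approximants} do not depend on the context $(\r,\s)$: for every initial run $\r$ ending in the source one has $(\closure{(\val{\r\pos})})|_X=\App_X$, and for every final run $\s$ leaving the target one has $(\closure{(\val{\pos\s})})|_Y=\App'_Y$, so $\App=\App_X$ and $\App'=\App'_Y$ always. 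I would then run the single test: \emph{does the word equation $\eff{f_1}=\eff{f_2}$ over the variables $X\uplus Y$ hold for every valuation $\nu\in\App_X\uplus\App'_Y$?} If it does, then Lemma~\ref{lem:equivalence-by-approximants}(1), applied with $\App=\App_X$ and $\App'=\App'_Y$, gives $\out(\r\,\t_1\,\s)=\out(\r\,\t_2\,\s)$ for \emph{every} context, which is case~(1). If it does not, pick any context $(\r,\s)$ --- one exists since $\T$ is trimmed --- and observe that $\out(\r\,\t_1\,\s)=\out(\r\,\t_2\,\s)$ would, by Lemma~\ref{lem:equivalence-by-approximants}(2) (here we use that $\t_1,\t_2$ have the same flow and $\alpha\ge c$), force $\eff{f_1}=\eff{f_2}$ on all of $\App_X\uplus\App'_Y$, contradicting the outcome of the test; hence $\out(\r\,\t_1\,\s)\neq\out(\r\,\t_2\,\s)$, and since the context was arbitrary this is case~(2). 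Thus exactly one of (1), (2) holds, and the equation test decides which.

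For the plain decidability claim over an arbitrary trimmed flow-normalized $\T$, I would first bring $\T$ into a form admitting $c$-approximants via Proposition~\ref{prop:approximants}; the construction is a covering, so successful runs --- and hence the contexts relevant to equivalence of transitions --- are transported bijectively, and a transition together with a lift of its source determines a unique lifted transition. One then checks that $\t_1\equiv_\T\t_2$ iff, for every lift of the common source, the two lifted transitions (taken against the appropriate target lifts) are equivalent in the covering, and each such comparison is settled by the equation test above. That test runs in polynomial time: it is exactly the inductive procedure in the proof of Lemma~\ref{lem:equivalence-by-approximants} --- since $f_1,f_2$ have the same flow, the $x$'s and $y$'s are interleaved identically on the two sides, so one peels the same leftmost variable $z$ off $\eff{f_1}$ and $\eff{f_2}$, compares the two $\Gamma$-blocks preceding it (each of length $\le c$), and in the unequal case checks whether the relevant component of $\App_X$ or $\App'_Y$ is contained in a periodic language $w^*w_1$ with $w$ primitive of length $\le c$; there are $O(m)$ rounds, each polynomial.

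I do not expect a genuinely new mathematical difficulty here: the substance is already in Lemma~\ref{lem:equivalence-by-approximants}, and the rest is assembly. The two points that take some care are (i) turning the somewhat implicit induction of that lemma into an explicit algorithm --- in particular reducing to a primitive period before the Fine--Wilf-style reasoning, and being consistent about which of $f_1,f_2$ contributes the longer $\Gamma$-block when the two disagree --- and (ii) the covering bookkeeping behind the decidability claim, notably that the lifts of $\t_1$ and $\t_2$ from a common source need not share the same target, so each must be compared against the appropriate pair of annotated states; this last point is where stating the time bound cleanly takes a little effort, and I would phrase it relative to the $c$-approximant form produced by Proposition~\ref{prop:approximants}.
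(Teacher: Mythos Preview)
Your argument for the ``moreover'' dichotomy is exactly the paper's: the paper presents the corollary as immediate from Lemma~\ref{lem:equivalence-by-approximants} together with the observation that in an SST admitting $\alpha$-approximants the quantities $\App=(\closure{(\val{\r\pos})})|_X$ and $\App'=(\closure{(\val{\pos\s})})|_Y$ are fixed by the source and target states, so the equation test is context-independent. Your case split (equation holds on all of $\App_X\uplus\App'_Y$ $\Rightarrow$ case~(1) via part~(1); otherwise any context would contradict part~(2), hence case~(2)) is precisely the intended reading.

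Where you go beyond the paper is in the first sentence of the corollary (polynomial-time decidability for a general trimmed flow-normalized $\T$). The paper does not spell this out; its proof paragraph only covers the situation where approximants are already admitted, and every later use of the corollary (Theorems~\ref{thm:edge-ambiguity} and~\ref{thm:edge-ambiguity-shortcut}) is in that setting. Your plan to pass through the covering of Proposition~\ref{prop:approximants} is a reasonable way to make the general claim precise, and the two caveats you flag are genuine: the covering can be large (so ``polynomial time'' should really be read relative to the approximant form), and the lifts of $\t_1,\t_2$ from a common annotated source may land in different annotated targets, so the dichotomy cannot be invoked on a single pair of lifted transitions. Neither issue affects correctness of the decidability claim, and neither is addressed in the paper --- you are simply being more careful than the authors at a point they treat as routine. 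For the purposes the paper actually needs, your derivation of the dichotomy plus the observation that the inductive procedure of Lemma~\ref{lem:equivalence-by-approximants} runs in polynomial time (given the state annotations) is all that is required.
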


\medskip
Another important consequence is the following theorem, that normalizes 
finite-valued SST in order to bound the maximum number 
of transitions linking the same pair of states and consuming the 
same input letter. This number is called \emph{edge ambiguity} for short.

\begin{restatable}{theorem}{ThmEdgeAmbiguity}\label{thm:edge-ambiguity}
Let $\T$ be a $k$-valued, flow-normalized SST that has $m$ registers,
capacity $c$, and that admits $\alpha$-approximants, for some $\alpha\ge c$.
One can construct an equivalent SST $\T'$, with the same states and the 
same registers as $\T$, that has edge ambiguity at most 
$\edgeambiguity$.
\end{restatable}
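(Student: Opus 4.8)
The plan is to obtain $\T'$ from $\T$ purely by \emph{deleting} transitions: among the transitions that share a source state, a target state and an input letter, I would keep only a bounded set of representatives and discard every transition that is duplicated, up to equivalence, by one that is kept. Since states and registers are never touched, $\T'$ automatically has the same states and registers as $\T$, and only two things remain to be checked: that $\T'$ still realizes the same relation, and that its edge ambiguity has dropped to $\edgeambiguity$.

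First, fix a source state $p$, a target state $p'$ and a letter $a$, and partition the transitions $p\trans{a/f}p'$ according to the flow of $f$. As $\T$ is flow-normalized with $m$ registers there are at most $\normflows$ distinct flows, so we get at most $\normflows$ groups. Inside one group all transitions share the same flow, so Corollary~\ref{cor:equivalence-by-approximants} applies (this is where $\alpha\ge c$ is used): the equivalence of two such transitions depends only on $p,p'$, is decidable in polynomial time, and --- crucially --- two \emph{non-equivalent} transitions of the group produce different outputs on \emph{every} context. In particular, restricted to a single group, ``$\t_1$ equivalent to $\t_2$'' is a genuine equivalence relation (it is the intersection over all contexts of the relations ``produce the same output on this context''). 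The heart of the argument is that each group has at most $k$ equivalence classes. Suppose not, and let $\t_1,\dots,\t_{k+1}$ be pairwise non-equivalent transitions of one group, $\t_i : p\trans{a/f_i}p'$. Since $\T$ is trimmed, $p$ is reached from an initial state by some initial run $\r$ and $p'$ reaches a final state by some final run $\s$, so $(\r,\s)$ is simultaneously a context for all the $\t_i$. Applying Corollary~\ref{cor:equivalence-by-approximants} to each pair, the $k+1$ words $\out(\r\,\t_i\,\s)$ are pairwise distinct; but they are all outputs of successful runs on the single input $u\,a\,v$, where $u,v$ are the inputs of $\r,\s$ --- contradicting $k$-valuedness. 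Hence, keeping one transition per class (write $\mathrm{rep}(\t)$ for the representative chosen in the class of $\t$), each group retains at most $k$ transitions, so the edge ambiguity of $\T'$ is at most $\normflows\cdot k=\edgeambiguity$.

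For $\T'\equiv\T$, one inclusion is immediate: every successful run of $\T'$ is already a successful run of $\T$ (we only removed transitions and kept all states), so $\T'$ realizes a subset of the relation of $\T$. For the reverse inclusion I would take a successful run $\rho=\t^{(1)}\cdots\t^{(n)}$ of $\T$ and rewrite it from left to right, replacing $\t^{(i)}$ by $\mathrm{rep}(\t^{(i)})$; this preserves every intermediate state, hence yields a successful run of $\T'$. At step $i$, the already-rewritten prefix $\mathrm{rep}(\t^{(1)})\cdots\mathrm{rep}(\t^{(i-1)})$ is still a run of $\T$ (representatives are original transitions of $\T$), so together with the untouched suffix $\t^{(i+1)}\cdots\t^{(n)}$ it forms a legitimate context in $\T$ for $\t^{(i)}$ and $\mathrm{rep}(\t^{(i)})$; since these two are equivalent in $\T$, the overall output is unchanged. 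After $n$ steps one gets a successful run of $\T'$ on the same input with the same output, so the relation of $\T$ is contained in that of $\T'$. (As a by-product $\T'$ stays trimmed, flow-normalized, and admits $\alpha$-approximants, since its successful runs inject into those of $\T$.)

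The step I expect to be the real obstacle is the passage from ``pairwise non-equivalent'' to ``$k+1$ distinct outputs on one input''. Knowing merely that each non-equivalent pair is separated by \emph{some} context would not suffice, since different pairs could a priori require different contexts; it is the all-or-nothing dichotomy of Corollary~\ref{cor:equivalence-by-approximants} --- which holds precisely because $\T$ admits $\alpha$-approximants with $\alpha\ge c$ and the transitions involved share a flow --- that lets a single context separate all $k+1$ outputs at once, which is exactly what $k$-valuedness forbids. Everything else (the partition into $\normflows$ groups, the merging of equivalent transitions, and the check that merging preserves the realized relation) is routine.
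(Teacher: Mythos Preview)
Your proof is correct and follows exactly the paper's approach: partition the parallel transitions by flow (at most $\normflows$ groups), use Corollary~\ref{cor:equivalence-by-approximants} together with $k$-valuedness to bound each group to at most $k$ equivalence classes, and keep one representative per class. You spell out in more detail than the paper does why a \emph{single} context separates all $k+1$ outputs (the dichotomy of the corollary) and why replacing transitions by their representatives preserves the realized relation, but these are precisely the steps the paper leaves implicit.
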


\begin{proof}
By Corollary \ref{cor:equivalence-by-approximants}, 
$\T$ has at most $k$ pairwise non-equivalent transitions 
with the same input letter, the same source and target states,
and the same flow. Moreover equivalence of such transitions 
can be decided. 
We can then normalize $\T$ by removing in each equivalence class
all but one transitions with the same flow. Since $\T$ has at 
most $\normflows$ flows, the normalization results in an equivalent 
SST $\T'$ with edge ambiguity at most $\edgeambiguity$.
\end{proof}

The next section is devoted to prove a very similar result as above, but 
for all SST that can be obtained by shortcutting runs into single transitions, 
and that thus have arbitrary large capacity. This will be the main
technical ingredient for establishing the decidability of the equivalence 
problem for $k$-valued SST.

\section{Shortcut construction}\label{sec:shortcut}

Here we focus on a transformation of relations that absorbs the first
input letter when this is equal to a specific element, say $a\in\Sigma\setminus\{\dashv\}$.
Such a transformation maps any relation $R$ to the relation 
$R_a = \{(u,v) \mid (au,v)\in R\}$.
Observe that $R = R_\emptystr \,\cup\, \bigcup_{a\in\Sigma} R_a $,
where $R_\emptystr = R \:\cap\: (\{\dashv\}\times\Gamma^*)$.

It is easy to see that the class of relations realized by SST is
effectively closed under the transformation $R \mapsto R_a$. 
To prove this closure property, it is convenient to restrict,
without loss of generality, to SST with \emph{transient initial states}, 
namely, SST where no transition reaches an initial state.
Under this assumption, the closure property also preserves the 
state space (though some states may become useless), the set of 
registers, the property of being $k$-valued, as well as the $\alpha$-approximants,
if they are admitted by the original SST. 
However, the transformation does not preserves the capacity, which may increase.

\begin{restatable}{lemma}{LemShortcutClosure}\label{lem:shortcut-closure}
Given a flow-normalized SST $\T$ with transient initial states, 
and given a letter $a\in\Sigma\setminus\{\dashv\}$, one can construct 
an SST $\T_a$ with transient initial states such that
\begin{itemize}
\item $\dom{\T_a}=\{u\in\Sigma^* \mid au\in\dom{\T}\}$,
\item $\T_a$ on input $u$ produces the same outputs as $\T$ on input $au$.
\end{itemize}
Moreover, $\T_a$ has the same states and the same registers 
as $\T$; if $\T$ has capacity $c$, then $\T_a$ has
capacity $2c$; 
if $\T$ admits $\alpha$-approximants, then so does $\T_a$ 
(via the same annotation).
\end{restatable}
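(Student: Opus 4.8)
The plan is to build $\T_a$ by ``absorbing'' the first transition of $\T$ into a fresh initial state. Since $\T$ has transient initial states, no transition of $\T$ enters an initial state, so every run starting in $I$ has the shape $q_0 \trans{a/f} q_1 \trans{w/g} q_n$ where $q_0\in I$ and the remainder $q_1 \trans{w/g} q_n$ never revisits an initial state. The idea is to keep the same state set $Q$, but replace the transitions out of initial states: for each composite $q_0 \trans{a/f_0} q_1 \trans{b/f_1} q_2$ with $q_0\in I$, $b\in\Sigma$, we put into $\T_a$ a single transition $q_1 \trans{b / f_1\circ f_0} q_2$ (reading $b$, applying first $f_0$ then $f_1$), and we declare $q_1$ initial in $\T_a$. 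All transitions of $\T$ not leaving an initial state are kept verbatim, and the final states are unchanged. To preserve transient initial states we must check that no transition of $\T_a$ enters a new initial state $q_1$; this holds because in $\T$ the states reachable in one $a$-step from $I$ are targets of transitions, but a transition of $\T_a$ entering $q_1$ would come either from an old non-initial-origin transition of $\T$ (whose target $q_1$ was already the target of the $a$-transition, which is fine only if $\T$ still had a transition into $q_1$ --- but it does, so some care is needed here) or from an absorbed transition; to be safe one first relabels/copies so that the one-step $a$-successors of $I$ form a set disjoint from the targets of all other transitions, or equivalently one observes that the construction may be iterated only on trimmed SST where this is arranged. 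A cleaner route, and the one I would actually write, is: take $Q_a = Q$, $I_a = \{q_1 : \exists q_0\in I,\ q_0 \trans{a/f} q_1 \in E\}$, discard all states not reachable from $I_a$; the transient-initial-states property of $\T$ guarantees no edge of $E$ lands in $I$, hence after removing $I$ and re-rooting at its $a$-successors no edge lands in $I_a$ unless such an edge already existed in $\T$, in which case we duplicate that $a$-successor state into a transient copy.

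\textbf{Verifying the required properties.} Once the construction is fixed, the domain and output claims are immediate: successful runs of $\T_a$ on $u$ are in bijection with successful runs of $\T$ on $au$ (the bijection pairs $q_1\trans{u/\,\cdot\,}q_n$ with $q_0\trans{a/f_0}q_1\trans{u/\,\cdot\,}q_n$), and the induced register updates agree because $f_1\circ f_0$ composed with the rest equals $f_0\circ f_1\circ(\text{rest})$ --- the absorbed update $f_0$ is simply performed together with the first real step. Hence $\out$ is preserved and $\dom{\T_a}=\{u : au\in\dom{\T}\}$. The state set and register set are literally the same (up to the harmless duplication above). For capacity: a transition of $\T_a$ is either an old transition of $\T$, contributing at most $c$ letters, or an absorbed composite $f_1\circ f_0$; since $\T$ is copyless, $f_1\circ f_0$ is still copyless, and the number of output letters it adds is at most $(\text{letters in }f_0)+(\text{letters in }f_1)\le 2c$ --- here one uses that $f_1$, being copyless and non-erasing after flow normalization, does not duplicate the letters introduced by $f_0$. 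So $\T_a$ has capacity at most $2c$.

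\textbf{Preservation of $\alpha$-approximants --- the main point.} The delicate part is that $\T_a$ still admits $\alpha$-approximants via the \emph{same} annotation $\App_q = (\App_X^q,\App_Y^q)$ inherited from $\T$. For a state $q$ that is not a new initial state, every successful run $\rho_a$ of $\T_a$ through $q$ corresponds to a successful run $\rho$ of $\T$ through $q$ with $\val{\rho_a,i}=\val{\rho,i+1}$ at the matching positions (the run of $\T$ is one step longer, having the extra initial $a$-transition); since $\closure{\val{\rho,i+1}}=\App_q$ in $\T$, the same identity holds in $\T_a$, so \eqref{eq:approximants} is satisfied at $q$. For a new initial state $q_1$ of $\T_a$: its valuations there must be $\App_\emptystr$ on registers (every run of $\T_a$ starts at $q_1$ with all registers empty), and this matches $\App_{q_1}^X$ provided the annotation of $q_1$ in $\T$ already had $\App_X^{q_1}=(\closure{\val{\rho,1}})|_X$ for runs $\rho$ of $\T$; but $\val{\rho,1}=\val{\rho,0}\circ f_0 = (\text{all empty})\circ f_0$, whose register part is $\emptystr$ everywhere since $f_0$ maps empty registers to words over $\Gamma$ that are then the \emph{register} content --- wait, this is exactly where one must be careful: after one $a$-step the registers of $\T$ need not be empty. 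The resolution is that we do \emph{not} reuse the annotation of $q_1$ blindly on registers; instead, since in $\T_a$ register valuations at $q_1$ are genuinely all $\emptystr$, we re-annotate new initial states with $\App_X = \App_\emptystr$, leaving the gap-approximant $\App_Y$ unchanged (gaps at $q_1$ in $\T_a$ coincide with gaps at $q_1$ in $\T$, since the gap valuation depends only on the suffix of the run, which is identical). One then checks that with this modified annotation on the (finitely many, transient) new initial states, and the inherited annotation elsewhere, the transition compatibility condition of Proposition~\ref{prop:approximants} is met: for an absorbed transition $q_1\trans{b/f_1\circ f_0}q_2$ one needs $\App_X^{q_2}(x)=\closure{\App_X^{q_1}((f_1\circ f_0)(x))}$, i.e.\ $\closure{\App_\emptystr((f_1\circ f_0)(x))}$, which by Lemma~\ref{lem:approximated-concat} equals $\closure{\closure{\App_\emptystr(f_0(x'))\text{-substituted}}\text{ through }f_1}$ and this matches the value at $q_2$ that $\T$'s annotation already recorded after the two steps $f_0,f_1$. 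The hardest part of the write-up is thus not any single estimate but keeping the bookkeeping straight for the new initial states: showing that the register-approximant there is legitimately $\App_\emptystr$ while the gap-approximant is inherited, and that this mixed annotation propagates consistently along absorbed and ordinary transitions.
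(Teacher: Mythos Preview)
Your construction takes a wrong turn at the very first step, and this creates all the difficulties you then struggle with. You propose to make the $a$-successors of $I$ the new initial states of $\T_a$. But such a state $q_1$ is in general the target of transitions of $\T$ (it was not initial in $\T$, so nothing forbade this), hence it will not be transient in $\T_a$; and worse, $q_1$ can be visited both as an initial state (with all registers empty) and as an internal state (with the register valuation inherited from $\T$). A single $\alpha$-approximant cannot cover both situations, so $\T_a$ would fail to admit $\alpha$-approximants at $q_1$. Your proposed fixes---duplicating $q_1$ into a fresh transient copy, or re-annotating it with $\App_\emptystr$ on registers---each violate one of the explicit conclusions of the lemma (``same states'', respectively ``via the same annotation''), and the re-annotation alone is not even sound without duplication.

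The paper's construction avoids all of this by keeping the \emph{same} set $I$ of initial states and modifying only the outgoing transitions from $I$: for every pair $q_0 \trans{a/f} q \trans{b/f'} q'$ with $q_0\in I$ one adds $q_0 \trans{b/\,f\circ f'} q'$ to $\T_a$, and all transitions out of non-initial states are kept unchanged. Since no transition of $\T$ enters $I$, the same is true in $\T_a$, so initial states remain transient for free. The approximant claim then becomes a one-line observation: any successful run $\rho$ of $\T_a$ unfolds to a successful run $\rho'$ of $\T$ that visits exactly the same states with $\val{\rho,0}=\val{\rho',0}$ and $\val{\rho,i}=\val{\rho',i+1}$ for $i>0$, so Equation~\eqref{eq:approximants} transfers verbatim with the \emph{same} annotation. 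No re-annotation, no duplication, no bookkeeping for ``new'' initial states is needed. (A side remark: your composite update is written $f_1\circ f_0$; with the convention used in the paper, where $\val{\rho,i+1}=\val{\rho,i}\circ f_{i+1}$, the correct order is $f_0\circ f_1$.)
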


\begin{proof}
The construction is rather straightforward and boils down to shortcutting
the first $a$-labeled transition in every successful run. 
Given $\Tfulln$, we define $\Tfulln[T_a]<Q><U'><I><E'><F>$,
where $U'=U \,\cup\, U\circ U$, $\circ$ denotes the functional composition,
and $E'$ contains the following transitions:
\begin{itemize}
\item $q_0\trans{b / f\circ f'} q'$, if $E$ contains some transitions 
      $q_0\trans{a / f} q \trans{b / f'} q'$, with $q_0$ initial state and $b\in\Sigma$,
\item $q\trans{b / f} q'$, if $E$ contains a transition $q\trans{b/f} q'$, with 
      $q$ is not initial and $b\in\Sigma$.
\end{itemize}
Note that, thanks to the assumption that every input of an SST ends with the special marker $\dashv$,
there is no final state in $\T$ that is a successor of an initial state along an 
$a$-labeled transition (unless of course $a=\dashv$, which we assumed to be not the case). 
This essentially means that any initial $a$-labeled transition
can be absorbed into the subsequent transitions, as precisely done in the above construction.

It is routine to check that $\T_a$ satisfies the desired claims.
Here we only show that $\T_a$ admits the same $\alpha$-approximants as $\T$.
This follows from the fact that every successful run $\rho$ of $\T_a$
of the form
\[
  \rho: ~ (q_0,\App_{0,X},\App_{0,Y}) 
          \trans{b/f}[\T_a] (q_1,\App_{1,X},\App_{1,Y}) 
          \trans{u/g}[\T_a] (q_2,\App_{2,X},\App_{2,Y})
\]
can be turned to a successful run $\rho'$ of $\T$ of the form
\[
  \rho': ~ (q_0,\App_{0,X}) 
           \trans{a/f_a}[\T] (q',\App'_X,\App'_Y) 
           \trans{b/f_b}[\T] (q_1,\App_{1,X},\App_{1,Y}) 
           \trans{u/g}[\T] (q_2,\App_{2,X},\App_{2,Y})
\]
where $f_a\circ f_b=f$.
In particular, we have
$\val{\rho,0}=\val{\rho',0}$ and 
$\val{\rho,i}=\val{\rho',i+1}$ for all positions $i>0$,
and hence both $\T$ and $\T_a$ satisfy Equation \eqref{eq:approximants}.
\end{proof}

The above construction can be applied inductively to compute 
an SST for any \emph{left quotient} $R_u=\{(v,w) \mid (u\,v,w) \in R\}$ of $R$. 
For $u=a_1\dots a_n\in(\Sigma\setminus\{\dashv\})^*$,
we let $\T_u = (\dots(\T_{a_1})_{a_2}\dots)_{a_n}$.

\medskip
The last and most technical step consists in proving that edge 
ambiguity can be uniformly bounded in every SST $\T_u$, provided 
that the initial SST $\T$ is finite-valued, flow-normalized, and 
admits $\alpha$-approximants for a large enough $\alpha$.
More precisely, we aim at establishing that, for $\alpha$ much larger
than the capacity of $\T$, the notion of $\alpha$-approximant,
besides satisfying Equation \eqref{eq:approximants}, 
also satisfies the following property:
\begin{equation}\label{eq:tight-approximants}
\begin{array}{llr@{~~}c@{~~}l}
  \forall \beta\ge\alpha
  &
  \exists \r ~:
  &\qquad
  \closure[\beta]{(\val{\r\pos})} &=& \closure{(\val{\r\pos})}
  \\[1ex]
  \forall \beta\ge\alpha
  &
  \exists \s ~:
  &\qquad
  \closure[\beta]{(\val{\pos\s})} &=& \closure{(\val{\pos\s})}.
\end{array}
\end{equation}
In this case we say that the $\alpha$-approximants are \emph{tight}.

Intuitively, the above property can be explained as follows.
When considering an SST with states annotated with $\alpha$-approximants,
it may happen that for some larger parameter $\beta$ some initial 
runs induce register valuations at a state $(q,\App_X,\App_Y)$ whose
$\beta$-approximants are strictly included in $\App_X$
(e.g.~possibly entailing new periodicities).
These runs should be thought of as exceptional cases,
and there is a way of pumping them so as to restore
the equality between $\App_X$ and the induced $\beta$-approximant. 

Let us first see how tight approximants are used.
The theorem below assumes that there is an SST $\T'$ with tight approximants
(later we will show how to compute such an SST), and bounds the edge 
ambiguity of the SST $\T'_u$ that realizes a left quotient of $T'$. 

\begin{restatable}{theorem}{ThmEdgeAmbiguityShortcut}\label{thm:edge-ambiguity-shortcut}
Let $\T'$ be a $k$-valued, flow-normalized SST realizing $R$, 
with transient initial states and tight $\alpha$-approximants.
For every $u\in\Sigma^*$, one can construct an SST $\T'_u$ 
realizing $R_u$, with the same states and the same registers
as $\T'$, and with edge ambiguity at most $\edgeambiguity$.
\end{restatable}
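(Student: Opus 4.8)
The plan is to combine the shortcut closure lemma (Lemma~\ref{lem:shortcut-closure}), which tells us that $\T'_u$ can be built with the same states, registers, and $\alpha$-approximants as $\T'$ (only the capacity grows, to $2^{|u|}c$), with the structural dichotomy of Corollary~\ref{cor:equivalence-by-approximants}. The point is that $\T'_u$ is again flow-normalized, $k$-valued, and admits $\alpha$-approximants; moreover its states are exactly the states of $\T'$ (annotated with the same $\alpha$-approximants), so a transition $\t$ of $\T'_u$ between states $(q,\App_X,\App_Y)$ and $(q',\App'_X,\App'_Y)$ with a given flow carries exactly the data to which Lemma~\ref{lem:equivalence-by-approximants} applies. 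The subtlety is that Corollary~\ref{cor:equivalence-by-approximants} needs $\alpha \ge c$, where $c$ is the \emph{capacity of the SST under consideration} --- and the capacity of $\T'_u$ is no longer $\le\alpha$. So I cannot invoke the corollary verbatim; I must re-examine the proof of Lemma~\ref{lem:equivalence-by-approximants}(2) and explain why tightness compensates for the blown-up capacity.

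First I would fix $u$ and apply Lemma~\ref{lem:shortcut-closure} inductively to obtain $\T'_u$, recording that it has the same states, registers, $\alpha$-approximants, and is $k$-valued and flow-normalized. Next, for any pair of transitions $\t_1,\t_2$ of $\T'_u$ sharing source, target, input letter, and flow, I want the same dichotomy as in Corollary~\ref{cor:equivalence-by-approximants}: either they agree on all contexts or disagree on all contexts. Claim~(1) of Lemma~\ref{lem:equivalence-by-approximants} is unconditional in $\alpha$ and still gives one direction. For claim~(2) the proof splits the equation $\eff{f_1}=\eff{f_2}$ into blocks separated by words over $\Gamma$ of length at most the capacity and, whenever two such separating words $s,s'$ disagree, derives that the intervening variable $z$ satisfies a conjugacy equation $w\,z = z\,w'$ with $|w|\le c$, forcing $z$'s value into a periodic language $w^*w_1$ of period $\le c\le\alpha$, hence $\App(z)\subseteq w^*w_1$. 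For $\T'_u$ the separating words can be as long as $2^{|u|}c$, so the period $|w|$ is no longer bounded by $\alpha$. This is exactly where tightness enters: by \eqref{eq:tight-approximants}, for every $\beta\ge\alpha$ there is an initial run $\r$ (resp.\ final run $\s$) whose $\beta$-approximant on registers (resp.\ gaps) equals the $\alpha$-approximant $\App$. Taking $\beta$ larger than all the block lengths occurring in $\eff{f_1},\eff{f_2}$, I can pick such a witnessing run; along it, Fine--Wilf (applied to the value of $z$, whose length exceeds $|w|$ plus the bound, because the $\beta$-approximant is not collapsed) forces $w$ to be a period of $\App(z)$ anyway, so again $\App(z)\subseteq w^*w_1$ and the induction on the number of variables proceeds exactly as before. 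Thus the dichotomy of Corollary~\ref{cor:equivalence-by-approximants} holds for $\T'_u$ despite its large capacity.

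Granting the dichotomy, the conclusion follows as in Theorem~\ref{thm:edge-ambiguity}. Fix a source state, a target state, and an input letter, and consider the finitely many flows realized between them in $\T'_u$. For each fixed flow, the transitions with that flow fall into equivalence classes under the equivalence of transitions, and two transitions in different classes disagree on \emph{every} context; hence no input word of $R_u$ can be assigned more than one output per equivalence class by choosing among them, which means $\T'_u$ cannot have more than $k$ pairwise non-equivalent transitions sharing source, target, letter, and flow (otherwise a word in $\dom{\T'_u}$ would witness more than $k$ outputs, contradicting $k$-valuedness). Removing all but one transition from each equivalence class preserves the realized relation $R_u$, and leaves at most $k$ transitions per flow. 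Since $\T'_u$ is flow-normalized with $m$ registers it has at most $\normflows$ flows, so the resulting SST has edge ambiguity at most $\edgeambiguity$, with the same states and registers as $\T'$.

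The main obstacle, as indicated, is reconciling claim~(2) of Lemma~\ref{lem:equivalence-by-approximants} with the unbounded capacity of $\T'_u$: the naive hypothesis $\alpha\ge c$ fails, and one has to argue that tightness of the $\alpha$-approximants lets one still recover the needed periodicity constraints on register/gap valuations --- essentially replaying the Fine--Wilf step of that proof, but using a sufficiently large $\beta$ and a tightness-witnessing run rather than a short separating word. Everything else (the inductive deletion of redundant transitions, the counting of flows) is routine once this dichotomy is in place.
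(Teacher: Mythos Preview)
Your overall strategy matches the paper's: build $\T'_u$ via Lemma~\ref{lem:shortcut-closure}, show that among transitions sharing source, target, letter and flow there are at most $k$ non-equivalent ones, then prune and count flows. But two points need correction.

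First, you overclaim the dichotomy. Your argument does \emph{not} establish that non-equivalent transitions of $\T'_u$ ``disagree on every context''; once the capacity of $\T'_u$ exceeds $\alpha$, the second claim of Lemma~\ref{lem:equivalence-by-approximants} genuinely can fail for a badly chosen context (a short gap valuation may accidentally satisfy a conjugacy $w\,z=z\,w'$ with $|w|>\alpha$ even though $\App'(z)\not\subseteq w^*w_1$). Fortunately you do not need the full dichotomy. What suffices---and what the paper proves---is a \emph{single} context $(\r,\s)$ on which all $\binom{k+1}{2}$ pairs of non-equivalent transitions simultaneously produce different outputs; that one input then has $k{+}1$ outputs, contradicting $k$-valuedness.

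Second, your detour through Fine--Wilf is unnecessary and, as written, does not yield $\App(z)\subseteq w^*w_1$. The paper's route is cleaner: Lemma~\ref{lem:equivalence-by-approximants} is stated for an \emph{arbitrary} parameter, so instead of reopening its proof you apply it as a black box with parameter $\beta=\max(\alpha,c_u)$, where $c_u$ is the capacity of $\T'_u$. Non-equivalence of $\t_i,\t_j$ gives, by the contrapositive of claim~(1), a violation of $\eff{f_i}=\eff{f_j}$ on some valuation in $\App_X\uplus\App'_Y$. Tightness (equation~\eqref{eq:tight-approximants}) then furnishes a context $(\r,\s)$ whose $\beta$-approximants coincide with $\App_X,\App'_Y$; since $\beta\ge c_u$, claim~(2) applied at parameter $\beta$ to \emph{this} context gives $\out(\r\,\t_i\,\s)\neq\out(\r\,\t_j\,\s)$ for every pair $i\neq j$ at once. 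That is the whole argument---no replay of the inductive block analysis is required.
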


\begin{proof}
The crux is to show that the SST $\T'_u$ obtained from
Lemma \ref{lem:shortcut-closure} has at most $k$ pairwise 
non-equivalent transitions with the same flow (for any given
source/target state and label). 
Once this is proven, one can 
proceed as in the proof of Theorem \ref{thm:edge-ambiguity}, 
by removing 
all but one transition with the same flow in each 
equivalence class. 
By way of contradiction, assume that
$\T'_u$ has $k+1$ pairwise non-equivalent transitions
$\t_1,\dots,\t_{k+1}$ with the same flow.
Since $\T'$ admits tight $\alpha$-approximants, 
by Lemma \ref{lem:shortcut-closure} we know that 
the source and target state, respectively, of the previous transitions
are annotated with tight $\alpha$-approximants,
say $(\App_X,\App_Y)$ and $(\App'_X,\App'_Y)$, respectively.

We begin by applying the first claim of Lemma \ref{lem:equivalence-by-approximants},
implying that the equation $\eff{f_i} = \eff{f_j}$ is violated 
for some valuation $\nu\in\App_X\uplus\App'_Y$.
Then, we let $\beta = \max(\alpha,|u|c)$ and use 
Equations \eqref{eq:approximants} and \eqref{eq:tight-approximants} 
to get a context $(\r,\s)$ such that
$\closure[\beta]{(\val{\r\pos})} = \App_X$ and
$\closure[\beta]{(\val{\pos\s})} = \App'_Y$.
Finally, knowing that $\beta$ is at least the capacity of $\T'_u$,
we apply the second claim of Lemma \ref{lem:equivalence-by-approximants}
to get $\out(\r\,\t_i\,\s) \neq \out(\r\,\t_j\,\s)$,
thus witnessing non-equivalence of all pairs of transitions 
$\t_i,\t_j$ at the same time. This contradicts 
the assumption that $\T'_u$ (and hence $\T'$) is $k$-valued.
\end{proof}

Now, let $\T$ be a flow-normalized SST with $m$ registers, capacity $c$, and 
trimmed state space $Q$. Below, we show how to compute, with the help
of Proposition \ref{prop:approximants}, an SST $\T'$ equivalent to $\T$ 
that admits tight approximants.
For simplicity, we will mostly focus on register valuations induced by initial 
runs, even though similar results can be also stated for gap valuations induced 
by final runs.
We begin by giving a few technical results based on pumping arguments. 
We say that register $x$ is \emph{productive} along $\r$ if the 
update induced by $\r$ maps $x$ to a word that contains at least 
one letter from $\Gamma$. We also recall that a loop of a run
needs to induce a flow-idempotent update.

\begin{restatable}{lemma}{LemPumping}\label{lem:pumping}
If $\r = \r_1 \, \g\, \r_2$ is an initial run of $\T$, 
with $\gamma$ loop, then for every $n>0$ the pumped run 
$\r^{(n)} = \r_1\,\g^n\,\r_2$ induces valuations 
$\val{\rho^{(n)}\pos}$ mapping any register $x$ to a word of the form
$u_0 \, v_1^{n-1} \, u_1 \, \dots \, v_{2m}^{n-1} \, u_{2m}$, where 
$u_0,\dots,u_{2m},v_1,\dots,v_{2m}\in\Gamma^*$ depend on $\r$ and $x$, 
but not on $n$.
Moreover, we have $v_i\neq\emptystr$ for some $i$ if there is a register 
$x'$ that is productive along $\g$ and that flows into $x$ along 
$\r_2$.
\end{restatable}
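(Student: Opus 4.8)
\emph{Proof plan.}\ The plan is to unwind the pumped run into a single expression and then exploit the strong structural constraints that flow-idempotence together with flow-normalization put on the loop. Write $f_1,g,f_2$ for the register updates induced by $\r_1,\g,\r_2$, and let $\nu_1=\val{\r_1\pos}$ be the register valuation reached at the end of $\r_1$ (extended, as usual, to the morphism that is the identity on $\G$). Since the composed update of $\rho^{(n)}=\r_1\,\g^n\,\r_2$ is $f_1\circ g^n\circ f_2$, we have $\val{\rho^{(n)}\pos}(x)=\nu_1\big(g^n(f_2(x))\big)$ for every register $x$, so it suffices to understand the words $g^n(z)$ for a single register $z$, then substitute into $f_2(x)$ and apply $\nu_1$. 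Note that $g$, as a composition of non-erasing, non-permuting (and copyless) updates, is again non-erasing, non-permuting and copyless, and it is flow-idempotent since $\g$ is a loop.

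First I would determine the shape of $g$. Its flow, being that of a copyless non-erasing update, is a function; being non-permuting it is monotone with image an initial segment $\{x_1,\dots,x_\ell\}$; and being flow-idempotent it is an idempotent function. Such a map necessarily fixes $x_1,\dots,x_\ell$ and sends every $x_j$ with $j>\ell$ to $x_\ell$, which forces the normal form
\[
  g(x_i)=p_i\,x_i\,s_i\ \ (i<\ell),\qquad
  g(x_\ell)=p_\ell\,x_\ell\,s_\ell\,x_{\ell+1}\,s_{\ell+1}\cdots x_m\,s_m,\qquad
  g(x_i)=w_i\ \ (i>\ell),
\]
with all the $p$'s, $s$'s and $w$'s in $\G^*$. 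A routine induction then yields
\[
  g^n(x_i)=p_i^{\,n}\,x_i\,s_i^{\,n}\ \ (i<\ell),\qquad
  g^n(x_\ell)=p_\ell^{\,n}\,\big(x_\ell\,s_\ell\,x_{\ell+1}\,s_{\ell+1}\cdots x_m\,s_m\big)\,\tau^{\,n-1},\qquad
  g^n(x_i)=w_i\ \ (i>\ell),
\]
where $\tau=s_\ell\,w_{\ell+1}\,s_{\ell+1}\,w_{\ell+2}\cdots w_m\,s_m\in\G^*$. The crucial point is that the entire $n$-dependence sits in powers of plain words over $\G$.

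For the first assertion, write $f_2(x)=\gamma_0\,x_{j_1}\,\gamma_1\cdots x_{j_r}\,\gamma_r$ with $\gamma_t\in\G^*$ and $r\le m$ (using that $f_2$ is non-permuting), and plug the formulas above into $g^n(f_2(x))=\gamma_0\,g^n(x_{j_1})\,\gamma_1\cdots g^n(x_{j_r})\,\gamma_r$. Each $x_{j_t}$ contributes at most two $\G^*$-factors carrying the exponent $n$ --- namely $p_{j_t},s_{j_t}$, or $p_\ell,\tau$, or none, according to $j_t<\ell$, $j_t=\ell$, $j_t>\ell$ --- so, rewriting $p^{\,n}=p\cdot p^{\,n-1}$ and regrouping, $g^n(f_2(x))$ takes the shape $W_0\,z_1^{\,n-1}\,W_1\cdots z_{2m}^{\,n-1}\,W_{2m}$ with $z_i\in\G^*$ and $W_i\in(X\uplus\G)^*$ (there are at most $2m$ pumped blocks since $r\le m$; pad with empty $z_i$'s). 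Applying $\nu_1$, which is the identity on $\G$, leaves every $z_i$ intact and turns each $W_i$ into a word over $\G$; setting $u_i=\nu_1(W_i)$ and $v_i=z_i$ gives the claimed factorization, with all data depending on $\r$ and $x$ but not on $n$.

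Finally, for the moreover clause I would trace where the letters emitted by $\g$ land. If $x'$ is productive along $\g$ then $g(x')$ contains a letter; since under $g$ the register $x'$ flows into a single register $x_i$ --- with $i=x'$ when $x'\le\ell$, and $i=\ell$ when $x'>\ell$ --- that letter forces one of the $\G^*$-factors attached to $x_i$ in $g^n(x_i)$ to be non-empty: one of $p_{x'},s_{x'}$ if $x'<\ell$, one of $p_\ell,\tau$ if $x'=\ell$, and $\tau\ne\emptystr$ if $x'>\ell$ (because then $w_{x'}$ occurs as a factor of $\tau$). Since $x'$ flows into $x$ along $\r_2$, this occurrence is not erased when we build $g^n(f_2(x))$, and $\nu_1$ deletes no $\G$-letters, so the non-empty factor survives as some $v_i$. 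I expect the structural description of flow-idempotent updates --- and, within the moreover clause, the bookkeeping for the registers that the loop collapses onto $x_\ell$, where the productivity is not accumulated in the register itself but is absorbed into the pumping word $\tau$ --- to be the part requiring the most care; the remaining steps are routine substitution.
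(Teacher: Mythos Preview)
Your overall strategy matches the paper's: characterize the shape of the flow-idempotent update $g$, compute $g^n$ explicitly, then push through $f_2$ and $\nu_1$. However, your structural description of $g$ is wrong, and this is a genuine gap.

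You assert that the flow of $g$, being a monotone idempotent function on $\{x_1,\dots,x_m\}$, has image an initial segment $\{x_1,\dots,x_\ell\}$, and from this you derive the normal form $g(x_i)=p_i\,x_i\,s_i$ for $i<\ell$, $g(x_\ell)$ absorbing the whole tail, and $g(x_i)\in\Gamma^*$ for $i>\ell$. But non-permuting only forces the flow to be non-crossing and the \emph{source} registers occurring in $g(\chi)$ to form a prefix; it says nothing about the image of the flow map. A concrete counterexample with $m=3$ is $g(x_1)=x_1x_2$, $g(x_2)=a$, $g(x_3)=x_3$: this is copyless, non-erasing, non-permuting, and its flow ($1\mapsto1$, $2\mapsto1$, $3\mapsto3$) is idempotent with image $\{x_1,x_3\}$, not an initial segment. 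Your normal form cannot express this $g$.

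The correct picture, stated as the paper's first claim, is that the registers split into consecutive \emph{intervals} $X_1<\dots<X_k$, each with a sink $x_{c_j}\in X_j$ into which all of $X_j$ flows; the sink need not sit at an end of its interval (the paper's worked example has sink $x_5$ inside $X_2=\{x_4,\dots,x_7\}$). With this description one obtains, for every sink, $g^n(x_{c_j})=L_j^{\,n-1}\,g(x_{c_j})\,R_j^{\,n-1}$ with $L_j,R_j\in\Gamma^*$, and $g^n(x_i)$ a fixed word in $\Gamma^*$ for non-sinks. From there your substitution through $f_2$ and $\nu_1$ goes through unchanged, and your bound of $2m$ pumping blocks survives (at most two per register occurring in $f_2(x)$); you only need to replace the single-interval picture by the multi-interval one.
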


\begin{proof}
We begin by observing a useful property of updates induced by loops:

\begin{claim}\label{claim:idempotent-flow}
If $\g$ is a loop,
then the list of registers $x_1,\dots,x_m$ 
can be partitioned into intervals $X_1<\dots<X_k$
such that for every $1\le i\le k$, there is $x'\in X_i$
so that every $x\in X_i$ flows into $x'$ along $\g$.
\end{claim}

\begin{claimproof}[Proof of claim]
It suffices to verify that idempotent flows always have shapes similar
to the flow below, 
where $X_1 = \{x_1,x_2,x_3\}$ and $X_2 = \{x_4,x_5,\dots,x_7\}$:
\[
\begin{tikzpicture}
\begin{scope}[yscale=0.4]
\draw (-0.1,0) node [left] {\small $x_1$};
\draw (-0.1,-2) node [left] {\small $x_3$};
\draw (-0.1,-3) node [left] {\small $x_4$};
\draw (-0.1,-4) node [left] {\small $x_5$};
\draw (-0.1,-6) node [left] {\small $x_7$};
\draw (1.1,0) node [right] {\small $x_1$};
\draw (1.1,-2) node [right] {\small $x_3$};
\draw (1.1,-3) node [right] {\small $x_4$};
\draw (1.1,-4) node [right] {\small $x_5$};
\draw (1.1,-6) node [right] {\small $x_7$};
\draw (0,0) node [dot] (1) {};
\draw (0,-1) node (2) {$\vdots$};
\draw (0,-2) node [dot] (3) {};
\draw (0,-3) node [dot] (4) {};
\draw (0,-4) node [dot] (5) {};
\draw (0,-5) node (6) {$\vdots$};
\draw (0,-6) node [dot] (7) {};
\draw (1,0) node [dot] (1') {};
\draw (1,-1) node (2') {$\vdots$};
\draw (1,-2) node [dot] (3') {};
\draw (1,-3) node [dot] (4') {};
\draw (1,-4) node [dot] (5') {};
\draw (1,-5) node (6') {$\vdots$};
\draw (1,-6) node [dot] (7') {};
\draw [arrow] (1) to (3');
\draw [arrow] (3) to (3');
\draw [arrow] (4) to (5');
\draw [arrow] (5) to (5');
\draw [arrow] (7) to (5');
\end{scope}
\end{tikzpicture}
\qedhere
\]
\end{claimproof}

The next claim allows to simplify
the statement of the lemma by assuming that $\r_2$ 
is empty. Its proof is straightforward, since $\T$ is non-erasing and
non-permuting. 

\begin{claim}\label{claim:flows-into}
If $\rho=\rho_1\,\rho_2$ is an initial run 
and $x_1,\dots,x_k$ flow into $x$ along $\rho_2$, 
then $\val{\rho\pos}(x)$ contains the factors 
$\val{\rho_1\pos}(x_1)$, \dots, $\val{\rho_1\pos}(x_k)$ 
in this precise order, possibly interleaved by 
other words that depend only on $\r_2$.
Moreover, if any of the $x_i$'s is productive 
along $\rho_1$, then so is $x$ along $\rho$.
\end{claim}

It now remains to prove that:

\begin{claim}\label{claim:pumping-idempotent}
If $\r = \r_1 \, \g$ is an initial run, 
with $\g$ loop, then for every $n>0$
the pumped run $\r^{(n)} = \r_1\,\g^n$ 
induces valuations $\val{\rho^{(n)}\pos}$ 
mapping any register $x$ to a word of the form
$v_1^{n-1} \, u \, v_2^{n-1}$,
where $u,v_1,v_2\in\Gamma^*$ depend 
on $\r$ and $x$, but not on $n$.
Moreover, $v_1$ or $v_2$ is non-empty if $x$ 
is productive along $\g$.
\end{claim}

We use  claim~\ref{claim:idempotent-flow} and 
for simplicity we work on the example provided there.
Let us assume on the example that the updates are as follows
(recall that $\Tt$ is non-permuting):
\begin{itemize}
\item $f(x_3) = t_1 \, x_1 \, t_2 \, x_2 \, t_3 \, x_3 \, t_4$,
\item $f(x_5) = t_5 \, x_4 \, t_6 \, x_5 \, t_7 \, x_6 \, t_8 \,  x_7 \, t_9$,
\item $f(x_j) = t'_j$, for all remaining $j \neq 3,5$.
\end{itemize}
Let $\nu=\val{\rho_1\pos}$ be the register valuation induced by the 
prefix $\rho_1$. The valuation $\val{\rho^{(n)}\pos} = \nu\circ f^n$ 
maps e.g.
\begin{itemize}
\item $x_3$ to 
      $\big(t_1 \, t'_1 \, t_2 \, t'_2\, t_3\big)^{n-1} ~ 
       \big(t_1 \, \nu(x_1) \, t_2 \, \nu(x_2)\, t_3\big) ~ \nu(x_3) ~~ 
       \big(t_4\big)^n$,
\item $x_5$ to 
      $\big(t_5 \, t'_4 \, t_6\big)^{n-1} ~ \big(t_5 \, \nu(x_4) \, t_6\big) ~ 
       \nu(x_5) ~ 
       \big(t_7 \, \nu(x_6) \, t_8 \, \nu(x_7) \, t_9\big) ~~
       \big(t_7 \, t'_6 \, t_8 \, t'_7 \, t_9\big)^{n-1}$.
\end{itemize}
The claim is satisfied e.g.~for $x=x_3$ by setting
$v_1 = t_1 \, t'_1 \, t_2 \, t'_2\, t_3$, 
$u= \big(t_1 \, \nu(x_1) \, t_2 \, \nu(x_2)\, t_3\big) ~ \nu(x_3) ~ t_4$, 
and
$v_2 = t_4$.
\end{proof}

Given a tuple of pairwise disjoint loops 
$\bar\g = \g_1,\dots,\g_\ell$ in a run $\r$, 
we write $\r' \yields[\bar\g] \r$ when $\r'$ is obtained from $\r$ by 
{\sl simultaneously} pumping $n$ times every loop $\g_i$, for some $n>0$.
When using this notation, we often omit the subscript $\bar\g$; 
in this case we tacitly assume that $\bar\g$ is {\sl uniquely determined} from $\r$. 
In this way, when writing, for instance, $\r',\r''\yields\r$, 
we will know that $\r',\r''$ are obtained by pumping the same loops of $\r$.
We also say that a property on runs holds \emph{for all but finitely many $\r'\yields\r$}
if it holds on runs $\r'$ that are obtained from $\r$ by pumping $n$ times the loops
in a fixed tuple $\bar\g$, for all $n>n_0$ and for a sufficiently large $n_0$.

\begin{restatable}{lemma}{LemLengthAndPeriod}\label{lem:length-and-period}
Let $\r$ be an initial run and $x$ a register.
If $\val{\r\pos}(x)$ has length (resp.~period) larger than $\alpha=\bound$, 
then for every $\beta\ge\alpha$ and for all but finitely many $\r' \yields \r$, 
$\val{\r'\pos}(x)$ has length (resp.~period) larger than $\beta$.
\end{restatable}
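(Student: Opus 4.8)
The plan is to reduce the statement to the combinatorial description of pumped valuations provided by Lemma~\ref{lem:pumping}, and then reason separately about length and about period.

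\textbf{Setup.} Fix an initial run $\r$ and a register $x$ such that $\val{\r\pos}(x)$ has length (resp.\ period) larger than $\alpha = \bound$. First I would locate, inside $\r$, enough loops to pump. Since $\T$ is flow-normalized with $m$ registers, it has at most $\normflows$ distinct flows, so any run longer than $|Q|\cdot\normflows$ contains a factor that returns to the same state with a flow-idempotent update, i.e.\ a loop. More carefully, a standard Ramsey/pigeonhole argument on pairs (state, flow) shows that if $\val{\r\pos}(x)$ is long —and the bound $\alpha = \bound = m\,c\,|Q|\,2^{3\normflows}$ is exactly tailored for this— then along $\r$ there must be a loop $\g$ such that some register productive along $\g$ flows into $x$ along the suffix of $\r$ after $\g$; otherwise the content of $x$ would be built up only from the finitely many ``short'' transitions and bounded in length by $\alpha$. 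This is the step I expect to be the main obstacle: extracting a \emph{productive} loop whose productivity actually reaches $x$, with the counting made precise against the stated bound $\bound$, and handling the period case (where one needs a loop whose pumping genuinely breaks periodicity, not merely lengthens the word).

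\textbf{Length case.} Once such a productive loop $\g$ (as part of the fixed tuple $\bar\g$ determined by $\r$) is identified, apply Lemma~\ref{lem:pumping}: for the pumped runs $\r^{(n)} = \r_1\,\g^n\,\r_2$ we get $\val{\r^{(n)}\pos}(x) = u_0\,v_1^{n-1}\,u_1\cdots v_{2m}^{n-1}\,u_{2m}$ with some $v_i\neq\emptystr$, because a register productive along $\g$ flows into $x$ along $\r_2$. Hence $|\val{\r^{(n)}\pos}(x)| \ge (n-1)\,|v_i| \to \infty$, so for every $\beta\ge\alpha$ there is $n_0$ with $|\val{\r^{(n)}\pos}(x)| > \beta$ for all $n > n_0$; this is precisely ``for all but finitely many $\r'\yields\r$''. (When several loops are pumped simultaneously this only increases the length, so nothing is lost.)

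\textbf{Period case.} Suppose $\val{\r\pos}(x)$ has period $> \alpha$. Here I would argue that the same productive loop extraction applies, but additionally ensure the pumping increases the period without bound. The word $\val{\r^{(n)}\pos}(x) = u_0\,v_1^{n-1}\,u_1\cdots v_{2m}^{n-1}\,u_{2m}$ depends on $n$ only through the exponents; if its period stayed bounded by some fixed $p$ for infinitely many $n$, then by the Fine--Wilf theorem, since the word becomes arbitrarily long while $p$ is fixed, all the $u_i$ and $v_i$ would share a common period $\le p$, forcing the original word $\val{\r\pos}(x)$ (the $n=1$ instance, after absorbing the $v_i^{0}$) to also have period $\le p \le \alpha$, contradicting the hypothesis. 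Therefore the period of $\val{\r^{(n)}\pos}(x)$ tends to infinity, and again for every $\beta\ge\alpha$ it exceeds $\beta$ for all but finitely many $n$. Combining the two cases gives the lemma; the symmetric statement for gap valuations induced by final runs follows by the dual argument, using that gap updates $\dual f$ are likewise non-erasing and non-permuting.
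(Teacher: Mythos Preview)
Your overall plan and the length case match the paper. The period case, however, has a real gap.

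You argue: if the period of $\val{\r^{(n)}\pos}(x)$ stays bounded by some $p$ for infinitely many $n$, then by Fine--Wilf ``all the $u_i$ and $v_i$ would share a common period $\le p$'', forcing the $n{=}1$ instance $u_0u_1\cdots u_{2m}=\val{\r\pos}(x)$ to have period $\le p\le\alpha$. Both inferences fail. Knowing that each $u_i$, as a factor of a word of period $\le p$, itself has period $\le p$ says nothing about the period of the concatenation $u_0u_1\cdots u_{2m}$. And nothing in your argument bounds $p$ by $\alpha$; you only assumed the sequence of periods does not tend to infinity, so $p$ could be huge.

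What the paper does differently is to engineer the loop selection so that each $v_i$ is \emph{short}. The loops $\g_1,\dots,\g_\ell$ are extracted by an eager left-to-right scan (using the Factorization Forest theorem, not just pigeonhole on pairs (state, flow), to locate idempotent factors), guaranteeing that each loop contributes at most $\boundnom$ letters to the valuation of $x$; in particular every $|v_i|\le\alpha$. Then, for the infinitely many $n$ sharing a common period $p'$, Fine--Wilf applied inside the long blocks $v_i^{\,n-1}$ (which also have period $|v_i|$) shows that the pumped word actually has period $p''=\gcd\{p',|v_i|\}_{|v_i|>0}$, and now $p''\le |v_i|\le\alpha$ is exactly where the shortness of the $v_i$ pays off. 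The final step, which your sketch omits, is the transfer back to $\r$: $\val{\r\pos}(x)$ is obtained from $\val{\r^{(n)}\pos}(x)$ by deleting $n-1$ copies of each $v_i$, i.e.\ factors whose lengths are multiples of $p''$; such deletions preserve having $p''$ as a period, so $\val{\r\pos}(x)$ has period at most $p''\le\alpha$, contradicting the hypothesis. Your argument has neither the bound $|v_i|\le\alpha$ nor this deletion-preserves-period step, and without them the contradiction does not close.

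A secondary point: by convention the tuple $\bar\g$ in $\yields$ is determined by $\r$ alone, so one factorization must serve \emph{all} registers with large valuation simultaneously (this is how the lemma is later used, taking a conjunction over registers). The paper's iterative construction, which processes a set $Z$ of registers and finds one loop per register while advancing through $\r$, does exactly this; your extraction of a single loop tailored to the given $x$ would not.
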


\begin{proof}
The first step consists in identifying the appropriate
loops $\g_1,\dots,\g_\ell$ inside the initial run $\r$. 
More precisely, we need to factorize $\r$ as
\[
  \r ~=~ \r_0 ~ \g_1 ~ \r_1 ~ \dots ~ \g_\ell ~ \r_\ell.
\]
where $\g_1,\dots,\g_\ell$ are loops, in such a way 
that every register with large enough induced valuation 
is productive along at least one loop.
In fact, for technical reasons related to periodicity,
we need to also guarantee that the selected loops only 
contribute for a bounded portion to the valuation of a register,
precisely, with at most $\boundnom$ letters.

For every register $z$ and every position $i$ of $\r$, 
let $X_{i,z}$ be the set of registers that flow into $z$ 
along the suffix of $\r$ that starts at position $i$.
Further let $N_{i,z} = \sum_{x\in X_{i,z}} |\val{\r,i}(x)|$,
and let $D_{i,j,z} = N_{j,z} - N_{i,z}$ for all $i\le j$.
To find a productive loop for $z$
between positions $i\le j$, it suffices to have a 
large enough value $D_{i,j,z}$:

\begin{claim}\label{claim:bound}
If $D_{i,j,z} > \boundnom$, then $\r$ contains a
loop $\g$ between positions $i$ and $j$, and 
there is a register $x$ that is productive 
along $\g$ and that flows into $z$ 
along the suffix of $\r$ that follows $\g$.
\end{claim}

\begin{claimproof}[Proof of claim]
Let $\s$ be the factor of $\r$ between positions $i$ and $j$.
Since $\T$ is copyless with capacity $c$ and $D_{i,j,z} > \boundnom$, 
there are $N>\boundnomc$ transitions between $i$ and $j$
along which some register in $X_{k,z}$ is productive.
Among these transitions, there are $n>\boundnomcQ$ 
that start with the same source state, say $q$.
Let $i_1 < \dots < i_n$ be the positions where the
latter transitions start.

Next, consider the flows $F_j$ of the updates induced between 
positions $i_j$ and $i_{j+1}$, for all $j=1,\dots,n$.
Recall that flows are naturally equipped with an associative 
product, forming a monoid $M$ of size at most $m^2$.
By the Factorization Forest theorem \cite{Simon90,colcombet07simon,kuf08mfcs}, 
there is a factorization tree for the sequence $F_1\dots F_n$ 
that has height at most $3|M|$ and such that every inner node 
with more than two successors has all children labeled by the 
same idempotent flow. 

Since $n>\boundnomcQ\ge \boundnomcQ[|M|]$, 
there is at least one idempotent flow $F_j$.
This proves that $\r$ contains a loop $\g$ between 
positions $i$ and $j$.
Moreover, there is a register $x$ that is productive along $\g$ 
and that flows into $z$ along the suffix of $\r$ that follows $\g$.
\end{claimproof}

We construct the desired factorization of $\r$ by induction
as follows.
We maintain a position $i$ in $\r$, representing the endpoint
of the processed prefix of $\r$, and a set $Z$ of registers for 
which we still need to find corresponding productive loops. 
The position $i$ is
initialized to $0$, and the set $Z$ to the set of registers
$z$ such that $|\val{\r\pos}(z)| > \bound$.
We then look at the first position $j>i$ such that 
$D_{i,j,z} > \boundnom$, for some $z\in Z$
(the construction terminates as soon as $Z$ becomes empty).
By Claim \ref{claim:bound}, we know that the factor of $\r$
between positions $i$ and $j$ contains a loop $\g$,
and there is a register $x$ that is productive 
along $\g$ and flows into $z$ along the suffix that follows $\g$. 
Moreover, thanks to the above eager strategy, the number of output 
letters that appear inside $g(x)$, where $g$ is the update induced 
by $\g$, is at most $\boundnom$.
We can thus declare $\g$ to be one of the loops 
of our factorization, and accordingly set $i$ to $j$ 
and remove $z$ from $Z$. 
Note that the following invariant is preserved:
for all $z\in Z$, $|\val{\r\pos}(z)| > \bound - D_{0,i,z}$.
Because at each iteration the value of $D_{0,i,z}$ increases
by at most $\boundnom$, and because at most $m$ iterations
are possible, this shows that the construction can carried
over correctly.

\medskip
We are now ready to prove the lemma. For the property
concerning the lengths of the register valuations, 
suppose that $|\val{\rho\pos}(x)| > \alpha = \bound$.
By the previous constructions, there is a loop $\g_i$
with a productive register $x'$ that flows into $x$ 
along the suffix $\r_i ~ \g_{i+1} ~ \dots ~ \g_\ell ~ \r_\ell$.
By Lemma \ref{lem:pumping}, the valuations induced 
at the end of the pumped runs 
\[
  \rho^{(n)} ~=~ 
  \r_0 \, \g_1 \, \r_1 \, \dots \, \pmb{\g_i^n} \, \r_i \, \dots \, \g_\ell \, \r_\ell
\]
map $x$ to arbitrarily long words. 
Moreover, the same can be 
said of the lengths of the valuations of $x$ that are
induced by runs obtained by pumping {\sl simultaneously},
and by the same amount $n$, all loops $\g_1,\dots,\g_\ell$.
This proves that, for every $\beta\ge\alpha$ and for all
but finitely many $\r'\yields \r$,
$|\val{\r'\pos}(x)| > \beta$.

We can use a similar argument to prove the property concerning
the periods. Suppose that $\val{\rho\pos}(x)$ has period 
$p > \alpha = \bound$. In particular, $|\val{\rho\pos}(x)|>\alpha$. 
As before, there is a loop $\g_i$ with a productive register $x'$ 
that flows into $x$ along the suffix 
$\r_i ~ \g_{i+1} ~ \dots ~ \g_\ell ~ \r_\ell$.
Moreover, by the previous constructions we know that the effect 
of the loop $\g_i$ on the final valuation of $x$ is to add at 
most $\boundnom$ letters.
Let us consider runs that are obtained by pumping simultaneously
all loops $\g_1,\dots,\g_\ell$ inside $\rho$:
\[
  \rho^{(n)} ~=~ 
  \r_0 \, \pmb{\g_1^n} \, \r_1 \, \dots \, \pmb{\g_i^n} \, \r_i \, \dots \, \pmb{\g_\ell^n} \, \r_\ell.
\] 
By Lemma \ref{lem:pumping} 
(plus Claim \ref{claim:flows-into}), 
the valuations induced at the end of the pumped runs $\rho^{(n)}$
map $x$ to words of the form
\[
  \val{\hat\rho^{(n)}\pos}(x) ~=~
  u_0 \, \pmb{v_1^{n-1}} \, u_1 
  \, \dots \, 
  u_{t-1} \, \pmb{v_t^{n-1}} \, u_t.
\]
for some $t \le 2m\ell$, where $u_0,\dots,u_t,v_1,\dots,v_t\in\G^*$ depend 
only on $\rho$ and $\bar\g$, $|v_i|\le\alpha$ for all $i\le t$, and
$|v_i|>0$ for some $i\le t$.
In particular, the above words contain arbitrarily long repetitions 
of non-empty words.

Now, let $p_n$ be the period of $\val{\rho^{(n)}\pos}(x)$, for all $n>0$. 
Recall that $p_1 = p > \alpha$. 
We aim at showing that the periods $p_n$ get arbitrarily large.
Suppose, by way of contradiction, that $p_n$ is uniformly bounded 
for all $n>0$. 
Then $p_n$ must be a constant, say $p_n=p'$, for infinitely many $n$. 
We also recall from the previous arguments that 
$\val{\rho^{(n)}\pos}(x)$ has arbitrarily long repetitions
of words of length $r_1=|v_1|$, \dots, $r_k=|v_k|$, with
all $r_j\le\alpha$ and at least one $r_j>0$.
By Fine-Wilf's theorem, 
this
implies that the period of $\val{\rho^{(n)}\pos}(x)$,
for infinitely many $n$, is 
\[
  p'' ~=~ \gcd\{p',r_i\}_{r_i>0} ~<~ \a ~<~ p.
\]
We can transfer this property to the original word $\val{\rho\pos}(x)$,
by observing that $\val{\rho\pos}(x)$ can be obtained from any of the 
previous words $\val{\rho^{(n)}\pos}(x)$ by removing some occurrences 
of factors of lengths $r_1,\dots,r_k$.
As those lengths are multiples of the period $p''$, 
the latter operation does not change the period of the entire word.
Hence, $\val{\rho\pos}(x)$ must also have period $p'' < p$, which is 
however a contradiction.

This proves that $p_n$ gets arbitrarily large for $n>0$.
In particular, for every $\beta\ge\alpha$ and for all but finitely many
runs $\rho' \yields \rho$, the word $\val{\r'\pos}(x)$ has period larger than $\beta$.
\end{proof}

Using the previous lemmas and the fact that the type of quantification 
``for all but finitely many runs'' commutes with conjunctions (e.g.~those
used to enforce properties on each register $x\in X$), we obtain that 
$\alpha$-approximants are tight for sufficiently large $\alpha$:

\begin{restatable}{proposition}{PropTightApproximants}\label{prop:tight-approximants}
Let $\T'$ be the SST admitting $\alpha$-approximants that is 
obtained from $\T$ using Proposition \ref{prop:approximants},
for any $\alpha \ge \bound$, where $Q$ is the set of states of $T$.
The $\alpha$-approximants of $\T'$ are tight.
\end{restatable}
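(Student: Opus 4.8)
The plan is to fix an arbitrary state $(q,\App_X,\App_Y)$ of $\T'$ and an arbitrary $\beta\ge\alpha$, and to exhibit an initial run $\rho$ of $\T'$ ending in that state with $(\closure[\beta]{(\val{\rho\pos})})|_X=\App_X$; the mirror argument for final runs and gap valuations — available because gap updates are again non-erasing and non-permuting, so Lemmas~\ref{lem:pumping} and~\ref{lem:length-and-period} have symmetric counterparts — then yields both lines of Equation~\eqref{eq:tight-approximants}, proving tightness. Note that $(\closure{(\val{\rho\pos})})|_X=\App_X$ holds for \emph{every} initial run reaching the state, by Equation~\eqref{eq:approximants} (Proposition~\ref{prop:approximants}); so the real work is to blow up the lengths and periods of the registers whose approximant is not a singleton, past $\beta$, without changing the state that is reached.

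Concretely, I would take any initial run $\rho$ of $\T'$ reaching $(q,\App_X,\App_Y)$ (which exists, as the SST is trimmed), extend it to a successful run $\rho\,\sigma$, view it as a run of $\T$ through the covering, and inspect $\App_X(x)$ register by register. If $\App_X(x)=\{w\}$ is a singleton then $|w|\le\alpha\le\beta$, so $\closure[\beta]{\{w\}}=\{w\}=\App_X(x)$ with nothing to do. If $\App_X(x)=u^*v$ is periodic then $|\val{\rho\pos}(x)|>\alpha\ge\bound$, so Lemma~\ref{lem:length-and-period} gives, for all but finitely many $\rho'\yields\rho$, that $|\val{\rho'\pos}(x)|>\beta$; since Equation~\eqref{eq:approximants} keeps $\val{\rho'\pos}(x)$ inside $u^*v$, hence of period $|u|\le\beta$, this forces $\closure[\beta]{\{\val{\rho'\pos}(x)\}}=u^*v$. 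If $\App_X(x)=\Gamma^*$ then $\val{\rho\pos}(x)$ has period (and thus length) larger than $\alpha\ge\bound$, and the period part of Lemma~\ref{lem:length-and-period} pushes the period of $\val{\rho'\pos}(x)$ above $\beta$ for all but finitely many $\rho'\yields\rho$, whence $\closure[\beta]{\{\val{\rho'\pos}(x)\}}=\Gamma^*$. Since the quantifier ``for all but finitely many $\rho'\yields\rho$'' commutes with the finite conjunction over $x\in X$, a single pumped run $\rho'$ serves all registers at once.

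The final and, I expect, most delicate step is to check that $\rho'\,\sigma$ (the pumped run with $\sigma$ kept fixed) lifts back, via the covering, to a successful run of $\T'$ whose cut point is again the state $(q,\App_X,\App_Y)$ — not a sibling with a different annotation. The gap component of the cut state is $(\closure{(\val{\pos\sigma})})|_Y$, which is unchanged since $\sigma$ was not pumped, hence still $\App_Y$; the register component is $(\closure{(\val{\rho'\pos})})|_X$, which equals $\App_X$ on the periodic and universal coordinates by the case analysis above, but whose singleton coordinates require care: the loops used for pumping must be chosen — inside the construction underlying Lemma~\ref{lem:length-and-period} — so as not to feed any register with a singleton approximant, so that the state-determined value of such a register is literally preserved. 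This is where copylessness (a register flows into at most one register downstream) and the rigidity of Equation~\eqref{eq:approximants} (pinning a singleton register's value to its state) have to be put together; granting this, $\rho'\,\sigma$ lifts as desired and witnesses tightness at $(q,\App_X,\App_Y)$, and the symmetric construction does the job on the gap side.
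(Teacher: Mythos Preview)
Your overall strategy matches the paper's: reduce to Lemma~\ref{lem:length-and-period}, do a case analysis on the type of the approximant, and commute the quantifier ``for all but finitely many $\rho'\yields\rho$'' with the finite conjunction over registers. The essential divergence is in \emph{what} you set out to prove. You try to show that the pumped run $\rho'$ lands in the \emph{same} $\T'$-state $(q,\App_X,\App_Y)$, so that $\closure[\beta]{(\val{\rho'\pos})}|_X=\App_X$. The paper instead proves the statement $\closure[\beta]{(\val{\rho'\pos})}=\closure{(\val{\rho'\pos})}$ --- equality of the $\beta$- and $\alpha$-approximants of the \emph{pumped} valuation, whatever $\T'$-state $\rho'$ reaches. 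This is literally Equation~\eqref{eq:tight-approximants}, and it removes any need to argue that the annotation component of the state is preserved under pumping.

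Your attempt to force state preservation contains a real gap. In the periodic case you write ``Equation~\eqref{eq:approximants} keeps $\val{\rho'\pos}(x)$ inside $u^*v$'', but that equation only tells you $\val{\rho'\pos}(x)\in\App'_X(x)$ where $\App'_X$ is the register-approximant of whatever state $\rho'$ actually reaches --- and $\App'_X(x)=u^*v$ is exactly what you are trying to establish, so the reasoning is circular. Pumping can genuinely push the period of $\val{\rho'\pos}(x)$ above $\alpha$, in which case $\App'_X(x)$ becomes $\Gamma^*$ rather than $u^*v$. Likewise, your proposed fix for the singleton case --- choose loops that do not feed any singleton-approximant register --- is not delivered by the construction inside Lemma~\ref{lem:length-and-period}: a loop $\gamma$ is selected because \emph{some} productive register flows into a large-valued $z$, but copylessness does not prevent $\gamma$ from having \emph{another} productive register flowing into a short-valued $x$; it only forbids one register from flowing into two targets. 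The paper's formulation sidesteps both difficulties by comparing the $\alpha$- and $\beta$-approximants of $\rho'$ to each other, instead of anchoring them to the original state's annotation $\App_X$.
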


\begin{proof}
As usual, by symmetry we can focus only on register valuations induced by initial runs.
We fix, once and for all, two parameters $\alpha,\beta$, with $\alpha\ge\bound$ and
$\beta\ge\alpha$.
For the sake of readability, we also introduce the shorthands
$\App_\r = \closure{(\val{\r\pos})}$ and $\Appbis_\r = \closure[\beta]{(\val{\r\pos})}$,
for any initial run $\r$ of $\T'$.
Since $\beta\ge\alpha$, we have $\Appbis_\r(x) \subseteq \App_\r(x)$.
We need to prove that there is an initial run $\r'$ of $\T'$ 
such that, for all registers $x$, $\Appbis_{\r'}(x) = \App_{\r'}(x)$.

We will in fact prove a slightly stronger claim, that is:
for all initial runs $\r$ or $\T'$, for all but finitely many
runs $\r'\yields\r$, and for all registers $x$, $\Appbis_{\r'}(x) = \App_{\r'}(x)$.
Towards this we analyse the possible cases when $\Appbis_\r(x)$ could be
strictly contained in $\App_\r(x)$, for any initial run $\r'$.
By definition of $\beta$-approximant, this could only happen when
$\Appbis_\r(x)$ contains only one word,  
or when it is a language of the form $u^* v$. 
In the former case we say for short that $\Appbis_{\r'}(x)$ is a \emph{singleton}; 
in the latter case we say that $\Appbis_{\r'}$ is a \emph{periodic language}.
We then prove that, for every register $x$ and every initial run 
$\r$ of $\T'$:
\begin{itemize}
	\item if $\Appbis_\r(x)$ is a singleton strictly included in $\App_\r(x)$,
	      then, {\sl for all but finitely many} runs $\r' \yields \r$, 
	      $\Appbis_{\r'}(x)$ is not a singleton; 
	\item if $\Appbis_\r(x)$ is a periodic language strictly included in $\App_\r(x)$,
	      then, {\sl for all but finitely many} runs $\r' \yields \r$,
	      $\Appbis_{\r'}(x)$ is not a periodic language (and thus neither a singleton).
\end{itemize}
Note that the quantification ``for all but finitely many'', 
like universal quantification, commutes with the conjunction over the registers $x$.
Therefore, the above two properties, paired with the previous arguments, 
suffice to prove the desired claim.

Now, fix a register $x\in X$ and an initial run 
$\r$ of $\T'$, and suppose that $\Appbis_\r(x)$ is a singleton
or a periodic language strictly contained in $\App_\r(x)$.

If $\Appbis_\r(x)$ is a singleton, say $\Appbis_\r(x)=\{u\}$,
then, since $\closure{u}=\App_\r(x) \supsetneq \{u\}$, we know that $|u| > \alpha$.
Recall that $\T'$ is a covering of $\T$, and in particular that the (initial) runs 
of $\T'$ are bijectively related to the (initial) runs of $\T$.
Let $\tilde\r$ be the initial run of $\T$ that corresponds to $\r$.
By Lemma \ref{lem:length-and-period}, we get that, 
for all but finitely many runs $\tilde\r' \yields \tilde\r$ of $\T$, 
$\val{\tilde\r'\pos}(x)$ has length even larger than $\beta$.
By exploiting again the bijection between runs of $\T$ and runs of $\T'$, we get
that, for all but finitely many runs $\r' \yields \r$ of $\T'$, 
the word $\val{\r'\pos}(x)$ has length larger than $\beta$,
and hence $\Appbis_{\r'}(x)$ cannot be a singleton.

If $\Appbis$ is a periodic language of the form $u^* v$,
then we get $|u| > \alpha$, and hence the period of $\val{\r'\pos}(x)$ 
is larger than $\alpha$. 
Using the correspondence between runs of $\T$ and runs of $\T'$ 
and exploiting Lemma \ref{lem:length-and-period}, exactly as we did before, 
we get that, for all but finitely many runs $\r' \yields \r$ of $\T'$, 
$\val{\r'\pos}(x)$ has period larger than $\beta$, and hence
$\Appbis_{\r'}$ is not a periodic language.
\end{proof}

\section{Equivalence algorithm}

The equivalence algorithm for $k$-valued SST follows a classical approach of
Culik and Karhumäki \cite{ck86} that is based on so-called \emph{test
  sets}. A test set for two SST $\T_1,\T_2$ over input alphabet $\S$
is a set $F \subseteq \S^*$ such that $\T_1,\T_2$ are equivalent if
and only if they are equivalent over $F$. The main contribution of
\cite{ck86} is to show that {\sl finite} test sets exist and be computed
effectively for 
$k$-valued one-way transducers.
The key ingredient of their proof is
to show the existence of a test set that works for \emph{all} transducers with
fixed number of states. An essential observation
is that for $k$-valued one-way, or even two-way, transducers one can assume
that the edge ambiguity is at most $k$. The reason for this is simply that the
output is generated sequentially. For SST the situation 
is far more complex because the output is generated piecewise. 
The purpose of the normalizations performed in
Section~\ref{sec:normalizations} was precisely to 
restore the property of bounded edge ambiguity. 

In a nutshell, the existence of a test set for transducers 
is a consequence of Ehrenfeucht's conjecture, whereas the 
effectiveness is based on the resolution of word equations 
due to Makanin~(see e.g.~the survey~\cite{die02makanin}).

Ehrenfeucht's conjecture was originally stated as a conjecture about
formal languages: for every language $L \subseteq \S^*$, there is
a finite subset $F \subseteq L$ such that for all morphisms
$f,g : \Sigma^* \to \Delta^*$,
$f(w)=g(w)$ for every $w \in L$ if and only if $f(w)=g(w)$ for every $w \in F$. 
Such a set $F$ is called a \emph{test set} for $L$. 

There is an
equivalent formulation of Ehrenfeucht's conjecture in terms of 
a compactness property of word equations~\cite{kar84EF}. 
Let $\S$ and $\Omega$ be two alphabets,
where the elements in $\Omega$ are called unknowns.  
A word equation is a pair $(u,v) \in \Omega^* \times \Omega^*$, 
and a solution is a morphism $\s: \Omega^* \to \Sigma^*$ such that $\s(u)=\s(v)$.  
Ehrenfeucht's conjecture is equivalent to saying that any system of 
equations over a finite set $\Omega$ of unknowns has a finite, 
equivalent subsystem, where equivalence means
that the solution sets are the same. The latter compactness property 
was proved in~\cite{AL85,Guba86} by encoding words by polynomials and
using Hilbert's basis theorem.

\bigskip
In view of Propositions \ref{prop:flow-normalization}, \ref{prop:approximants},
and \ref{prop:tight-approximants}, we can restrict without loss of generality
to SST that are flow-normalized and that admit tight approximants. 
Hereafter, we shall tacitly assume that all transducers are of this form.
Given some integers $k$, $n$, $m$, and $e$, let $\Cc_k(n,m,e)$ 
be the class of $k$-valued SST with at most $n$ states, 
$m$ registers, and edge-ambiguity at most $e$. 
Note that if $\T$ is $k$-valued, then by Theorem~\ref{thm:edge-ambiguity} 
it belongs to $\Cc_k(n,m,e)$, where $n,m$ are the number of states and 
registers of $\T$ and $e=\edgeambiguity$. 
Similarly, by Lemma \ref{lem:shortcut-closure} and 
Theorem \ref{thm:edge-ambiguity-shortcut},
every left quotient $\T_u$ also belongs to $\Cc_k(n,m,e)$.

Now, let us fix $k,n,m,e$ and consider an arbitrary SST $\T$ 
from $\Cc_k(n,m,e)$. 
Following \cite{ck86} we first build  an
abstraction of $\T$ by replacing each maximal factor from $\G^*$
occurring in some update function of $\T$, by a distinct unknown
from $\Omega$. The SST $\D(\T)$ obtained in this way is called a 
\emph{schema}; its outputs are words over $\Omega$. 
Note that the assumption of bounded edge ambiguity is essential 
here to get a uniform bound on the number of unknowns required 
for a schema. Clearly, there are only finitely many schemas of 
SST in $\Cc_k(n,m,e)$. 
We denote by $\phi_{\T}:\Omega \rightharpoonup \G^*$ the partial 
mapping (concretization) that associates with each unknown
the corresponding word from $\G^*$ as specified by the updates 
of $\T$.

We can rephrase the equivalence $\T_1 \equiv \T_2$ of two arbitrary 
 SST from $\Cc_k(n,m,e)$ 
as an infinite ``system'' of word equations
\footnote{Formally, $\Ss$ depends on $n$ and $k$, 
          but for simplicity we leave out the indices.} 
$\Ss= \bigwedge_{u \in\S^*} \bigvee_{\pi} S_{\pi}$ 
over set of unknowns $\Omega \uplus \Omega'$.
The unknowns from $\Omega$ are used for the schema
$\D(\T_1)$, whereas those from $\Omega'$ are used for $\D(\T_2)$;
in particular, $\phi_{\T_1} : \Omega \to \G^*$ and $\phi_{\T_2}: \Omega' \to \G^*$. 
The disjunctions in $\Ss$ are finite, with $\pi$ ranging over the 
possible schemas $\Delta_1,\Delta_2$ (for $\T_1$ and $\T_2$, respectively) 
and the possible partitions of the set of runs of $\Delta_1$ and 
$\Delta_2$ over the input $u$, into at most $k$ groups (one for each possible output). 
Finally, $S_{\pi}$ is a (finite) system of word equations, 
stating the equality of the words from $\Omega^* \cup {\Omega'}^*$ 
that belong to the same group according to $\pi$.

The following lemma was stated in \cite{ck86} for $k$-valued one-way
transducers, but it holds as well for two-way transducers and for SST
(even copyful, with a proper definition for $\Cc_k(n,m,e)$):

\begin{restatable}{lemma}{LemEquations}\label{lem:equations}
Given two SST $\T_1,\T_2$ from $\Cc_k(n,m,e)$, the system 
$\Ss = \bigwedge_{u \in\S^*} \bigvee_{\pi} S_{\pi}$ has
$\phi_{\T_1} \uplus \phi_{\T_2}$ as solution if and only if $\T_1 \equiv \T_2$.
\end{restatable}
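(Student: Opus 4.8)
The plan is to unwind the definitions of the schema abstraction and of the system $\Ss$, and check that the biconditional reduces to a run-by-run comparison of outputs. First I would recall that, for each SST $\T_i\in\Cc_k(n,m,e)$, the schema $\D(\T_i)$ is obtained by replacing every maximal $\Gamma^*$-factor in an update by a fresh unknown, and that $\phi_{\T_i}$ reinstates exactly those factors; hence for any input $u$ and any successful run $\rho$ of $\T_i$ on $u$, the $\Omega$-word (or $\Omega'$-word) produced by the corresponding run of $\D(\T_i)$, once the morphism $\phi_{\T_i}$ is applied, is precisely $\out(\rho)$. Because $\D(\T_i)$ is a covering-style abstraction (same transition structure, unknowns in place of output letters), the successful runs of $\D(\T_i)$ on $u$ are in bijection with those of $\T_i$ on $u$, so the multiset of $\phi_{\T_i}$-images of schema-outputs on $u$ equals the multiset $\{\out(\rho)\mid \rho \text{ successful run of }\T_i\text{ on }u\}$.

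Next I would fix $u\in\Sigma^*$ and analyse the single conjunct $\bigvee_\pi S_\pi$ under the candidate solution $\phi_{\T_1}\uplus\phi_{\T_2}$. The disjunction ranges over choices of schemas $\Delta_1,\Delta_2$ for $\T_1,\T_2$ together with a partition of the runs of $\Delta_1$ and $\Delta_2$ on $u$ into at most $k$ groups; $S_\pi$ asserts that all schema-outputs inside a common group are equal as $(\Omega\uplus\Omega')$-words. The key observation is that $\phi_{\T_1}\uplus\phi_{\T_2}$ satisfies some disjunct $S_\pi$ if and only if one can partition the (bijectively lifted) runs of $\T_1$ and $\T_2$ on $u$ into at most $k$ classes so that all $\T_i$-outputs in a class coincide — equivalently, if and only if the set of $\T_1$-outputs on $u$ equals the set of $\T_2$-outputs on $u$ (using $k$-valuedness on both sides to bound the number of distinct outputs by $k$, so such a $k$-part grouping always exists precisely when the two output sets agree). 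Here I would use that $\D(\T_i)$ and $\phi_{\T_i}$ are set up so that the only schema consistent with $\phi_{\T_i}$ is the canonical one $\D(\T_i)$ itself, which is why ranging over ``possible schemas'' is harmless: the concretization $\phi_{\T_i}$ pins $\pi$ down to the genuine schema.

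Finally I would assemble the pieces: $\T_1\equiv\T_2$ means, by definition of the realized relation, that for every well-formed input $u$ the set of outputs of $\T_1$ on $u$ equals the set of outputs of $\T_2$ on $u$; by the previous paragraph this holds for a given $u$ exactly when $\phi_{\T_1}\uplus\phi_{\T_2}$ satisfies $\bigvee_\pi S_\pi$; taking the conjunction over all $u$ gives that $\T_1\equiv\T_2$ iff $\phi_{\T_1}\uplus\phi_{\T_2}$ is a solution of $\Ss$. One technical point to spell out is that the quantification over inputs in $\Ss$ restricts to well-formed inputs of the form $w\dashv$ (or, more conveniently, one notes that $\T_i$ has no successful run on ill-formed inputs, so the extra conjuncts are vacuously satisfied), matching the definition of equivalence over $\Sigma^*\times\Gamma^*$. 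The main obstacle, such as it is, is purely bookkeeping: making precise the bijection between runs of $\T_i$ and runs of the schema $\D(\T_i)$ and arguing that the disjunction over admissible schemas collapses under a concrete $\phi_{\T_i}$; there is no combinatorial difficulty here, the content having already been absorbed into the bounded-edge-ambiguity normalization that guarantees $\Omega,\Omega'$ are finite.
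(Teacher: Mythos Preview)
The paper does not prove this lemma itself (it cites \cite{ck86}), so there is no in-paper argument to compare with; your overall plan---reduce $\T_1\equiv\T_2$ to an input-by-input comparison of output sets via the schema/concretization bijection---is the intended one, and the direction ``$\T_1\equiv\T_2$ implies $\phi_{\T_1}\uplus\phi_{\T_2}$ solves $\Ss$'' goes through exactly as you say.

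There is, however, a real gap in the converse. You claim that ``$\phi_{\T_1}\uplus\phi_{\T_2}$ satisfies some disjunct $S_\pi$'' is \emph{equivalent} to ``the output sets of $\T_1$ and $\T_2$ on $u$ coincide'', but with your description of $S_\pi$ (a partition of the combined run set into at most $k$ groups, and equality of concrete outputs within each group) this fails. If on some $u$ the transducer $\T_1$ produces only $v_1$ and $\T_2$ only $v_2\neq v_1$, then putting all $\T_1$-runs in one group and all $\T_2$-runs in another gives a partition into $2\le k$ groups in which every group trivially has a single output value, so $S_\pi$ is satisfied although $\T_1\not\equiv\T_2$. What is missing---and what the parenthetical ``one for each possible output'' in the paper is gesturing at---is the requirement that every non-empty group contain at least one run of $\Delta_1$ \emph{and} at least one run of $\Delta_2$ (equivalently, that $\pi$ couples the two run sets). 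Once you make this constraint explicit and use it, your argument is correct: a satisfied $S_\pi$ then forces every $\T_1$-output on $u$ to equal some $\T_2$-output and conversely, and it also handles domain mismatch (if $u\in\dom{\T_1}\setminus\dom{\T_2}$ then no admissible $\pi$ exists). Your side remark that $\phi_{\T_i}$, being partial, ``pins down'' the schema is fine and is exactly why the disjunction over schemata collapses, but it deserves one explicit sentence rather than being left implicit.
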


As shown in \cite{ck86}, the Ehrenfeucht conjecture can be used to show 
that any infinite system $\Ss$ as in Lemma~\ref{lem:equations} 
is equivalent to some \emph{finite} sub-system  
$\Ss_N= \bigwedge_{u \in\S^{\le N}} \bigvee_{\pi} S_{\pi}$. 
This gives:

\begin{restatable}{lemma}{LemTest}\label{lem:test}
Given $n,m,e \in \Nat$, there is $N \in \Nat$ such that
$\S^{\le N}$ is a test set for every pair of SST 
$\T_1,\T_2$ from $\Cc_k(n,m,e)$. 
\end{restatable}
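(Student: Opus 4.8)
The plan is to derive the statement from the reformulation of equivalence as solvability of the disjunctive word-equation system $\Ss$ (Lemma~\ref{lem:equations}) together with the Ehrenfeucht-conjecture-based compactness recalled just above, exactly in the style of \cite{ck86}. The first point to make precise is that, once $n$, $m$, $e$ (and $k$) are fixed, the system $\Ss=\bigwedge_{u\in\S^*}\bigvee_\pi S_\pi$ depends only on these parameters and not on the particular pair $\T_1,\T_2\in\Cc_k(n,m,e)$: there are finitely many schemas of SST in $\Cc_k(n,m,e)$, and the bounded edge ambiguity guarantees a uniform bound on the number of unknowns $\Omega\uplus\Omega'$ occurring in those schemas, so that the disjuncts $S_\pi$ can be written down once and for all; the particular transducers enter only through the candidate concretization $\phi_{\T_1}\uplus\phi_{\T_2}$.

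Next I would invoke the compactness property: since $\Ss$ is an infinite conjunction of finite disjunctions of finite systems of word equations over the fixed finite set of unknowns $\Omega\uplus\Omega'$, it is equivalent --- in the sense of having the same solution set --- to a finite sub-conjunction $\Ss_N=\bigwedge_{u\in\S^{\le N}}\bigvee_\pi S_\pi$ for some $N$, and $N$ may be taken to depend only on $n$, $m$, $e$, $k$ since $\Ss$ does. Finally I would read this back on transducers: by Lemma~\ref{lem:equations}, $\T_1\equiv\T_2$ holds iff $\phi_{\T_1}\uplus\phi_{\T_2}$ solves $\Ss$, hence iff it solves $\Ss_N$; and the same argument underlying Lemma~\ref{lem:equations}, restricted to a single input, shows that $\phi_{\T_1}\uplus\phi_{\T_2}$ satisfies $\bigvee_\pi S_\pi$ for a given $u$ precisely when $\T_1$ and $\T_2$ have the same set of outputs on $u$. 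Consequently $\phi_{\T_1}\uplus\phi_{\T_2}$ solves $\Ss_N$ iff $\T_1$ and $\T_2$ agree on all of $\S^{\le N}$, which is exactly the assertion that $\S^{\le N}$ is a test set for every pair of SST in $\Cc_k(n,m,e)$.

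The main obstacle is the compactness step, and it is inherited wholesale from \cite{ck86}: the subtlety is that $\Ss$ is not a plain system of word equations but carries disjunctions (over the possible schemas and over the ways of grouping the runs on a given input into at most $k$ output classes), so one cannot simply restrict a subsystem and must instead argue at the level of solution sets --- this is the technical core of Culik and Karhumäki's proof, ultimately resting on Ehrenfeucht's conjecture and hence on Hilbert's basis theorem \cite{AL85,Guba86}. By contrast, everything specific to SST --- the piecewise production of the output --- has already been neutralized by the normalizations of Section~\ref{sec:normalizations}, in particular the bounded-edge-ambiguity results (Theorems~\ref{thm:edge-ambiguity} and~\ref{thm:edge-ambiguity-shortcut}), which are precisely what supplies the finite uniform bound on the number of unknowns that the compactness argument needs. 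Note also that Lemma~\ref{lem:test} is a pure existence statement, so Makanin's algorithm plays no role here; it is only needed afterwards, to turn the test set into an actual decision procedure.
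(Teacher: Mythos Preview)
Your proposal is correct and follows exactly the approach of the paper, which simply states that, as in \cite{ck86}, the Ehrenfeucht conjecture yields a finite equivalent sub-system $\Ss_N$ of $\Ss$, and then reads off the test set via Lemma~\ref{lem:equations}. Your write-up is in fact more explicit than the paper's (you spell out why $\Ss$ depends only on $n,m,e,k$, why the disjunctions over schemas and partitions make the compactness step non-trivial and inherited from \cite{ck86}, and why Makanin is irrelevant here), but the substance is identical.
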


Using Theorem~\ref{thm:edge-ambiguity} and Lemma~\ref{lem:test}  we can derive 
immediately the existence of a finite test set for any two $k$-valued SST. 
The last question is how to compute such a test set effectively. For this we
will use the shortcut construction provided in
Section~\ref{sec:shortcut}. 

\begin{restatable}{lemma}{LemEffective}\label{lem:effective}
Assume that the formulas $\Ss_N$ and $\Ss_{N+1}$ are
equivalent, i.e., they have the same solutions. 
Then $\Sigma^{\le N}$ is a test set for any 
 pair of SST from $\Cc_k(n,m,e)$.
\end{restatable}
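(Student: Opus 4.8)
The plan is to reduce the statement to the "ascending chain stabilizes" phenomenon that underlies the Ehrenfeucht-conjecture argument, and to observe that stabilization of the systems $\Ss_N$ is detectable and, once detected, propagates to all larger $N$. Concretely, recall from Lemma~\ref{lem:test} that there exists \emph{some} $N_0$ for which $\Sigma^{\le N_0}$ is a test set for every pair in $\Cc_k(n,m,e)$, equivalently for which $\Ss_{N_0}$ is equivalent to the full (infinite) system $\Ss$. The point of the present lemma is that we do not know $N_0$ in advance, but we can look for the first $N$ at which $\Ss_N$ and $\Ss_{N+1}$ coincide as constraint systems, and argue that this $N$ already works.

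**First I would** make precise the monotone structure: since $\Ss_N$ is the conjunction of the clauses indexed by $u \in \Sigma^{\le N}$ and $\Ss = \bigwedge_{u\in\Sigma^*}\bigvee_\pi S_\pi$, the solution sets form a descending chain $\mathrm{Sol}(\Ss_0) \supseteq \mathrm{Sol}(\Ss_1) \supseteq \cdots \supseteq \mathrm{Sol}(\Ss)$, simply because adding more conjuncts can only shrink the solution set. The hypothesis says $\mathrm{Sol}(\Ss_N) = \mathrm{Sol}(\Ss_{N+1})$. **Then I would** show that this forces $\mathrm{Sol}(\Ss_N) = \mathrm{Sol}(\Ss_{N+k})$ for all $k \ge 0$, hence $\mathrm{Sol}(\Ss_N) = \mathrm{Sol}(\Ss)$. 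The inductive step is: assuming $\mathrm{Sol}(\Ss_N)=\mathrm{Sol}(\Ss_{N+k})$, take any solution $\sigma$ of $\Ss_{N+k}$; it satisfies all clauses for words of length $\le N+k$, and in particular (being in $\mathrm{Sol}(\Ss_N)=\mathrm{Sol}(\Ss_{N+1})$ after restricting) it satisfies the clauses for length-$(N+1)$ words; the crux is that satisfaction of the clause for a word $au$ is governed, via the schema/shortcut construction of Section~\ref{sec:shortcut}, by the clauses of the left quotient SST $\T_u$ — and left quotients stay inside $\Cc_k(n,m,e)$ by Lemma~\ref{lem:shortcut-closure} and Theorem~\ref{thm:edge-ambiguity-shortcut}. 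Thus the clause for a length-$(N+k+1)$ word $au$ of $(\T_1,\T_2)$ is (up to the renaming induced by $\phi_{\T_i}$ vs. $\phi_{(\T_i)_a}$) a clause for a length-$(N+k)$ word of $((\T_1)_a,(\T_2)_a)$, and the stabilization hypothesis applied to that pair (same class, same $N$) yields that it is implied by the length-$\le N+k$ clauses. This gives $\mathrm{Sol}(\Ss_{N+k+1}) \supseteq \mathrm{Sol}(\Ss_{N+k})$, and combined with the trivial reverse inclusion, $\mathrm{Sol}(\Ss_{N+k+1})=\mathrm{Sol}(\Ss_N)$.

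**The hard part will be** making the "the clause for $au$ reduces to a clause for $u$ in the quotient class" step airtight, because the unknowns and schemas of $\T_i$ and $(\T_i)_a$ are formally different: the shortcut construction changes the update functions, so $\phi_{(\T_i)_a}$ relates to $\phi_{\T_i}$ by composing pairs of maximal $\Gamma^*$-factors. I would handle this by working at the level of the class $\Cc_k(n,m,e)$ rather than a fixed pair: the system $\Ss$ (and each $\Ss_N$) is really a uniform object depending only on $(k,n,m,e)$ and the input alphabet, as the footnote in the excerpt hints, so the clause for $au$ over any pair in the class is, after the substitution induced by shortcutting, literally the clause for $u$ over the pair of quotients, which again lies in the class. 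Once one commits to this uniform viewpoint, the induction above goes through verbatim and one concludes that $\Ss_N \equiv \Ss$, i.e., $\Sigma^{\le N}$ is a test set for every pair in $\Cc_k(n,m,e)$, which is exactly the assertion — and, crucially, since membership of $\sigma = \phi_{\T_1}\uplus\phi_{\T_2}$ in $\mathrm{Sol}(\Ss_N)$ is decidable (finite system of word equations, Makanin), this yields a semi-decision procedure that terminates by Lemma~\ref{lem:test}.
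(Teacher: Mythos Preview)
Your approach has the right core idea and, once disentangled, coincides with the paper's. The trouble is that you conflate two different statements. You announce you will prove the strong equality $\mathrm{Sol}(\Ss_N)=\mathrm{Sol}(\Ss)$ for \emph{arbitrary} morphisms $\sigma:(\Omega\uplus\Omega')^*\to\Gamma^*$, but your inductive step silently specializes to $\sigma=\phi_{\T_1}\uplus\phi_{\T_2}$ with $\T_1,\T_2\in\Cc_k(n,m,e)$. That is not cosmetic: the quotient argument genuinely needs the pair to lie in $\Cc_k(n,m,e)$, since Theorem~\ref{thm:edge-ambiguity-shortcut} is what guarantees the quotients can be re-abstracted into the \emph{same} schema over the \emph{same} unknown set. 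For an arbitrary $\sigma$ (yielding, say, a non-$k$-valued instantiation), nothing keeps the edge ambiguity of the quotient bounded, so the ``clause for $au$ is a clause for $u$ in the quotient'' step breaks. Hence the strong solution-set equality is neither established by your argument nor required for the lemma.

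The paper sidesteps this by never leaving the concrete level. Writing $\T_1\equiv_r\T_2$ for equivalence relativized to $\Sigma^{\le r}$, it uses the elementary identity
\[
\T_1\equiv_{r+1}\T_2 \quad\iff\quad \T_1\equiv_0\T_2 \ \text{ and }\ (\T_1)_a\equiv_r(\T_2)_a \ \text{ for every } a\in\Sigma,
\]
together with Theorem~\ref{thm:edge-ambiguity-shortcut} (the quotients stay in $\Cc_k(n,m,e)$), to show by induction on $r\ge N$ that $\T_1\equiv_r\T_2\iff\T_1\equiv_N\T_2$ for \emph{all} pairs in the class simultaneously. The hypothesis $\Ss_N\equiv\Ss_{N+1}$ is invoked exactly once, as the statement that $\equiv_{N+1}$ and $\equiv_N$ coincide on the class. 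Your ``hard part'' paragraph is essentially this argument; if you drop the solution-set framing and phrase the induction directly over pairs in $\Cc_k(n,m,e)$, your proof becomes the paper's.
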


\begin{proof}
Let $\T_1 \equiv_r \T_2$ denote equivalence of $\T_1$ and $\T_2$ 
relativized to $\S^{\le r}$.
The goal is to prove that $\S^{\le N}$ is a test set, 
namely, for all $r>N$ and all $\T_1,\T_2\in\Cc_k(n,m,e)$,
$\T_1 \equiv_r \T_2$ holds if and only if $\T_1 \equiv_N \T_2$. 
Clearly, for any $r \ge 0$, $\T_1 \equiv_{r+1} \T_2$ is 
equivalent to
$\T_{1,a} \equiv_r \T_{2,a}$ for every $a \in \S$, and 
$\T_1 \equiv_0 \T_2$ (the latter being abbreviated as 
($\ast$) below). 
Moreover, by Theorem~\ref{thm:edge-ambiguity-shortcut}, 
we have $\T_{1,a},\T_{2,a}\in\Cc_k(n,m,e)$. This enables
the following proof by induction on $r$:
\[
\begin{array}{lllll}
  \!\!
  \T_1     ~\equiv_{r+1}~ \T_2 & \quad \iff & \quad 
  \T_{1,a} ~\equiv_r~ \T_{2,a} ~~ (\forall a \in\S) ~~\text{ and } (\ast)
                             &             & \quad 
                                             \T_1     ~\equiv_N~
                                             \T_2. \\[1.25ex]
                             &             & \quad 
  \phantom{\T_{1,a}} ~\Updownarrow \text{(ind.~hyp.)} 
                             &             & \quad 
  \phantom{\T_1} ~\Updownarrow \\[1.25ex]
                             &            & \quad 
  \T_{1,a} ~\equiv_N~ \T_{2,a} ~~ (\forall a \in\S) ~~\text{ and } (\ast)
                             & \quad \iff & \quad
  \T_1     ~\equiv_{N+1}~ \T_2
\vspace{-6mm} 
\end{array}
\]
\qedhere
\end{proof}

Using Makanin's algorithm for solving word equations (and
even for deciding the existential theory of word equations, see
e.g.~\cite{die02makanin} for a modern presentation) we
obtain:

\begin{restatable}{proposition}{PropEffectiveTest}\label{prop:effective-test}
Given $n,m,e \in \Nat$, there is $N \in \Nat$ such that
$\S^{\le N}$ is a test set for every pair of  SST 
from $\Cc_k(n,m,e)$, and such an $N$ can be effectively computed.
\end{restatable}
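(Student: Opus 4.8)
The plan is to combine the ``finite test set exists'' result (Lemma~\ref{lem:test}) with an effectivity argument based on Makanin's algorithm, using the shortcut construction to turn the test-set question into a decidable stabilization test. Concretely, Lemma~\ref{lem:test} gives, for fixed $n,m,e$, some $N\in\Nat$ such that $\Sigma^{\le N}$ is a test set for all pairs of SST in $\Cc_k(n,m,e)$; equivalently, the infinite system $\Ss$ is equivalent to the finite subsystem $\Ss_N$. What is missing is a way to compute such an $N$. The idea is to enumerate $N=0,1,2,\dots$ and, for each $N$, decide whether $\Ss_N$ and $\Ss_{N+1}$ are equivalent, i.e.\ have the same solution sets; by Lemma~\ref{lem:effective}, the first $N$ for which this holds is a valid test-set bound.

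The first step is to observe that each $\Ss_N$ is a \emph{finite} disjunction-of-conjunctions of word equations over the finite unknown alphabet $\Omega\uplus\Omega'$: the conjunction ranges over $u\in\Sigma^{\le N}$, and for each such $u$ the disjunction over $\pi$ is finite (the pair of schemas $\Delta_1,\Delta_2$ ranges over a finite set because edge ambiguity is bounded by $e$, and the partitions of the runs of $\Delta_1,\Delta_2$ on input $u$ into at most $k$ groups are finitely many since each schema has boundedly many runs on a fixed input). Hence $\Ss_N$ is a positive Boolean combination of word equations, and its set of solutions is definable in the existential theory of concatenation. The second step is to note that ``$\Ss_N$ and $\Ss_{N+1}$ have the same solution set'' is likewise expressible in that theory: since $\Ss_{N+1}$ logically implies $\Ss_N$ (more conjuncts), equivalence amounts to the sentence $\forall \sigma\,(\sigma\models\Ss_N \rightarrow \sigma\models\Ss_{N+1})$, i.e.\ the unsatisfiability of $\Ss_N\wedge\neg\Ss_{N+1}$ over all substitutions $\sigma:\Omega\uplus\Omega'\to\Gamma^*$; after pushing the negation through the positive Boolean structure this becomes a finite disjunction, over the choices witnessing $\neg\Ss_{N+1}$, of systems of equations and \emph{inequations}, whose satisfiability is decidable by Makanin's algorithm for the existential theory of word equations (with inequalities). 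The third step is then the termination argument: run the semi-decision procedure that for $N=0,1,2,\dots$ tests the equivalence $\Ss_N\equiv\Ss_{N+1}$; Lemma~\ref{lem:test} guarantees that $\Ss$ is equivalent to $\Ss_{N_0}$ for some $N_0$, and for that $N_0$ (and all larger indices) the equivalence $\Ss_{N}\equiv\Ss_{N+1}$ holds, so the search halts. Output the first such $N$; by Lemma~\ref{lem:effective}, $\Sigma^{\le N}$ is a test set for every pair of SST in $\Cc_k(n,m,e)$.

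I expect the main obstacle to be the careful bookkeeping in the second step: one must make sure that ``same solution set'' over \emph{all} substitutions (not just those of the form $\phi_{\T_1}\uplus\phi_{\T_2}$ coming from genuine transducers) is exactly what Lemma~\ref{lem:effective} needs, and that the negation of the finite positive formula $\Ss_{N+1}$ stays within the fragment Makanin's procedure handles --- this is fine because the existential theory of word equations with inequations (equivalently, the $\exists$-theory of the free monoid) is decidable, but one should state this precisely rather than appeal vaguely to ``solving word equations''. A secondary point to get right is that the bound on the number of unknowns in the schemas is uniform over $\Cc_k(n,m,e)$ (it depends only on $n,m,e$ and the maximal update length, which is itself bounded once edge ambiguity is bounded), so that finitely many schemas, hence a well-defined system $\Ss$, is obtained; this was already noted in the text but is what makes each $\Ss_N$ a bona fide finite object. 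Modulo these points, the proof is the search procedure above together with its termination via Lemma~\ref{lem:test} and its correctness via Lemma~\ref{lem:effective}.
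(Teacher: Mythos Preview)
Your proposal is correct and follows exactly the paper's approach: existence of some $N$ via Lemma~\ref{lem:test}, a decidable stabilization check $\Ss_N\equiv\Ss_{N+1}$ via Makanin's algorithm for the existential theory of word equations (with inequations, as you rightly note), and correctness of the first such $N$ via Lemma~\ref{lem:effective}. One small slip in your secondary remark: the maximal update length (capacity) is \emph{not} bounded in $\Cc_k(n,m,e)$; what bounds $|\Omega|$ uniformly is that each $f(x)$ has at most $m$ register occurrences, hence at most $m{+}1$ maximal $\Gamma^*$-factors, and there are at most $n^2|\Sigma|e$ transitions --- so the bound on unknowns is independent of capacity.
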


\begin{proof}
  By Lemma~\ref{lem:test} we know that $N$ exists, and Makanin's
  algorithm allows to determine whether $\Ss_N,\Ss_{N+1}$ are
  equivalent, so to determine $N$ by Lemma~\ref{lem:effective}.
\end{proof}

We finally obtain the main result:

\begin{restate}{ThmMain}
\end{restate}

Of course, Theorem~\ref{thm:main} does not come with any 
complexity upper bound, mainly because of the Ehrenfeucht conjecture. 
The only known lower bound is $\PSPACE$-hardness, which holds even
for single-valued SST over unary output alphabets, and follows from a
simple reduction from universality of NFA.

Quite surprisingly, the exact complexity of equivalence is 
not known even for {\sl deterministic} SST, where the problem
is known to be between \NLOGSPACE{} and \PSPACE{}~\cite{AC11popl}. 
We also recall that equivalence of deterministic SST with 
{\sl unary output} can be checked in \PTIME{} using invariants~\cite{AlurDDRY13}.
Finally, we recall that the currently best upper bound for solving word equations is 
\PSPACE~\cite{plandowski04} (with even linear space requirement, 
as shown in~\cite{jez17icalp}).

\section{Conclusions}

Our paper answers to a question left open in~\cite{ad11}, showing
that the equivalence problem for finite-valued SST is decidable. We
followed a proof for one-way transducers due to Culik and
Karhumäki~\cite{ck86}, that is based on the Ehrenfeucht conjecture.
The main contribution of the paper is to provide the technical
development that allows to follow the proof scheme of~\cite{ck86}. We
believe that this development will also allow to obtain stronger
results. We conjecture that finite-valued SST can be
effectively decomposed into finite unions of unambiguous SST. This
would entail that in the finite-valued setting, 
two-way transducers and SST have the same expressive power, as is it
the case for single-valued transducers. If this holds with
elementary complexity, then the equivalence of single-valued SST (or
two-way transducers) could also be solved with elementary
complexity. We believe that the complexity is indeed elementary, and
leave this for future work.

\bibliography{bib,manual}

\end{document}